 \newcommand{\setdef}[2]{\{#1
	\; | \; #2\}}
\newcommand\oprocendsymbol{\hbox{$\triangle$}}
\newcommand\oprocend{\relax\ifmmode\else\unskip\hfill\fi\oprocendsymbol}
\DeclareSymbolFont{bbold}{U}{bbold}{m}{n}
\DeclareSymbolFontAlphabet{\mathbbold}{bbold}
\newcommand{\vect}[1]{\mathbbold{#1}}
\newcommand{\real}{\mathbb{R}}
\newcommand{\change}[1]{{#1}}
\newcommand{\seminorm}[1]{{\left\vert\kern-0.25ex\left\vert\kern-0.25ex\left\vert #1
		\right\vert\kern-0.25ex\right\vert\kern-0.25ex\right\vert}}
\newcommand{\semimeasure}[1]{\mu_{\seminorm{\cdot}}\kern-0.5ex\left(#1\right)}
\newcommand{\suchthat}{\;\ifnum\currentgrouptype=16 \middle\fi|\;}
\newcommand{\scirc}{\raise1pt\hbox{$\,\scriptstyle\circ\,$}}
\newcommand{\OF}{\mathsf{F}}
\newcommand{\OG}{\mathsf{G}}
\newcommand{\ON}{\mathsf{N}}
\newcommand{\R}{\mathbb{R}}
\newcommand{\calT}{\mathcal{T}}
\newcommand{\calU}{\mathcal{U}}
\newcommand{\calV}{\mathcal{V}}
\newcommand{\calX}{\mathcal{X}}
\newcommand{\sfG}{\mathsf{G}}
\newcommand{\bfd}{\mathbf{d}}
\newcommand{\ul}[1]{\underline{#1}}
\newcommand{\ulw}{\ul{w}}
\newcommand{\ulM}{\ul{M}}
\newcommand{\ol}[1]{\overline{#1}}
\newcommand{\olw}{\ol{w}}
\newcommand{\olM}{\ol{M}}
\definecolor{tab:blue}{HTML}{1f77b4}
\definecolor{tab:orange}{HTML}{ff7f0e}
\definecolor{tab:green}{HTML}{2ca02c}
\definecolor{tab:red}{HTML}{d62728}
\definecolor{tab:purple}{HTML}{9467bd}
\definecolor{tab:brown}{HTML}{8c564b}
\definecolor{tab:pink}{HTML}{e377c2}
\definecolor{tab:gray}{HTML}{7f7f7f}
\definecolor{tab:olive}{HTML}{bcbd22}
\definecolor{tab:cyan}{HTML}{17becf}
\definecolor{salmon}{HTML}{fa8072}
\newtheorem{theorem}{Theorem}
\newtheorem{proposition}{Proposition}
\newtheorem{corollary}{Corollary}
\newtheorem*{lemma*}{Lemma}
\theoremstyle{definition}
	\newtheorem{definition}{Definition}
	\newtheorem{remark}{Remark}
	\newtheorem{example}{Example}
	\newtheorem*{example*}{Example}
\newtheorem*{tap}{Target-Avoid Problem}
\def\BibTeX{{\rm B\kern-.05em{\sc i\kern-.025em b}\kern-.08em
    T\kern-.1667em\lower.7ex\hbox{E}\kern-.125emX}}
\begin{document}
\title{Efficient Interaction-Aware Interval Analysis of Neural Network Feedback Loops}
\author{Saber Jafarpour$^*$, \IEEEmembership{Member, IEEE}, Akash Harapanahalli$^*$,  \IEEEmembership{Graduate Student Member, IEEE}, and Samuel Coogan, \IEEEmembership{Senior Member, IEEE}
\thanks{$^*$ These authors contributed equally.}
\thanks{This work was supported in part by the National Science Foundation under grants 1749357 and 2219755 and the Air Force Office of Scientific Research under Grant FA9550-23-1-0303.}
\thanks{Saber Jafarpour is with Department of Electrical, Computer, and Energy Engineering, University of Colorado Boulder, Boulder, Colorado, USA (e-mail: saber.jafarpour@colorado.edu\})}
\thanks{Akash Harapanahalli and Samuel Coogan are with School of Electrical and Computer Engineering, Georgia Institute of Technology, Atlanta, GA, USA (e-mail: \{aharapan, ~sam.coogan\}@gatech.edu\})}}

\maketitle

\begin{abstract}
In this paper, we propose a computationally efficient framework for interval reachability of systems with neural network controllers. 
Our approach leverages inclusion functions for the open-loop system and the neural network controller to embed the closed-loop system into a larger-dimensional embedding system, where a single trajectory over-approximates the original system's behavior under uncertainty.
We propose two methods for constructing closed-loop embedding systems, which account for the interactions between the system and the controller in different ways. 
The interconnection-based approach considers the worst-case evolution of each coordinate separately by substituting the neural network inclusion function into the open-loop inclusion function. 
The interaction-based approach uses novel Jacobian-based inclusion functions to capture the first-order interactions between the open-loop system and the controller by leveraging state-of-the-art neural network verifiers.
Finally, we implement our approach in a Python framework called \texttt{ReachMM} to demonstrate its efficiency and scalability on benchmarks and examples ranging to $200$ state dimensions.

\end{abstract}

\begin{IEEEkeywords}
   Reachability analysis, Inclusion functions, Neural networks, Interconnected systems. 
    \end{IEEEkeywords}

\maketitle

\section{Introduction}

\paragraph*{Problem Description and Motivations}

Recent advances in machine learning are bringing learning algorithms into the heart of safety-critical control systems, such as autonomous vehicles, manufacturing sectors, and robotics. 
For control systems, these learning algorithms often act in the closed-loop setting, as direct feedback controllers~\cite{TZ-GK-SL-PA:16} or motion planners~\cite{AHQ-AS-MJB-MCY:19}.
Learning-based closed-loop controllers can improve the performance of systems while reducing their computational burden for online implementations as compared with more traditional optimization-based approaches. 
However, despite their impressive performance, learning models are known to be overly sensitive to input disturbances~\cite{CZ-WZ-IS-JB-DE-IG-RF:13}: a small input perturbation can lead to comparatively large changes in their output. This issue amplifies in feedback settings, as disturbances can accumulate in the closed-loop. 
As a result, ensuring reliability of learning algorithms is an essential challenge in their integration into safety-critical systems.

Typical learning architectures involve high-dimensional nonlinear
function approximators, such as neural networks, necessitating special
tools for their input-output analysis.
The machine learning and control community have made significant
progress in analyzing the safety of neural networks in isolation,
including efficient and sound input-output bounds, worst-case
adversarial guarantees, and sampling-based stochastic
guarantees~(cf.~\cite{CL-TA-CL-CS-CB-MJK:21}).
However, most of these existing frameworks for standalone neural
networks do not address the unique challenges for closed-loop
analysis---namely information propagation, non-stationarity of the
bounds, and complex interactions between the system and the
controller.
In these cases, it is essential to understand and capture the nature
of the interactions between the system and the neural network.
Recently, several frameworks have emerged for safety verification of
learning algorithms in the closed-loop.
These frameworks usually incorporate neural network verification
algorithms into the closed-loop safety analysis by studying their
interactions with the open-loop system.
However, most of these existing frameworks are either limited to
a specific class of systems or learning algorithms, or they impose
significant computational burdens, rendering them unsuitable for
runtime safety verification.

\paragraph*{Literature Review}

Safety verification of nonlinear dynamical systems without learning-enabled systems is a classical problem and is typically solved using reachability analysis tools, although significant challenges remains even in this setting. Reachability of nonlinear systems has been studied using optimization-based methods such as the Hamilton-Jacobi approach~\cite{SB-MC-SH-CJT:17} and the level set approach~\cite{IM-CJT:00}. Several computationally tractable approaches including the ellipsoidal method~\cite{ABK-PV:00} and the zonotope method~\cite{AG:05} have been developed for reachability analysis of dynamical systems. We refer to~\cite{XC-SS:22} for a recent review of state-of-the-art reachability analysis techniques. Interval analysis is a classical framework for computing interval functional bounds~\cite{LJ-MK-OD-EW:01} which has been successfully used for reachability analysis of dynamical systems~\cite{MA-OS-MB:07,JKS-PIB:13,KS-JKS:17}. A closely-related body of literature is the mixed monotone theory, which extends the classical monotone system theory by separating the cooperative and competitive effect of the dynamics~\cite{GAE-HLS-EDS:06,HLS:08}. Recently, mixed monotone theory has been used as an efficient framework for reachability analysis of dynamical systems~\cite{SC-MA:15b,SC:20,MA-MD-SC:21,MK-SZY:23}.   

Rigorous verification of standalone neural networks has been studied using abstract interpretation techniques such as Reluplex~\cite{GK-CB-DLD-KJ-MJK:17} and Neurify~\cite{SW-KP-JW-JY-SJ:18}, using interval bound propagation methods~\cite{MM-TG-MV:18,SG-etal:19}, and convex-relaxation approaches such as LipSDP~\cite{MF-MM-GJP:22} and  CROWN~\cite{HZ-etal:18} and its variants~\cite{SW-HZ-KX-XL-SJ-CJH-ZJK:21}.
The simplest method for studying reachability of neural network controlled systems is via an \emph{input-output approach}:
using existing techniques for estimating the output of the neural network, the input-output approach performs reachability analysis of the closed-loop system by substituting the full output range of the neural network as the input set of the open-loop system. Examples of this approach include NNV~\cite{HDT-etal:20}, and simulation-guided interval analysis~\cite{WX-HDT-XY-TTJ:21}. 
This approach is generally computationally efficient and applicable to general forms of neural networks and control systems, but it can lead to overly-conservative estimates of reachable sets~\cite[Section 2.1]{SD-XC-SS:19}. The reason for this over-conservatism is that this approach essentially
treats neural networks controllers as adversaries and, thus, cannot capture the beneficial interactions between the controller and the system. An alternative framework is the \emph{functional approach}, which is based on composing function approximations of the system and the neural network controller.
Examples of this approach for linear systems include using linear programming in ReachLP~\cite{ME-GH-CS-JPH:21}, using semi-definite programming in Reach-SDP~\cite{HH-MF-MM-GJP:20}, and using mixed integer programming in~\cite{CS-AM-AI-MJK:22}.
For nonlinear systems, the functional approach has been used in ReachNN~\cite{CH-JF-WL-XC-QZ:19}, Sherlock~\cite{SD-XC-SS:19}, Verisig 2.0~\cite{RI-TC-JW-RA-GP-IL:21}, POLAR~\cite{CH-JF-XC-WL-QZ:22} and JuliaReach~\cite{SB-etal:19,CS-MF-SG:22}. Functional methods are able to capture beneficial interactions between the neural network controller and the system, but they are often computationally burdensome and generally do not scale well to large-scale systems.

\paragraph*{Contributions}

   In this paper, we introduce a general and computationally efficient framework for studying reachability of continuous-time closed-loop systems with nonlinear dynamics and neural networks controllers. First, we review the notion of inclusion function from classical interval analysis to over-approximate the input-output behavior of functions. 
       Our first minor result states that if an inclusion function is chosen to be minimal, then it provides tighter over-approximations of the output behavior than a Lipschitz bound of the function. 
       For a given function, we study different approaches for constructing inclusion functions. In particular, we propose a novel Jacobian-based cornered inclusion function, which is specially amenable to analysis of closed-loop systems. 
       We further show that this class of inclusion functions plays a critical role in integrating the existing off-the-shelf neural network verification algorithms into our framework. 

       Second, using the notion of inclusion functions, we develop a general approach for interval analysis of continuous-time dynamical systems.
       The key idea is to use the inclusion functions to embed the original dynamical system into a higher dimensional embedding system.
       Then trajectories of the embedding system can be used to provide hyper-rectangular over-approximation for reachable sets of the original system.
       Our approach unifies and extends several existing approaches for interval analysis of dynamical systems. Notably, our novel proof technique, based on the classical monotone comparison Lemma, allows for accuracy comparison between different inclusion functions.

       As the culminating contribution of this paper, we develop a computationally efficient framework for reachability analysis of nonlinear systems with neural network controllers. 
       The key in our framework is a careful combination of the inclusion function for the open-loop system and the inclusion function for the neural network controller to obtain an inclusion function for the closed-loop system.   
    We propose two methods for combining the open-loop inclusion function and the neural network inclusion function. 
    The second method, called the interaction-based approach, constructs a closed-loop embedding system using a Jacobian-based inclusion function for the open-loop system and local affine bounds for the neural network.  %
    Compared to the interconnection-based approach, the interaction-based method completely captures the first order interactions between the system and the neural network. %

    Finally, we implement our approach in a Python toolbox called \texttt{ReachMM} and perform several numerical experiments to show the efficiency and accuracy of our framework. For several linear and nonlinear benchmarks, we show that our method beats the state-of-the-art methods. %
    Moreover, we study scalability of our approach and compare it with the state-of-the-art methods on a platoon of vehicles with up to 200 state variables. Compared to our conference paper~\cite{SJ-AH-SC:23}, this paper provides a more general framework for interval reachability using inclusion functions, introduces the interaction-based approach to capture the first order interactions between the system and the neural network controller, and includes a suite of numerical experiments. 
    We would like to highlight that, in this paper, we focus on continuous-time dynamical systems. However, with minor modifications, our framework can be used to analyze reachability of discrete-time systems. 

\section{Mathematical Preliminary and Notations}

For every two vectors $v,w\in \real^n$ and every
subset $S\subseteq \{1,\ldots,n\}$, we define the vector $v_{[S:w]}\in \real^n$ by $ \left(v_{[S:w]}\right)_j = \begin{cases}
    v_j & j\not\in S\\
    w_j & j\in S.
    \end{cases}$.
    Given a matrix
$B \in \mathbb{R}^{n\times m}$, we denote the non-negative part of $B$
by $[B]^+ = \max(B, 0)$ and the nonpositive part of $B$ by
$[B]^- = \min(B, 0)$. Given a square matrix $A\in \real^{n\times n}$ the Metzler and non-Metzler part of $A$
are denoted by $[A]^{\mathrm{M}}$ and
$[A]^{\mathrm{nM}}$, respectively, where $
  ([A]^{\mathrm{M}})_{ij} =\begin{cases}
    A_{ij} & A_{ij} \geq 0\; \mbox{or} \; i =  j\\
    0 & \mbox{otherwise,}
  \end{cases}$ and $[A]^{\mathrm{nM}}= A-[A]^{\mathrm{M}}$. 
Given a map $g:\real^n\to \real^m$, the $\ell_{\infty}$-norm Lipschitz bound of $g$ is the smallest $L\in \real_{\ge 0}$ such that 
\begin{align*}
    \|g(x)-g(y)\|_{\infty}\le L\|x-y\|_{\infty},\qquad \mbox{ for all }x,y\in \real^n.
\end{align*}
We denote the $\ell_{\infty}$-norm Lipschitz constant of $g$ by $\mathrm{Lip}_{\infty}(g)$. For a differentiable map $g:\real^n\to \real^m$, we denote its Jacobian at point $x\in \real^n$ by $Dg(x)$. Consider the dynamical system
\begin{align}  
    \dot{x}&=f(x,w), &&\mbox{for all } t\in \real_{\ge 0},\label{eq:dynamics-c}
\end{align}
where $x\in \real^n$ is the state of the system and $w\in \real^q$ is the disturbance. Given a piecewise continuous disturbance $t\mapsto w(t)$ and an initial condition $x_0\in \real^n$, the trajectory of~\eqref{eq:dynamics-c} starting from $x_0$ is denoted by $\phi_f(t,x_0,w(t))$. Let $\mathcal{X}\subseteq \real^n$ and $\mathcal{W}\subseteq \real^q$, then the reachable set of~\eqref{eq:dynamics-c} starting from the initial set $\mathcal{X}$ and with the disturbance set $\mathcal{W}$ is defined by:  
\begin{equation}
    \mathcal{R}_{f}(t,\mathcal{X},\mathcal{W}) = \left\{
    \begin{aligned}
    \phi&_{f}(t, t_0, x_0, w),\,\forall x_0\in\mathcal{X},\\& w:\real\to\mathcal{W} \text{ piecewise cont.}
    \end{aligned}
    \right\}
\end{equation} 
The set $\mathcal{X}$ is \emph{robustly forward invariant} if  $\mathcal{R}_{f}(t,\mathcal{X},\mathcal{W})\subseteq \mathcal{X}$ for all $t\geq 0$. The system~\eqref{eq:dynamics-c} is continuous-time monotone on $\mathcal{X}\subseteq \real^n$ if, for every $i\in\{1,\ldots,n\}$, every $x\le y\in \mathcal{X}$ with $x_i=y_i$, and every $u\le v$, we have
    $f_i(x,u) \le f_i(y,v).$ 
If a dynamical system~\eqref{eq:dynamics-c} is monotone on $\mathcal{X}$ and $\mathcal{X}$ is robustly forward invariant, then its trajectories preserve the standard partial order by time, i.e., for every two trajectories $t\mapsto x(t)$ and $t\mapsto y(t)$ of the systems~\eqref{eq:dynamics-c}, if $x(0)\le y(0)$, then $x(t)\le y(t)$ for all $t\in \real_{\ge 0}$~\cite{HLS:95}. 
The \emph{southeast} partial order $\le_{\mathrm{SE}}$ on $\real^{2n}$ is defined by $\left[\begin{smallmatrix}
    x\\
    \widehat{x}
    \end{smallmatrix}\right]\le_{\mathrm{SE}} \left[\begin{smallmatrix}
    y\\
    \widehat{y}
    \end{smallmatrix}\right]$ if and only if $x\le y$ and $\widehat{y}\le \widehat{x}$. %
We also define $\mathcal{T}^{2n}_{\ge 0}=\setdef{\left[\begin{smallmatrix}
    x\\
    \widehat{x}
    \end{smallmatrix}\right]\in \real^{2n}}{x\le \widehat{x}}$ and $\mathcal{T}^{2n}_{\le 0}=\setdef{\left[\begin{smallmatrix}
    x\\
    \widehat{x}
    \end{smallmatrix}\right]\in \real^{2n}}{x\ge \widehat{x}}$ and $\mathcal{T}^{2n}=\mathcal{T}^{2n}_{\ge 0}\bigcup \mathcal{T}^{2n}_{\le 0}$.
\section{Problem Statement}
We consider a system described by:
\begin{align} 
    \dot{x} &= f(x,u,w),&& t\in \real_{\ge 0},\label{eq:plant-c}
\end{align}
where $x\in \real^n$ is the state of the system, $u\in \real^p$ is the control input, and $w\in \real^q$ is the disturbance.  We assume that $f:\real^n\times \real^p\times \real^q\to \real^n$ is a parameterized vector field and the state feedback is parameterized by a $k$-layer
feed-forward neural network controller $N:\real^{n}\to \real^p$ defined by:
  \begin{align}\label{eq:NN}
  \xi^{(0)}&= x\nonumber\\
   \xi^{(i)} &= \phi^{(i-1)}(W^{(i-1)} \xi^{(i-1)}+ b^{(i-1)}), \quad i\in \{1,\ldots,k\}\nonumber\\
       u &= W^{(k)}\xi^{(k)}(x) +b^{(k)} := N(x),
        \end{align}
 where $n_i$ is the number of neurons in the $i$th layer, $W^{(i-1)}\in \real^{n_i\times n_{i-1}}$ is the weight matrix of the $i$th layer, $b^{(i-1)}\in \real^{n_i}$ is the bias vector of the $i$th layer, $\xi^{(i)}(y)\in \real^{n_i}$ is the $i$th layer hidden variable, and
$\phi^{(i-1)}:\real^{n_{i}}\to \real^{n_i}$ is the $i$th layer diagonal activation function satisfying $0\le \frac{\phi_j^{(i-1)}(x)-\phi_j^{(i-1)}(y)}{x-y}\le 1$
for every $j\in \{1,\ldots,n_i\}$. One can show that a large class of activation functions including ReLU, leaky ReLU, sigmoid, and tanh satisfies this condition (after a possible re-scaling of their co-domain). With this neural network feedback controller, the closed-loop system is given by:
 \begin{align}
     \dot{x} &= f(x(t),N(x(t)),w) := f^{\mathrm{c}}(x(t),w).\label{eq:closedloop-c}
 \end{align}
We assume that $\mathcal{X}_0\subseteq \real^n$ is the initial set of states for the closed-loop system~\eqref{eq:closedloop-c}. This set can represent the uncertainties in the starting state of the system or localization errors in estimating the current state of the system. Moreover, we assume that the disturbance $w$ belongs to the set $\mathcal{W}\subseteq \real^n$ representing the model uncertainty or exogenous disturbances on the system. In this paper, we focus on the target-avoid problem as defined below:

\begin{tap}
Given a target set $\mathcal{G}\subset \real^n$, and an unsafe set $\mathcal{S}_{\mathrm{unsafe}}\subset\real^n$ and a time interval $[0,T]$, check if the closed-loop system~\eqref{eq:closedloop-c} reaches the target $\mathcal{G}$ at time $T$ while avoiding the unsafe region $\mathcal{S}_{\mathrm{unsafe}}$. Mathematically, 
\begin{enumerate}
    \item $\mathcal{R}_{f^{\mathrm{c}}}(T,\mathcal{X}_0,\mathcal{W})\subseteq \mathcal{G}$, 
    \item $\mathcal{R}_{f^{\mathrm{c}}}(t,\mathcal{X}_0,\mathcal{W})\bigcap \mathcal{S}_{\mathrm{unsafe}}=\emptyset$, for all $t\in [0,T]$. 
\end{enumerate}
\end{tap}
The target-avoid problem is a classical and prevalent notion of safety in many engineering applications, especially those involving safety-critical systems~\cite{XC-SS:22}. Moreover, diverse objectives including multiagent coordination, complex planning specified with formal languages and logics, and classical control-theoretic criteria such as robust invariance and stability can be achieved by concatenating and combining different instantiations of the target-avoid problem. 

In general, computing the exact reachable sets of the closed-loop system~\eqref{eq:closedloop-c} %
is not computationally tractable. Our approach to solve the target-avoid safety problem is based on constructing a computationally efficient over-approximation $\overline{\mathcal{R}}_{f^c}(t,\mathcal{X}_0,\mathcal{W})$ of the reachable set of the closed-loop system. Then, reaching the target is guaranteed by $\overline{\mathcal{R}}_{f^c}(T,\mathcal{X}_0,\mathcal{W}) \subseteq \mathcal{G}$ and  avoiding the unsafe region is guaranteed when $\overline{\mathcal{R}}_{f^c}(t,\mathcal{X}_0,\mathcal{W})\bigcap \mathcal{S}_{\mathrm{unsafe}}=\emptyset$, for every $t\in [0,T]$.

 \section{Interval Analysis and Inclusion Functions}\label{sec:IA}

In this section, we develop a theoretical framework for interval reachability analysis of arbitrary mappings. The key element in our framework is the notion of inclusion function from interval arithmetic~\cite{LJ-MK-OD-EW:01,REM-RBK-MJC:09}. 

\subsection{Inclusion Functions}

Given a nonlinear input-output map $g:\real^n\to \real^m$ and an interval input, the inclusion function of $g$ provides interval bounds on the output of $g$.

\begin{definition}[Inclusion function~\cite{LJ-MK-OD-EW:01}]\label{def:embedding}
Given a map $g:\real^n\to \real^m$, the function $\OG = \left[\begin{smallmatrix}\underline{\OG}\\ \overline{\OG}\end{smallmatrix}\right]:\mathcal{T}_{\ge 0}^{2n}\to \mathcal{T}_{\ge 0}^{2m}$ is
\begin{enumerate}
    \item an \emph{inclusion function} for $g$ if, for every $x\le \widehat{x}$ and every  $z\in [{x},\widehat{x}]$, we have
\begin{align}
\label{eq:incdef}
     \underline{\OG}({x},\widehat{x})\le g(z)\le 
         \overline{\OG}({x},\widehat{x}).
\end{align}
\item a \emph{$[y,\widehat{y}]$-localized inclusion function} if \eqref{eq:incdef} holds for every $ [{x},\widehat{x}]\subseteq [y,\widehat{y}]$ and every  $z\in [{x},\widehat{x}]$.
\end{enumerate}
Moreover, an inclusion function $\OG$ for $g$ is 
\begin{enumerate}\setcounter{enumi}{2}
    \item \emph{monotone} if, for every $\left[\begin{smallmatrix}x\\ \widehat{x}\end{smallmatrix}\right] \le_{\mathrm{SE}} \left[\begin{smallmatrix}y\\ \widehat{y}\end{smallmatrix}\right]$, we have $\OG({x},\widehat{x})\le_{\mathrm{SE}}\OG({y},\widehat{y})$;
    \smallskip
    \item \emph{thin} if, for every $x\in \real^n$, $\underline{\OG}(x,x)=\overline{\OG}(x,x)=g(x)$;  
    \smallskip \item\label{p3:minimal}
    \emph{minimal} if, for every ${x}\le \widehat{x}$, the interval $[\underline{\OG}({x},\widehat{x}),\overline{\OG}({x},\widehat{x})]$ is the smallest interval containing $g([{x},\widehat{x}])$. 
\end{enumerate}
\end{definition}
Given a nonlinear map $g:\real^n\to\real^m$, one can show that its minimal inclusion function $\OG^{\min} = \left[\begin{smallmatrix}\underline{\OG}^{\min}\\ \overline{\OG}^{\min}\end{smallmatrix}\right]$ is given by:
    \begin{align}\label{eq:tight}
    \underline{\OG}^{\min}_i({x},\widehat{x})= \inf_{z\in [{x},\widehat{x}]} g_i(z),\quad
    \overline{\OG}^{\min}_i({x},\widehat{x})=\sup_{z\in [{x},\widehat{x}]} g_i(z),
\end{align}  
for every $i\in \{1,\ldots,n\}$. The next theorem shows that the minimal  inclusion function provides sharper over-approximations compared to Lipschitz bounds. 

\begin{theorem}[Minimal inclusion functions]\label{thm:lip}
Consider the function $g:\real^n\to \real^m$ and the mapping $\OG^{\min} = \left[\begin{smallmatrix}\underline{\OG}^{\min}\\ \overline{\OG}^{\min}\end{smallmatrix}\right]$ defined by~\eqref{eq:tight}. Then, for every ${x}\le \widehat{x}$, we have
    \begin{align*}
        \|\underline{\OG}^{\min}({x},\widehat{x})- \overline{\OG}^{\min}({x},\widehat{x})\|_{\infty} \le \mathrm{Lip}_{\infty}(g) \|{x}-\widehat{x}\|_{\infty}.
    \end{align*}
\end{theorem}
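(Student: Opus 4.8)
The plan is to reduce the $\ell_\infty$ claim to a coordinate-wise estimate and then exploit the elementary fact that any two points of the box $[x,\widehat{x}]$ differ by at most $\|x-\widehat{x}\|_\infty$ in every coordinate. First I would dispose of the trivial case: if $\mathrm{Lip}_\infty(g)=\infty$ the inequality is vacuous, so assume $\mathrm{Lip}_\infty(g)<\infty$. Then $g$ is continuous and bounded on the compact box $[x,\widehat{x}]$, so the infimum and supremum in~\eqref{eq:tight} are finite; if one prefers not to invoke compactness to attain them, one can instead pick, for each $\epsilon>0$, points achieving the sup and inf up to $\epsilon$ and let $\epsilon\to 0$ at the end. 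Fix $i\in\{1,\dots,m\}$ and let $z_*,z^*\in[x,\widehat{x}]$ satisfy $g_i(z_*)=\underline{\OG}^{\min}_i(x,\widehat{x})$ and $g_i(z^*)=\overline{\OG}^{\min}_i(x,\widehat{x})$.

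The central computation is
\[
\overline{\OG}^{\min}_i(x,\widehat{x}) - \underline{\OG}^{\min}_i(x,\widehat{x}) = g_i(z^*) - g_i(z_*) \le \|g(z^*)-g(z_*)\|_\infty \le \mathrm{Lip}_\infty(g)\,\|z^*-z_*\|_\infty .
\]
Since $z_*,z^*\in[x,\widehat{x}]$, for every coordinate $j$ both $z^*_j$ and $z_{*,j}$ lie in $[x_j,\widehat{x}_j]$, hence $|z^*_j-z_{*,j}|\le \widehat{x}_j-x_j\le \|x-\widehat{x}\|_\infty$; maximizing over $j$ gives $\|z^*-z_*\|_\infty\le\|x-\widehat{x}\|_\infty$. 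Combining, $\overline{\OG}^{\min}_i(x,\widehat{x}) - \underline{\OG}^{\min}_i(x,\widehat{x}) \le \mathrm{Lip}_\infty(g)\,\|x-\widehat{x}\|_\infty$ for every $i$.

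Finally, because $\underline{\OG}^{\min}_i(x,\widehat{x})\le\overline{\OG}^{\min}_i(x,\widehat{x})$ for all $i$ (an infimum never exceeds the corresponding supremum), we have $\|\underline{\OG}^{\min}(x,\widehat{x})-\overline{\OG}^{\min}(x,\widehat{x})\|_\infty = \max_i\big(\overline{\OG}^{\min}_i(x,\widehat{x})-\underline{\OG}^{\min}_i(x,\widehat{x})\big)$, and the coordinate-wise bound just established yields the theorem. I do not anticipate a genuine obstacle here: the only points needing a little care are the (optional) appeal to compactness for attaining the extrema --- which the $\epsilon$-argument circumvents --- and the observation that $\|z^*-z_*\|_\infty\le\|x-\widehat{x}\|_\infty$ for points sharing a common interval box, which is precisely where the structure of interval over-approximations enters the estimate.
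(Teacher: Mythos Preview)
Your proof is correct and follows essentially the same argument as the paper's: pick extremizers of $g_i$ in the box, bound the coordinate gap by $\|g(z^*)-g(z_*)\|_\infty$, apply the Lipschitz constant, and use that any two points of $[x,\widehat{x}]$ are within $\|x-\widehat{x}\|_\infty$ of each other. The only cosmetic differences are that you bound each coordinate and then maximize, whereas the paper first selects the maximizing index, and you add the (harmless) remarks about the $\mathrm{Lip}_\infty(g)=\infty$ case and the optional $\epsilon$-argument in lieu of compactness.
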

\begin{proof} 
Let $i\in \{1,\ldots,k\}$ be such that $\|\overline{\OG}^{\min}(x,\widehat{x})-\underline{\OG}^{\min}(x,\widehat{x})\|_{\infty} = \left|\overline{\OG}^{\min}_i(x,\widehat{x})-\underline{\OG}^{\min}_i(x,\widehat{x})\right|$. 
Note that since $g$ is continuous and the box $[x,\widehat{x}]$ is compact, there exist $\eta^*,\xi^*\in [x,\widehat{x}]$ such that 
\begin{align*}
    \max_{y\in [x,\widehat{x}]}g_i(y) =g_i(\eta^*),\qquad \min_{y\in [x,\widehat{x}]}g_i(y)=g_i(\xi^*).
\end{align*}
This implies that $\|\overline{\OG}^{\min}(x,\widehat{x})-\underline{\OG}^{\min}(x,\widehat{x})\|_{\infty}= |g_i(\eta^*)-g_i(\xi^*)| \le  \|g(\xi^*)-g(\eta^*)\|_{\infty} \le 
   \mathrm{Lip}_{\infty}(g)\|\xi^*-\eta^*\|_{\infty} \le \mathrm{Lip}_{\infty}(g) \|x-\widehat{x}\|_{\infty}$. 
\end{proof}

\subsection{Designing Inclusion Functions}\label{subsec:compute}

\begin{figure}
    \centering
    \includegraphics[width=\columnwidth]{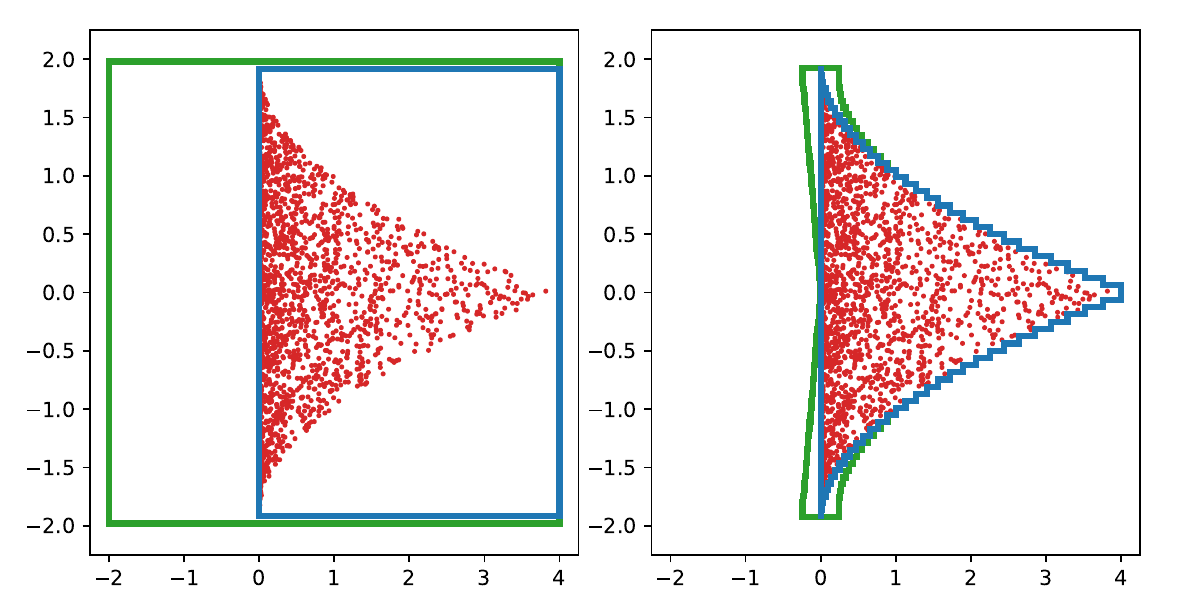}
    \caption{\textbf{Left:} \texttt{npinterval} is used to generate interval approximations for a function $g(x_1,x_2) = [(x_1 + x_2)^2, 4\sin((x_1 - x_2)/4)]^{\top} $ using two different natural inclusion functions. {\color{tab:blue} Blue:} using the inclusion functions for elementary functions $x\mapsto x^2$ and $x\mapsto \sin(x)$ in~\cite[Appendix A, Table 1]{AH-SJ-SC:23b}, 
    {\color{tab:green} Green:} rewriting $g(x_1,x_2) = [x_2^2 + 2x_1x_2 + x_2^2, 4\sin(x_1/4)\cos(x_2/4) - 4\cos(x_1/4)\sin(x_2/4)]^{\top}$ and obtaining a different natural inclusion function based on composition of elementary inclusion functions in~\cite[Appendix A, Table 1]{AH-SJ-SC:23b}. The approximations are generated using the initial set $[-1,1]\times[-1,1]$, and 2000 uniformly sampled ouptuts are shown in {\color{tab:red} red}. \textbf{Right:} The same function is analyzed, with the same two natural inclusion functions, but the initial set is partitioned into 1024 uniform sections, and the union of the interval approximations are shown.}
    \label{fig:interval_example1}
\end{figure}

Given a function $g$, finding a suitable inclusion function for $g$ is a central problem in interval analysis.  We review three approaches for constructing inclusion functions. 

    \paragraph*{Natural inclusion function}
    As we showed in the previous section, the minimal inclusion function can be computed using the optimization problem~\eqref{eq:tight}. However, for general functions, solving this optimization problem is not tractable. One feasible approach is to find the minimal inclusion function for a handful of elementary functions for which the optimization problem~\eqref{eq:tight} is solvable. Then, for an arbitrary function, a natural inclusion function can be obtained by expressing it as a finite composition of operators and elementary functions~\cite[Theorem 2]{LJ-MK-OD-EW:01}~\cite[Theorem 2]{AH-SJ-SC:23b}. Several software packages exist for automatically computing natural inclusions functions. In~\cite{AH-SJ-SC:23b}, we introduce the software package \texttt{npinterval}\footnote{The code for \texttt{npinterval} is available at \url{https://github.com/gtfactslab/npinterval}} that implements intervals as a native data type within the Python toolbox \texttt{numpy}~\cite{numpy}, yielding flexibility, efficiency through vectorization across arrays, and canonical constructions of natural inclusion functions. \change{An example of an inclusion function computed using \texttt{npinterval}~\cite{AH-SJ-SC:23b} is illustrated in Figure~\ref{fig:interval_example1}.}

    \paragraph*{Jacobian-based inclusion functions} Given a continuously differentiable function $g:\real^n\to \real^m$, one can use upper and lower bounds on the Jacobian of $g$ to construct inclusion functions for $g$. These inclusion functions are often derived from the Taylor expansion of the function $g$ around a certain {point}. Most commonly, this point is taken as the midpoint of an interval, leading to, in particular, the centered inclusion function~\cite[\S 2.4.3]{LJ-MK-OD-EW:01} and the mixed-centered inclusion function~\cite[\S 2.4.4]{LJ-MK-OD-EW:01}. In the next theorem, we develop a Jacobian-based inclusion function obtained by linearization around corners of an interval. As demonstrated in the sequel, such cornered inclusion functions turn out to be particularly amenable to analysis of closed-loop control systems. 

    \begin{proposition}[Jacobian-based cornered inclusion function]\label{thm:jacobian-basedIF}
    Given a continuously differentiable function $g:\real^n\to \real^m$ such that $Dg(z)\in [\underline{J}_{[x,\widehat{x}]},\overline{J}_{[x,\widehat{x}]}]$, for every $z\in [x,\widehat{x}]$. The function $\OG^{\mathrm{jac}}:\mathcal{T}^{2n}_{\ge 0} \to \mathcal{T}^{2m}_{\ge 0}$ defined by
    \begin{align*}
        \OG^{\mathrm{jac}}(x,\widehat{x}) = \begin{bmatrix}
           -[\underline{J}_{[x,\widehat{x}]}]^{-} & [\underline{J}_{[x,\widehat{x}]}]^- \\
           -[\overline{J}_{[x,\widehat{x}]}]^+ & [\overline{J}_{[x,\widehat{x}]}]^+
        \end{bmatrix}\begin{bmatrix}
            x\\ \widehat{x}
        \end{bmatrix} + \begin{bmatrix}
            g(x)\\
            g(x)
        \end{bmatrix}
    \end{align*}
is a inclusion function for $g$.
    \end{proposition}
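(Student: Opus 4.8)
The plan is to unpack the block‐matrix definition and reduce the inclusion property to a componentwise estimate obtained from the integral form of the mean value theorem. First I would record the explicit formulas read off from the matrix product:
\begin{align*}
\underline{\OG}^{\mathrm{jac}}(x,\widehat{x}) &= g(x) + [\underline{J}_{[x,\widehat{x}]}]^{-}(\widehat{x}-x), &
\overline{\OG}^{\mathrm{jac}}(x,\widehat{x}) &= g(x) + [\overline{J}_{[x,\widehat{x}]}]^{+}(\widehat{x}-x).
\end{align*}
Since $[\underline{J}_{[x,\widehat{x}]}]^{-}\le 0\le [\overline{J}_{[x,\widehat{x}]}]^{+}$ entrywise and $\widehat{x}-x\ge 0$, each term of $[\underline{J}_{[x,\widehat{x}]}]^{-}(\widehat{x}-x)$ is nonpositive and each term of $[\overline{J}_{[x,\widehat{x}]}]^{+}(\widehat{x}-x)$ is nonnegative, so $\underline{\OG}^{\mathrm{jac}}(x,\widehat x)\le \overline{\OG}^{\mathrm{jac}}(x,\widehat x)$; hence $\OG^{\mathrm{jac}}$ indeed maps $\mathcal T^{2n}_{\ge 0}$ into $\mathcal T^{2m}_{\ge 0}$ (this also follows a posteriori by setting $z=x$ in the inclusion bound below).

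Next, fix $x\le\widehat{x}$ and $z\in[x,\widehat{x}]$. Because the box $[x,\widehat{x}]$ is convex, the segment $s\mapsto x+s(z-x)$ stays in $[x,\widehat{x}]$ for $s\in[0,1]$, so the fundamental theorem of calculus applied to $s\mapsto g(x+s(z-x))$ gives
\begin{align*}
g(z)-g(x) = \Big(\int_0^1 Dg\big(x+s(z-x)\big)\,ds\Big)(z-x) =: M\,(z-x),
\end{align*}
where, by the hypothesis $Dg(\cdot)\in[\underline{J}_{[x,\widehat{x}]},\overline{J}_{[x,\widehat{x}]}]$ on $[x,\widehat{x}]$ and the fact that integrating preserves entrywise bounds, we have $\underline{J}_{[x,\widehat{x}]}\le M\le \overline{J}_{[x,\widehat{x}]}$ entrywise.

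The crux is then a short sign‑bookkeeping step on $(M(z-x))_i=\sum_j M_{ij}(z-x)_j$, using $0\le (z-x)_j\le(\widehat{x}-x)_j$. For the upper bound: $M_{ij}(z-x)_j\le (\overline{J}_{[x,\widehat{x}]})_{ij}(z-x)_j$; if $(\overline{J}_{[x,\widehat{x}]})_{ij}\ge 0$ this is $\le (\overline{J}_{[x,\widehat{x}]})_{ij}(\widehat{x}-x)_j = [\overline{J}_{[x,\widehat{x}]}]^{+}_{ij}(\widehat{x}-x)_j$, while if $(\overline{J}_{[x,\widehat{x}]})_{ij}<0$ it is $\le 0 = [\overline{J}_{[x,\widehat{x}]}]^{+}_{ij}(\widehat{x}-x)_j$; summing over $j$ yields $g(z)_i\le g(x)_i + ([\overline{J}_{[x,\widehat{x}]}]^{+}(\widehat{x}-x))_i = \overline{\OG}^{\mathrm{jac}}_i(x,\widehat{x})$. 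Symmetrically, $M_{ij}(z-x)_j\ge (\underline{J}_{[x,\widehat{x}]})_{ij}(z-x)_j\ge [\underline{J}_{[x,\widehat{x}]}]^{-}_{ij}(\widehat{x}-x)_j$ in both sign cases, giving $g(z)_i\ge \underline{\OG}^{\mathrm{jac}}_i(x,\widehat{x})$. This establishes~\eqref{eq:incdef} and hence the claim.

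The only thing requiring care is keeping the inequality directions straight through the sign split: one must see that replacing $(\overline{J}_{[x,\widehat{x}]})_{ij}$ by its positive part and $(z-x)_j$ by $(\widehat{x}-x)_j$ can only increase the $j$‑th summand, and dually that replacing $(\underline{J}_{[x,\widehat{x}]})_{ij}$ by its negative part can only decrease it. There is no analytic subtlety — the FTC step uses only the assumed $C^1$ regularity of $g$ and convexity of the box — so I expect no genuine obstacle beyond this bookkeeping.
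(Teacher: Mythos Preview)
Your proof is correct and follows essentially the same approach as the paper: both use the integral mean value identity $g(z)-g(x)=\big(\int_0^1 Dg(x+s(z-x))\,ds\big)(z-x)$, bound the integrated Jacobian by $[\underline{J}_{[x,\widehat{x}]},\overline{J}_{[x,\widehat{x}]}]$, and then exploit $0\le z-x\le \widehat{x}-x$ together with the positive/negative part split to obtain the two inequalities. The paper writes the sign bookkeeping in matrix form (via $\overline{J}=[\overline{J}]^{+}+[\overline{J}]^{-}$) rather than componentwise, but this is purely cosmetic; your additional check that $\OG^{\mathrm{jac}}$ maps into $\mathcal{T}^{2m}_{\ge 0}$ is a nice touch the paper leaves implicit.
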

    \begin{proof}
        Since $g$ is continuously differentiable, for every $z\in [x,\widehat{x}]$, we have  
            $g(z) = g(x) + \int_{0}^{1} Dg(tz + (1-t)x)(z-x) dt.$ 
        Thus, for every $x\in [x,\widehat{x}]$,
        \begin{align*}
            g(z) &  \le g(x)+ \overline{J}_{[x,\widehat{x}]}(z-x)\\ &= g(x)+[\overline{J}_{[x,\widehat{x}]}]^{+}(z-x) + [\overline{J}_{[x,\widehat{x}]}]^{-}(z-x)\\ & \le g(x)+[\overline{J}_{[x,\widehat{x}]}]^{+}(\widehat{x}-x) = \overline{\OG}^{\mathrm{jac}}(x,\widehat{x})
        \end{align*}
        where the first inequality holds because $z-x\ge \vect{0}_n$, the second equality holds by $\overline{J}_{[x,\widehat{x}]} = [\overline{J}_{[x,\widehat{x}]}]^{+}+ [\overline{J}_{[x,\widehat{x}]}]^{-}$, and the last inequality holds because $z-x\le \widehat{x}-x$ and $z-x\ge \vect{0}_n$. Similarly, one can show that, for every $z\in [x,\widehat{x}]$, 
            $g(z) \ge g(x) + [\underline{J}_{[x,\widehat{x}]}]^-(\widehat{x}-x) = \underline{\OG}^{\mathrm{jac}}(x,\widehat{x}), 
        $ %
        completing the proof.    
    \end{proof}

\begin{remark}(The role of corner points)
   The proof of Proposition~\ref{thm:jacobian-basedIF} is based on Taylor expansion of $g$ around the corner point $x$. The proposition extends straightforwardly to the Taylor expansion of 
$g$ around other corner points, i.e., around any point $y$ such that $y_i\in \{x_i,\widehat{x}_i\}$, resulting in different inclusion functions.  
   Both the Jacobian-based cornered inclusion function (Proposition~\ref{thm:jacobian-basedIF}) and the Jacobian-based centered inclusion function~\cite[\S 2.4]{LJ-MK-OD-EW:01} are obtained using the Taylor expansion of the original function. In general, they %
    are not comparable (cf. Example~\ref{exm:corner} and Table~\ref{tab:1}). However, for functions that are monotone in their entries, the Jacobian-based cornered inclusion function will lead to the minimal inclusion function while the Jacobian-based centered inclusion function is often non-minimal. %
    Figure~\ref{fig:compareinclusionfunctions} provides a pictorial comparison between the minimal, the centered, and the cornered inclusion functions.

\begin{figure}
    \centering
    \includegraphics[width=\columnwidth]{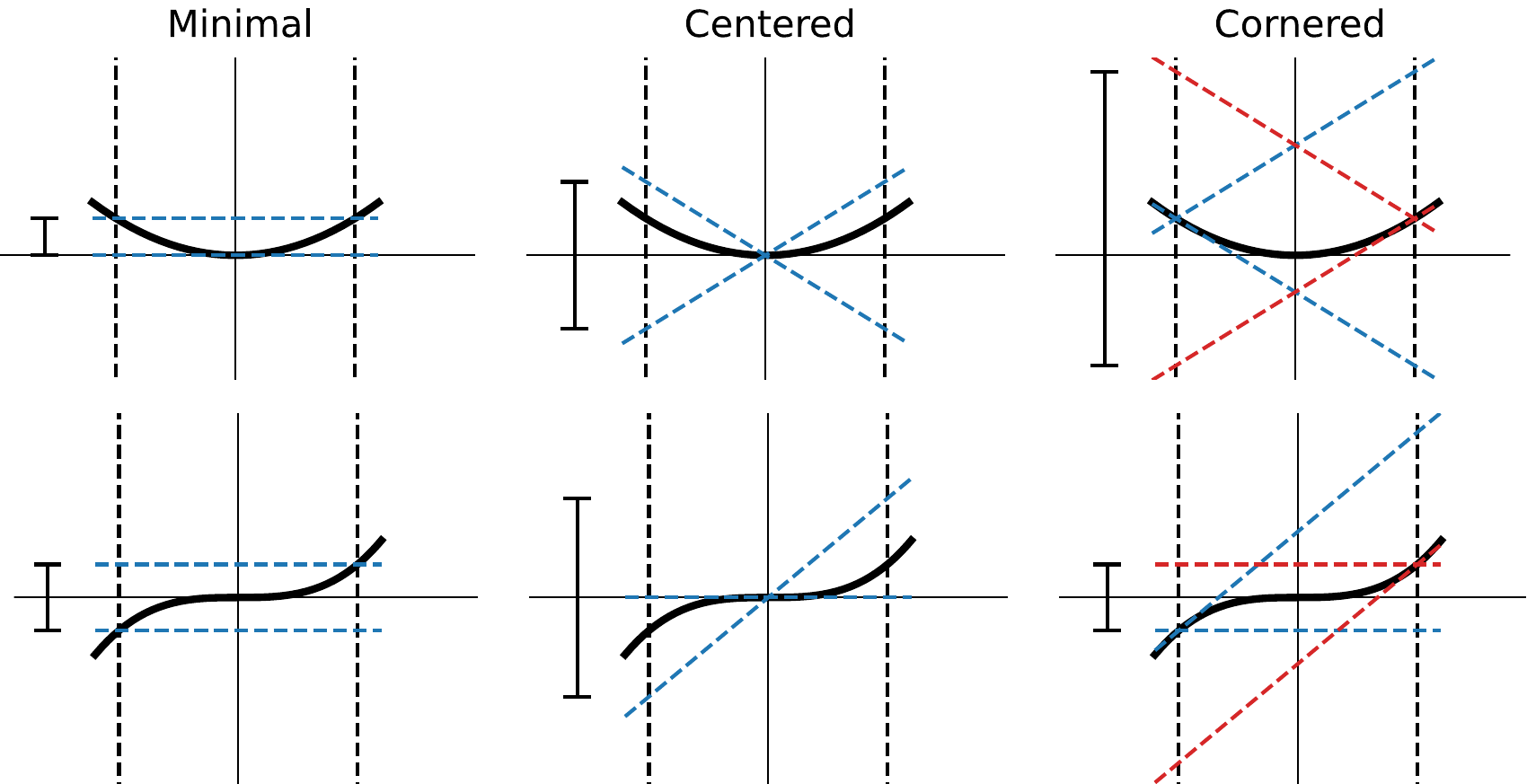}
    \vspace{-0.5cm}
    \caption{Pictorial comparison between the minimal (equation~\eqref{eq:tight}), the Jacobian-based centered (\cite[\S 2.4.3]{LJ-MK-OD-EW:01}), and the Jacobian-based cornered (Proposition~\ref{thm:jacobian-basedIF}) inclusion functions for $f(x)=x^2$ and $f(x)=x^3$ on the interval $[-1,1]$. Note that, for the monotone function $f(x)=x^3$, the intersection of the two Jacobian-based cornered inclusion functions (shown by {\color{tab:blue}blue} and {\color{tab:red}red} dashed lines) leads to the minimal inclusion function.}
    \label{fig:compareinclusionfunctions}
\end{figure}

\end{remark}

The accuracy of the Jacobian-based cornered inclusion function can be improved at a cost of a slightly more complicated formulation using mixed differentiation in the same way that the Jacobian-based centered inclusion function is improved with the mixed-centered inclusion function~\cite[\S 2.4.4]{LJ-MK-OD-EW:01}. The idea is to use a step-by-step differentiation of the function with respect to each variable. 

\begin{proposition}[Mixed Jacobian-based cornered inclusion function]\label{thm:mixed}

Given a continuously differentiable function $g:\real^n\to \real^m$ such that $Dg(z)\in [\underline{J}_{[x,\widehat{x}]},\overline{J}_{[x,\widehat{x}]}]$, for every $z\in [x,\widehat{x}]$. The function $\OG^{\mathrm{jac-m}}:\mathcal{T}^{2n}_{\ge 0} \to \mathcal{T}^{2m}_{\ge 0}$ defined by
    \begin{align*}
        \OG^{\mathrm{jac-m}}(x,\widehat{x}) = \begin{bmatrix}
           -[\ulM]^{-} & [\ulM]^- \\
           -[\olM]^+ & [\olM]^+
        \end{bmatrix}\begin{bmatrix}
            x\\ \widehat{x}
        \end{bmatrix} + \begin{bmatrix}
            g(x)\\
            g(x)
        \end{bmatrix}
    \end{align*}
is an inclusion function for $g$, where the $i$th columns of $\ulM$ and $\olM$ are defined by
\begin{align*}
    \ulM_i = (\underline{J}_{[x,\widehat{x}_{[R_i:x]}]})_i,\qquad
    \olM_i = 
    (\overline{J}_{[x,\widehat{x}_{[R_i:x]}]})_i, 
\end{align*}
with $R_i=\{i+1,\ldots,n\}$, for every $i\in \{1,\ldots,n\}$. Moreover, for every $x\le z\le \widehat{x}$, we have
\begin{align*}
    \underline{\OG}^{\mathrm{jac}}(x,\widehat{x})\le \underline{\OG}^{\mathrm{jac-m}}(x,\widehat{x}),\qquad \overline{\OG}^{\mathrm{jac-m}}(x,\widehat{x})\le \overline{\OG}^{\mathrm{jac}}(x,\widehat{x}).
\end{align*}
\end{proposition}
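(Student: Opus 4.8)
The plan is to prove the inclusion property by a coordinate‑by‑coordinate telescoping of the increment $g(z)-g(x)$, mirroring the passage from the centered to the mixed‑centered inclusion function in \cite[\S 2.4.4]{LJ-MK-OD-EW:01}, and then to read off the comparison with $\OG^{\mathrm{jac}}$ from two facts: (i) the Jacobian enclosure over a sub‑box is nested inside the enclosure over a super‑box, and (ii) the entrywise maps $A\mapsto[A]^+$ and $A\mapsto[A]^-$ are monotone nondecreasing.

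For the inclusion property I would fix $x\le z\le\widehat{x}$, set $y^{(0)}=x$ and $y^{(i)}=x_{[\{1,\dots,i\}:z]}$ for $i\in\{1,\dots,n\}$, so that $y^{(n)}=z$ and consecutive points $y^{(i-1)},y^{(i)}$ differ only in coordinate $i$ (where $y^{(i-1)}_i=x_i$ and $y^{(i)}_i=z_i$). Telescoping gives $g(z)-g(x)=\sum_{i=1}^n\big(g(y^{(i)})-g(y^{(i-1)})\big)$, and applying the fundamental theorem of calculus to $t\mapsto g\big(y^{(i-1)}+t(y^{(i)}-y^{(i-1)})\big)$ yields $g(y^{(i)})-g(y^{(i-1)})=c_i\,(z_i-x_i)$, where $c_i=\int_0^1\big(Dg(y^{(i-1)}+t(y^{(i)}-y^{(i-1)}))\big)_i\,dt$ is an average of $i$th Jacobian columns. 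The crux is that this whole segment lies in the box $[x,\widehat{x}_{[R_i:x]}]$: along it the coordinates $j\le i$ take values in $[x_j,\widehat{x}_j]$ while the coordinates $j>i$ are pinned at $x_j$; hence $\ulM_i\le c_i\le\olM_i$. Since $z_i-x_i\ge 0$, bounding $c_i\le\olM_i=[\olM_i]^++[\olM_i]^-$ and using $0\le z_i-x_i\le\widehat{x}_i-x_i$ gives $c_i(z_i-x_i)\le[\olM_i]^+(\widehat{x}_i-x_i)$; summing over $i$ yields $g(z)\le g(x)+[\olM]^+(\widehat{x}-x)=\overline{\OG}^{\mathrm{jac-m}}(x,\widehat{x})$. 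Symmetrically, bounding $c_i\ge\ulM_i=[\ulM_i]^++[\ulM_i]^-$ gives $c_i(z_i-x_i)\ge[\ulM_i]^-(\widehat{x}_i-x_i)$, hence $g(z)\ge g(x)+[\ulM]^-(\widehat{x}-x)=\underline{\OG}^{\mathrm{jac-m}}(x,\widehat{x})$, so $\OG^{\mathrm{jac-m}}$ is an inclusion function for $g$.

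For the comparison (which in fact holds for every $x\le\widehat{x}$), rewrite the envelopes as $\overline{\OG}^{\mathrm{jac}}(x,\widehat{x})=g(x)+[\overline{J}_{[x,\widehat{x}]}]^+(\widehat{x}-x)$, $\overline{\OG}^{\mathrm{jac-m}}(x,\widehat{x})=g(x)+[\olM]^+(\widehat{x}-x)$, and likewise $\underline{\OG}^{\mathrm{jac}}(x,\widehat{x})=g(x)+[\underline{J}_{[x,\widehat{x}]}]^-(\widehat{x}-x)$, $\underline{\OG}^{\mathrm{jac-m}}(x,\widehat{x})=g(x)+[\ulM]^-(\widehat{x}-x)$. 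Since $\widehat{x}-x\ge\vect{0}_n$, it suffices to compare the matrices columnwise. Because $[x,\widehat{x}_{[R_i:x]}]\subseteq[x,\widehat{x}]$, the Jacobian enclosure over the smaller box is nested inside the one over the larger box, so $(\underline{J}_{[x,\widehat{x}]})_i\le\ulM_i$ and $\olM_i\le(\overline{J}_{[x,\widehat{x}]})_i$; applying the nondecreasing entrywise maps $[\cdot]^+$ and $[\cdot]^-$ gives $[\olM_i]^+\le[(\overline{J}_{[x,\widehat{x}]})_i]^+$ and $[(\underline{J}_{[x,\widehat{x}]})_i]^-\le[\ulM_i]^-$. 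Multiplying columnwise by $\widehat{x}_i-x_i\ge 0$ and summing over $i$ then yields $\overline{\OG}^{\mathrm{jac-m}}(x,\widehat{x})\le\overline{\OG}^{\mathrm{jac}}(x,\widehat{x})$ and $\underline{\OG}^{\mathrm{jac}}(x,\widehat{x})\le\underline{\OG}^{\mathrm{jac-m}}(x,\widehat{x})$.

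I expect the main obstacle to be bookkeeping rather than analysis: each telescoping step is essentially the single‑step estimate already used in the proof of Proposition~\ref{thm:jacobian-basedIF}, so the real work is getting the index sets $R_i=\{i+1,\dots,n\}$ right so that the $i$th segment genuinely stays inside $[x,\widehat{x}_{[R_i:x]}]$, and, in the comparison step, making explicit the (mild) hypothesis that the Jacobian bounds $[\underline{J}_B,\overline{J}_B]$ are monotone under box inclusion — which holds, e.g., whenever they are produced by a monotone inclusion function of $Dg$, such as a natural inclusion function. I would state that monotonicity assumption precisely where it is invoked.
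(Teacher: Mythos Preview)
Your proposal is correct and follows essentially the same approach as the paper: a coordinate-by-coordinate telescoping of $g(z)-g(x)$ for the inclusion property, and the columnwise nesting $\underline{J}_{[x,\widehat{x}]}\le\ulM$, $\olM\le\overline{J}_{[x,\widehat{x}]}$ followed by monotonicity of $[\cdot]^{\pm}$ for the comparison. Your explicit identification of the implicit monotonicity assumption on the Jacobian enclosures under box inclusion is a useful clarification that the paper's proof glosses over.
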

\begin{proof}  
Since $g$ is continuously differentiable, for every $z\in [x,\widehat{x}]$, we have 
            $g(z_{[R_1:x]}) = g(x) + \int_{0}^{1} D_{x_1}g(tz_{[R_1:x]} + (1-t)x)(z_1-x_1) dt$.  
        Thus, for every $x\in [x,\widehat{x}]$, 
            $g(z_{[R_1:x]})   \le g(x)+ \overline{J}_{[x,\widehat{x}_{[R_1:x]}]}(z_1-x_1) \le g(x)+[\overline{J}_{[x,\widehat{x}_{[R_1:x]}]}]^{+}(\widehat{x}_1-x_1). $ 
        where the first inequality holds because $z_1-x_1\ge \vect{0}_n$ and the second inequality holds because $z_1-x_1\le \widehat{x}_1-x_1$ and $z_1-x_1\ge \vect{0}_n$. \change{By successively applying the above procedure on $g(z_{[R_1:x]})$, one can show that $g(z)\leq \overline{\OG}^{\mathrm{jac-m}}(x,\widehat{x})$. Similarly, one can show that $g(z)\geq \underline{\OG}^{\mathrm{jac-m}}(x,\widehat{x})$ and thus $\OG^{\mathrm{jac-m}}$ is an inclusion function for $g$.} Finally, one can show that $ \underline{J}_{[x,\widehat{x}]}\le \underline{M}$ and $\overline{M}\le \overline{J}_{[x,\widehat{x}]}$. This implies that $-[\underline{M}]^{-}x + [\underline{M}]^{-}\widehat{x} \le -[\underline{J}_{[x,\widehat{x}}]^{-} x + [\overline{J}_{[x,\widehat{x}}]^{-}\widehat{x}$. As a result $\underline{\OG}^{\mathrm{jac}}(x,\widehat{x}) \le \underline{\OG}^{\mathrm{jac-m}}(x,\widehat{x})$. Similarly, one can show that $\overline{\OG}^{\mathrm{jac-m}}(x,\widehat{x}) \le \overline{\OG}^{\mathrm{jac}}(x,\widehat{x})$.
\end{proof}

\begin{example}[Cornered inclusion functions]\label{exm:corner}
Consider the function $f:\real^2\to\real^2$ defined by $f(x_1,x_2)=\begin{bmatrix}
(x_1+x_2)^2\\ x_1+x_2+2x_1x_2
    \end{bmatrix}$. We compute the natural inclusion function, the (mixed) Jacobian-based  cornered inclusion function, and the (mixed) Jacobian-based centered inclusion function for $f$ on the interval $[-0.1,0.1]\times [-0.1,0.1]$. The results are shown in Table~\eqref{tab:1}, for runtimes averaged over $10\,000$ runs. We observe that the output interval estimated by the mixed Jacobian-based centered (cornered) is tighter than the interval obtained from the Jacobian-based centered (cornered) inclusion function. This observation is consistent with the comparison between mixed and non-mixed Jacobian-based cornered inclusion functions in Proposition~\ref{thm:mixed}. However, there is generally no ranking between other methods, as each can perform better in different scenarios.

    \begin{table}[h!]
    \centering
    \begin{tabular}{||c | c | c||}
        \hline
        Method & Output Interval & Runtime (s)\\
        \hline\hline
Natural & $[0, 0.04]\times [-0.22, 0.22]$ & $1.93\times10^{-6}$ \\
       
Centered& $[-0.08, 0.08]\times [-0.24, 0.24]$ & $1.64\times10^{-5}$ \\
      
Mixed Centered & $[-0.06, 0.06]\times [-0.22, 0.22]$ &  $8.38\times10^{-5}$  \\
       
Cornered &$[-0.12, 0.16]\times [-0.18, 0.22]$ & $6.24\times10^{-5}$ \\
        
Mixed Cornered & $[-0.08, 0.08]\times[-0.18, 0.22]$ & $2.28\times10^{-4}$ \\
        \hline
    \end{tabular}
    \caption{Comparison of inclusion function methods.} 
    \label{tab:1}
\end{table}

\end{example}

    \paragraph*{Decomposition-based inclusion functions}

Another approach for constructing inclusion functions is by decomposing the function into the sum of increasing and decreasing parts~\cite{JLG-KPH:94,HLS:08}. This separation is often realized using a decomposition function.  

\begin{definition}[Decomposition function]\label{def:decomp}
Given a function $g:\real^n\to \real^m$,  a \emph{decomposition  function} for $g$ is a map $d:\mathcal{T}^{2n}\to \real^{m}$ that satisfies
\begin{enumerate}[label=(\alph*)]
    \item\label{p1} $g(x) = d(x,x)$, for every $x\in \real^n$; 
    \item\label{p2} $d(x,y)\le d(\widehat{x},y)$, for every $y\in \real^n$ and every $x\le \widehat{x}$;  
    \item\label{p3} $d(x,\widehat{y})\le d(x,y)$, for every $x\in \real^n$ and every $y\le \widehat{y}$.
\end{enumerate}
\end{definition}

One can show that every continuous function has at least one decomposition function. Given a map $g:\real^n\to \real^m$, we define the tight decomposition function for $g$ as the mapping $d^{\mathrm{tight}}:\mathcal{T}^{2n}\to \real^{2m}$ given by 
\begin{align}\label{eq:tight-decomp}
    d_i^{\mathrm{tight}}(x,\widehat{x}) = \begin{cases}
        \min_{z\in [x,\widehat{x}]} g_i(z) & x\le \widehat{x}\\
         \max_{z\in [\widehat{x},x]} g_i(z) & \widehat{x}\le x
    \end{cases}
\end{align}
for every $i\in \{1,\ldots,n\}$. In general, the decomposition function for function $g$ is not unique. The next theorem, whose proof is straightforward, shows how to construct inclusion functions from decomposition functions.

   \begin{proposition}[Decomposition-based inclusion functions]\label{thm:decomp}
Consider a map $g:\real^n\to \real^m$ with a function $d:\mathcal{T}^{2n}\to \real^m$. 
\change{ 
Then, the map $\sfG^d:\calT^{2n}\to\R^{2m}$ defined by 
\[
\OG^d(x,\widehat{x}) = \left[\begin{smallmatrix}d(x,\widehat{x})\\ d(\widehat{x},x)\end{smallmatrix}\right],\quad \mbox{ for all } x\le \widehat{x},
\]
is a thin monotone inclusion function for $g$ if and only if $d$ is a decomposition function for $g$. Moreover, $\sfG^d$ is the minimal inclusion function~\eqref{eq:tight} if and only if $d$ is the tight decomposition function~\eqref{eq:tight-decomp}.
}
\end{proposition}

\begin{remark}[Comparison with the literature]  For a vector field $f:\real^n\to \real^n$, the decomposition function has already been studied in  mixed monotone theory (cf. \cite{DG-VL:87},~\cite[Definition 2]{SC-MA:15b},~\cite[Definition 4]{LY-OM-NO:19}, and~\cite{HLS:06}). Definition~\ref{def:decomp} extends this classical notion to arbitrary functions. 
\end{remark}

One can combine two inclusion functions to obtain another inclusion function with tighter estimates for the output of the original function. 

\begin{proposition}[Intersection of inclusion functions]
\label{prop:intersect}
Given a map $g:\real^n\to \real^m$ with two inclusion functions $\OG_1 = \left[\begin{smallmatrix}\underline{\OG}_1\\ \overline{\OG}_1\end{smallmatrix}\right]:\mathcal{T}^{2n}_{\ge 0}\to \mathcal{T}^{2m}_{\ge 0}$ and $\OG_2=\left[\begin{smallmatrix}\underline{\OG}_2\\ \overline{\OG}_2\end{smallmatrix}\right]:\mathcal{T}^{2n}_{\ge 0}\to \mathcal{T}^{2m}_{\ge 0}$. Define the mapping $\OG_1\wedge \OG_2 = \left[\begin{smallmatrix}\underline{\OG}_1\wedge \underline{\OG}_2\\ \overline{\OG}_1\wedge \overline{\OG}_2\end{smallmatrix}\right]:\mathcal{T}^{2n}_{\ge 0}\to \mathcal{T}^{2m}_{\ge 0}$  by 
    \begin{align*}
      [\underline{\OG}_1\wedge \underline{\OG}_2](x,\widehat{x}) &= \max\{ \underline{\OG}_1(x,\widehat{x}), \underline{\OG}_2(x,\widehat{x})\},\\
      [\overline{\OG}_1\wedge \overline{\OG}_2](x,\widehat{x}) &= \min\{ \overline{\OG}_1(x,\widehat{x}), \overline{\OG}_2(x,\widehat{x})\}.
    \end{align*}
    Then $\OG_1\wedge \OG_2$ is an inclusion function for $g$ and it provides over-approximations of the output of $g$ which are tighter than both $\OG_1$ and $\OG_2$. 
\end{proposition}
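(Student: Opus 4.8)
The plan is to verify both assertions directly from Definition~\ref{def:embedding}, exploiting that the $\max$ and $\min$ in the definition of $\OG_1\wedge\OG_2$ are taken coordinatewise and therefore respect componentwise inequalities. First I would fix $x\le\widehat{x}$ and an arbitrary $z\in[x,\widehat{x}]$. Since $\OG_1$ and $\OG_2$ are inclusion functions for $g$, equation~\eqref{eq:incdef} gives $\underline{\OG}_1(x,\widehat{x})\le g(z)\le\overline{\OG}_1(x,\widehat{x})$ and $\underline{\OG}_2(x,\widehat{x})\le g(z)\le\overline{\OG}_2(x,\widehat{x})$. For each coordinate $i$, the value $g_i(z)$ dominates both $(\underline{\OG}_1)_i(x,\widehat{x})$ and $(\underline{\OG}_2)_i(x,\widehat{x})$, hence it dominates their maximum; symmetrically $g_i(z)$ is dominated by the minimum of $(\overline{\OG}_1)_i(x,\widehat{x})$ and $(\overline{\OG}_2)_i(x,\widehat{x})$. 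Assembling these coordinatewise statements yields $[\underline{\OG}_1\wedge\underline{\OG}_2](x,\widehat{x})\le g(z)\le[\overline{\OG}_1\wedge\overline{\OG}_2](x,\widehat{x})$, which is precisely~\eqref{eq:incdef} for $\OG_1\wedge\OG_2$.

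Before invoking this, I would check that $\OG_1\wedge\OG_2$ is a well-defined map into $\mathcal{T}^{2m}_{\ge 0}$, i.e., that its lower component never exceeds its upper component. This follows from the very inequality just derived: since $x\le\widehat{x}$, the box $[x,\widehat{x}]$ is nonempty, so choosing any $z$ in it gives $[\underline{\OG}_1\wedge\underline{\OG}_2](x,\widehat{x})\le g(z)\le[\overline{\OG}_1\wedge\overline{\OG}_2](x,\widehat{x})$, and in particular the lower bound lies below the upper bound coordinatewise.

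For the tightness claim, I would argue at the level of intervals coordinatewise: $\max\{(\underline{\OG}_1)_i(x,\widehat{x}),(\underline{\OG}_2)_i(x,\widehat{x})\}\ge(\underline{\OG}_1)_i(x,\widehat{x})$ and $\min\{(\overline{\OG}_1)_i(x,\widehat{x}),(\overline{\OG}_2)_i(x,\widehat{x})\}\le(\overline{\OG}_1)_i(x,\widehat{x})$, so the box $[[\underline{\OG}_1\wedge\underline{\OG}_2](x,\widehat{x}),[\overline{\OG}_1\wedge\overline{\OG}_2](x,\widehat{x})]$ is contained in $[\underline{\OG}_1(x,\widehat{x}),\overline{\OG}_1(x,\widehat{x})]$; the same computation with the roles of the subscripts swapped shows it is also contained in $[\underline{\OG}_2(x,\widehat{x}),\overline{\OG}_2(x,\widehat{x})]$. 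Hence the over-approximation produced by $\OG_1\wedge\OG_2$ is at least as tight as that of either factor. I do not expect any genuine obstacle here; the only points requiring a moment's care are that the lattice operations act coordinatewise and that a point of the (nonempty) box must be exhibited to place the output in $\mathcal{T}^{2m}_{\ge 0}$.
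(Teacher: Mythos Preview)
Your proof is correct and follows essentially the same route as the paper's: both argue directly from the inclusion-function inequalities for $\OG_1$ and $\OG_2$ and pass to the coordinatewise $\max$/$\min$. Your version is in fact more complete, since you explicitly verify that $\OG_1\wedge\OG_2$ lands in $\mathcal{T}^{2m}_{\ge 0}$ and you spell out the tightness claim, both of which the paper leaves implicit.
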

\begin{proof}
  Note that both $\OG_1$ and $\OG_2$ are inclusion functions for $g$. Thus, for every $x\le z\le \widehat{x}$, we have $\OG_1(x,\widehat{x}) \le g(z)$ and $\OG_2(x,\widehat{x}) \le g(z)$. This implies that $\max\{\underline{\OG}_1(x,\widehat{x}),\underline{\OG}_2(x,\widehat{x})\} = [\underline{\OG}_1\wedge \underline{\OG}_2](x,\widehat{x}) \le g(z)$. 
  Similarly, one can show that $g(z)\le [\overline{\OG}_1\wedge\overline{\OG}_2](x,\widehat{x})$.
\end{proof}

\subsection{Convergence of inclusion functions}

We now define the convergence rate of an inclusion function which can be used to capture the accuracy of the interval estimations. 

\begin{definition}[Convergence rate of inclusion functions]\label{def:conv-rate-inclusion}
   Consider the mapping $g:\real^n\to \real^m$ with an inclusion function $\OG=\left[\begin{smallmatrix}\underline{\OG}\\ \overline{\OG}\end{smallmatrix}\right]:\mathcal{T}_{\ge 0}^{2n}\to \mathcal{T}_{\ge 0}^{2m}$. The rate of convergence of $\OG$ is the largest $\alpha\in \real_{\ge 0}\cup\{\infty\}$ such that 
    \begin{multline}\label{eq:conv-rate}
        \|\overline{\OG}(x,\widehat{x})-\underline{\OG}(x,\widehat{x})\|_{\infty} - \max_{z,y\in [x,\widehat{x}]}\|g(z)-g(y)\|_{\infty} \\\le \beta \|\widehat{x}-x\|_{\infty}^{\alpha}
    \end{multline}
    for some $\beta\in \real_{\ge 0}$ and for every $x\le \widehat{x}$. 
\end{definition}

Roughly speaking, the rate of convergence of an inclusion function $\OG$ shows how accurately it estimates the range of the original function $g$. Indeed, for the minimal inclusion function, the interval $[\underline{\OG}(x,\widehat{x}), \overline{\OG}(x,\widehat{x})]$ is the tightest interval containing $g([x,\widehat{x}])$. Thus the left hand side of inequality~\eqref{eq:conv-rate} is always zero and the rate of convergence of the minimal inclusion function is $\alpha =\infty$. It is shown that, in general, the convergence rate of the natural inclusion functions is $\alpha= 1$~\cite[Lemma 4.1]{REM-RBK-MJC:09} and the convergence rate of the Jacobian-based inclusion functions are $\alpha=2$~\cite[\S 2.4.5]{LJ-MK-OD-EW:01}.

\subsection{Inclusion functions for Neural Networks}\label{sec:nninclusion}

Given a neural network $N$ of the form \eqref{eq:NN} and any interval $[y,\widehat{y}] \subseteq \real^n$, as described in the sequel, our framework proposes using existing neural network verification algorithms to construct a $[y,\widehat{y}]$-localized inclusion function of the form $\ON_{[y,\widehat{y}]}=\left[\begin{smallmatrix}\underline{\ON}_{[y,\widehat{y}]}\\ \overline{\ON}_{[y,\widehat{y}]}\end{smallmatrix}\right]: \mathcal{T}^{2n}_{\geq 0}\to \mathcal{T}^{2p}_{\geq 0}$ for $N$ satisfying
    \begin{align}\label{eq:NNverifier}
        \underline{\ON}_{[y,\widehat{y}]}(x,\widehat{x}) \leq N(x) \leq \overline{\ON}_{[y,\widehat{y}]}(x,\widehat{x}),
    \end{align}
    for any $x\in[x,\widehat{x}]\subseteq[y,\widehat{y}]$.   
    A large number of the existing neural network verification algorithms can provide bounds of the form~\eqref{eq:NNverifier} for the output of the neural networks, including CROWN~\cite{HZ-etal:18}, LipSDP \cite{MF-MM-GJP:22}, and IBP \cite{SG-etal:19}. In particular, some neural network verification algorithms can provide \emph{affine} $[y,\widehat{y}]$-localized inclusion functions for $N$. Examples of these neural network verification algorithms include CROWN and its variants~\cite{HZ-etal:18}. Given an interval $[y,\widehat{y}]$, these algorithms provide a tuple $(\underline{C},\overline{C},\underline{d},\overline{d})$ defining affine upper and lower bounds for the output of the neural network
\begin{gather} \label{eq:crown}
    \underline{C}(y,\widehat{y})x + \underline{d}(y,\widehat{y}) \leq N(x) \leq \overline{C}(y,\widehat{y})x + \overline{d}(y,\widehat{y}),
\end{gather}
for every $x\in [y,\widehat{y}]$. Using these linear bounds, we can construct the affine $[y,\widehat{y}]$-localized inclusion function for $N$:
\begin{gather} \label{eq:crownif}
    \begin{aligned}
    \underline{\ON}_{[y,\widehat{y}]}(x,\widehat{x}) &= [\underline{C}(y,\widehat{y})]^+x + [\underline{C}(y,\widehat{y})]^-\widehat{x} + \underline{d}(y,\widehat{y}),\\
        \overline{\ON}_{[y,\widehat{y}]}(x,\widehat{x}) &= [\overline{C}(y,\widehat{y})]^+\widehat{x} + [\overline{C}(y,\widehat{y})]^-x + \overline{d}(y,\widehat{y})
    \end{aligned}
\end{gather}
for any $[x,\widehat{x}]\subseteq[y,\widehat{y}]$.  

\begin{remark}
\;
\begin{enumerate}
    \item \textit{(Computational complexity):} the computational complexity of finding the $[y,\widehat{y}]$-localized inclusion function $\ON_{[y,\widehat{y}]}=\left[\begin{smallmatrix}\underline{\ON}_{[y,\widehat{y}]}\\ \overline{\ON}_{[y,\widehat{y}]}\end{smallmatrix}\right]$ depends on the neural network verification algorithm. For instance, for a neural network with $k$-layer and $M$ neurons per layer, the computational complexity of CROWN~\cite{HZ-etal:18} in finding the bounds of the form~\eqref{eq:NNverifier} is $\mathcal{O}(k^2M^{3})$.
    \item \textit{(Inclusion function):} Since the $[y,\widehat{y}]$-localized inclusion function $\ON_{[y,\widehat{y}]}=\left[\begin{smallmatrix}\underline{\ON}_{[y,\widehat{y}]}\\ \overline{\ON}_{[y,\widehat{y}]}\end{smallmatrix}\right]$ can be obtained for every $[y,\widehat{y}]$, we can define an inclusion function $\ON=\left[\begin{smallmatrix}\underline{\ON}\\ \overline{\ON}\end{smallmatrix}\right]$ for the neural network $N$ by $\ON(x,\widehat{x})= \ON_{[x,\widehat{x}]}(x,\widehat{x})$, for all $x\le \widehat{x}$.
\end{enumerate}
\end{remark}

\section{Interval Reachability of Dynamical Systems}\label{sec:IA-d}
In this section, we study reachability of the dynamical system~\eqref{eq:dynamics-c} using interval analysis. Our key idea is to embed the original dynamical system~\eqref{eq:dynamics-c} into an \emph{embedding system} on $\real^{2n}$ and use trajectories of the embedding system to study the propagation of the state bounds with time. Our main goal is to use the notion of inclusion function (cf. Definition~\ref{def:embedding}) to design embedding systems for dynamical systems. Consider the dynamical system~\eqref{eq:dynamics-c} with an inclusion function $\OF = \left[\begin{smallmatrix}\underline{\OF}\\ \overline{\OF}\end{smallmatrix}\right]:\mathcal{T}^{2n}_{\ge 0}\times \mathcal{T}^{2q}_{\ge 0}\to \mathcal{T}^{2n}_{\ge 0}$ for $f$. We define the associated \emph{embedding system} on $\real^{2n}$:
\begin{align}\label{eq:ct-embedingsystem}
    \dot{x}_i &= \underline{\OF}_i(x,\widehat{x}_{[i:x]},w,\widehat{w}), \nonumber\\ \dot{\widehat{x}}_i
 &= \overline{\OF}_i(x_{[i:\widehat{x}]},\widehat{x},w,\widehat{w}),
\end{align}
for every $i\in \{1,\ldots,n\}$. 
For instance, using the minimal inclusion function $\OF^{\min} = \left[\begin{smallmatrix}\underline{\OF}^{\min}\\ \overline{\OF}^{\min}\end{smallmatrix}\right]$ for $f$, the associated embedding system is given by:
\begin{align}\label{eq:tightembedding-c}
\dot{x}_i = & \min_{\substack{z\in [x,\widehat{x}],\; z_i = x_i,\\ u\in [w,\widehat{w}]}} f_i(z,u),\nonumber\\
\dot{\widehat{x}}_i =  &  \max_{\substack{z\in [x,\widehat{x}],\; z_i = \widehat{x}_i,\\ u\in [w,\widehat{w}]}} f_i(z,u),
\end{align}
for every $i\in \{1,\ldots,n\}$. Given the disturbance set $\mathcal{W}=[\underline{w},\overline{w}]$ and the initial set $\mathcal{X}_0=[\underline{x}_0,\overline{x}_0]$, one can use a single trajectory of the embedding system~\eqref{eq:ct-embedingsystem} to obtain over-approximations of reachable sets of the system~\eqref{eq:dynamics-c}. 
\change{Moreover, the embedding system obtained from the minimal inclusion function (i.e., the  dynamical system~\eqref{eq:tightembedding-c}) provides the best reachable set over-approximations at any time $t\geq 0$, compared to any other embedding system constructed using an inclusion function for vector field $f$ in the dynamical system~\eqref{eq:dynamics-c}.}

\begin{proposition}[Reachability using embedding systems]\label{thm:reach-c}
Consider the dynamical system~\eqref{eq:dynamics-c} with the disturbance set $[\underline{w},\overline{w}]$ and the initial set $\mathcal{X}_0=[\underline{x}_0,\overline{x}_0]$. Suppose that  $\OF=\left[\begin{smallmatrix}\underline{\OF}\\ \overline{\OF}\end{smallmatrix}\right]$ is an inclusion function for $f$, and $t\mapsto \left[\begin{smallmatrix}\underline{x}(t)\\ \overline{x}(t)\end{smallmatrix}\right]$ and $t\mapsto \left[\begin{smallmatrix}\underline{x}^{\min}(t)\\ \overline{x}^{\min}(t)\end{smallmatrix}\right]$ are the trajectories of embedding systems~\eqref{eq:ct-embedingsystem} and~\eqref{eq:tightembedding-c}, starting from $\left[\begin{smallmatrix}\underline{x}_0\\ \overline{x}_0\end{smallmatrix}\right]$ with fixed disturbance $\left[\begin{smallmatrix}w\\ \widehat{w}\end{smallmatrix}\right] = \left[\begin{smallmatrix}\underline{w}\\ \overline{w}\end{smallmatrix}\right]$. Then, for every $t\in\real_{\ge 0}$, we have 
    \begin{align*}
    \mathcal{R}_f(t,[\underline{x}_0,\overline{x}_0], [\underline{w},\overline{w}])\subseteq [\underline{x}^{\min}(t),\overline{x}^{\min}(t)]\subseteq [\underline{x}(t),\overline{x}(t)]. 
\end{align*}
\end{proposition}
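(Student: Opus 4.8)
The plan is to derive both set inclusions from the classical comparison lemma for cooperative (quasimonotone) ordinary differential equations, using the embedding system~\eqref{eq:tightembedding-c} built from the \emph{minimal} inclusion function as a common reference dynamics. Write $E,E^{\min}\colon\mathcal{T}^{2n}_{\ge 0}\to\real^{2n}$ for the right-hand sides of~\eqref{eq:ct-embedingsystem} and~\eqref{eq:tightembedding-c} with the disturbance frozen at $[\underline{w},\overline{w}]$, and recall that $\le_{\mathrm{SE}}$ is the orthant order obtained from the componentwise order on $\real^{2n}$ by reversing the last $n$ coordinates, so that $\left[\begin{smallmatrix}a\\ \widehat{a}\end{smallmatrix}\right]\le_{\mathrm{SE}}\left[\begin{smallmatrix}b\\ b\end{smallmatrix}\right]$ is exactly the bracketing relation $a\le b\le\widehat{a}$; write $\ge_{\mathrm{SE}}$ for its reverse. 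The point of routing everything through $E^{\min}$ is that the \emph{general} embedding system $E$ need not be monotone, whereas $E^{\min}$ is.

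The first thing I would establish is that $E^{\min}$ is cooperative with respect to $\le_{\mathrm{SE}}$. For $j\ne i$, the entry $\underline{\OF}^{\min}_i(x,\widehat{x}_{[i:x]},\underline{w},\overline{w})=\min\{f_i(z,u):z\in[x,\widehat{x}_{[i:x]}],\ u\in[\underline{w},\overline{w}]\}$ is nondecreasing in $x_j$ (raising $x_j$ shrinks the feasible box) and nonincreasing in $\widehat{x}_j$ (raising $\widehat{x}_j$ enlarges it), and --- crucially --- the substitution $\widehat{x}_{[i:x]}$ removes \emph{all} dependence of $\dot{x}_i$ on its mirror variable $\widehat{x}_i$. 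Together with the symmetric statements for $\overline{\OF}^{\min}$ and $x_{[i:\widehat{x}]}$, this is precisely the Kamke condition for $\le_{\mathrm{SE}}$; equivalently, this is the general principle that the embedding system associated with a \emph{monotone} inclusion function is a monotone system on $(\real^{2n},\le_{\mathrm{SE}})$, and $\OF^{\min}$ is monotone in the sense of Definition~\ref{def:embedding}. In the same step one checks forward invariance of $\mathcal{T}^{2n}_{\ge 0}$ (on the diagonal $x=\widehat{x}$ one has $\dot{\widehat{x}}_i-\dot{x}_i\ge 0$), so that the relevant trajectories are well defined.

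For the first inclusion, I would fix a trajectory $x(\cdot)=\phi_f(\cdot,x_0,w)$ with $x_0\in\mathcal{X}_0$ and $w(\cdot)$ valued in $[\underline{w},\overline{w}]$, and consider its diagonal lift $t\mapsto\left[\begin{smallmatrix}x(t)\\ x(t)\end{smallmatrix}\right]$. Evaluating the defining inequality~\eqref{eq:incdef} of the inclusion function on the degenerate box $\{x(t)\}$ and at the disturbance value $w(t)\in[\underline{w},\overline{w}]$ gives $\underline{\OF}^{\min}_i(x(t),x(t),\underline{w},\overline{w})\le f_i(x(t),w(t))\le\overline{\OF}^{\min}_i(x(t),x(t),\underline{w},\overline{w})$, i.e.\ the diagonal lift is a super-solution of the reference system, $\frac{d}{dt}\left[\begin{smallmatrix}x(t)\\ x(t)\end{smallmatrix}\right]\ge_{\mathrm{SE}}E^{\min}\!\left(\left[\begin{smallmatrix}x(t)\\ x(t)\end{smallmatrix}\right]\right)$, with initial value satisfying $\left[\begin{smallmatrix}\underline{x}_0\\ \overline{x}_0\end{smallmatrix}\right]\le_{\mathrm{SE}}\left[\begin{smallmatrix}x_0\\ x_0\end{smallmatrix}\right]$. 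The comparison lemma, applicable by cooperativity of $E^{\min}$, then yields $\underline{x}^{\min}(t)\le x(t)\le\overline{x}^{\min}(t)$; since the trajectory was arbitrary, $\mathcal{R}_f(t,[\underline{x}_0,\overline{x}_0],[\underline{w},\overline{w}])\subseteq[\underline{x}^{\min}(t),\overline{x}^{\min}(t)]$. For the second inclusion, minimality of $\OF^{\min}$ forces $\underline{\OF}\le\underline{\OF}^{\min}$ and $\overline{\OF}^{\min}\le\overline{\OF}$ pointwise on $\mathcal{T}^{2n}_{\ge 0}$ (every inclusion interval contains the minimal one), hence $E\le_{\mathrm{SE}}E^{\min}$; so $t\mapsto\left[\begin{smallmatrix}\underline{x}(t)\\ \overline{x}(t)\end{smallmatrix}\right]$ is a sub-solution of the same reference system with the same initial condition as $t\mapsto\left[\begin{smallmatrix}\underline{x}^{\min}(t)\\ \overline{x}^{\min}(t)\end{smallmatrix}\right]$, and a second application of the comparison lemma gives $\underline{x}(t)\le\underline{x}^{\min}(t)$ and $\overline{x}^{\min}(t)\le\overline{x}(t)$, i.e.\ $[\underline{x}^{\min}(t),\overline{x}^{\min}(t)]\subseteq[\underline{x}(t),\overline{x}(t)]$.

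The main obstacle I anticipate is the cooperativity verification of the reference system: making rigorous that the index substitutions $\widehat{x}_{[i:x]}$ and $x_{[i:\widehat{x}]}$ in~\eqref{eq:ct-embedingsystem}--\eqref{eq:tightembedding-c} are exactly what decouples each coordinate from its mirror variable and renders the embedding system monotone on $(\real^{2n},\le_{\mathrm{SE}})$. A secondary technical point is the regularity needed for the comparison lemma, since the embedding vector fields are in general only continuous; one should either invoke a version of the comparison theorem for differential inequalities that does not require uniqueness for the reference equation, or replace it by a direct ``first crossing time'' argument on $[0,T]$ --- at a first time where one of the two inequalities becomes tight in some coordinate $i$, the point $x(t)$ (respectively the relevant sub-box) is feasible in the min/max defining the reference vector field, so the derivative points the correct way --- combined with the forward invariance of $\mathcal{T}^{2n}_{\ge 0}$.
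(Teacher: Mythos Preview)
Your proposal is correct and follows essentially the same route as the paper: establish that the embedding system~\eqref{eq:tightembedding-c} built from $\OF^{\min}$ is monotone with respect to $\le_{\mathrm{SE}}$, use the diagonal lift of an arbitrary trajectory to obtain the first inclusion, and invoke minimality of $\OF^{\min}$ together with the monotone comparison lemma for the second. The only cosmetic difference is that the paper treats the diagonal lift as an \emph{exact} solution of~\eqref{eq:tightembedding-c} with the degenerate disturbance $\left[\begin{smallmatrix}u\\ u\end{smallmatrix}\right]$ and then uses monotonicity in the disturbance argument, whereas you freeze the disturbance at $\left[\begin{smallmatrix}\underline{w}\\ \overline{w}\end{smallmatrix}\right]$ and view the lift as a super-solution; these are equivalent formulations of the same step.
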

\begin{proof}
We first show that the dynamical system~\eqref{eq:tightembedding-c} is a monotone dynamical system with respect to the southeast partial order $\le_{\mathrm{SE}}$. Let $\left[\begin{smallmatrix}x\\ \widehat{x}\end{smallmatrix}\right] \le_{\mathrm{SE}} \left[\begin{smallmatrix}y\\ \widehat{y}\end{smallmatrix}\right]$ be such that $x_i=y_i$ and let $\left[\begin{smallmatrix}w\\ \widehat{w}\end{smallmatrix}\right] \le_{\mathrm{SE}} \left[\begin{smallmatrix}v\\ \widehat{v}\end{smallmatrix}\right]$. Then, we have $[y,\widehat{y}]\subseteq [x,\widehat{x}]$ and $[v,\widehat{v}]\subseteq [w,\widehat{w}]$, and we can compute
\begin{align*}
   \underline{\OF}^{\min}_i(x,\widehat{x},w,\widehat{w}) &= \min_{\substack{z\in [x,\widehat{x}],\; z_i = x_i,\\ u\in [w,\widehat{w}]}} f_i(z,u) \\ & \le \min_{\substack{z\in [y,\widehat{y}],\; z_i = y_i,\\ u\in [v,\widehat{v}]}} f_i(z,u) =  \underline{\OF}^{\min}_i(y,\widehat{y},v,\widehat{v}),
\end{align*}
for every $i\in \{1,\ldots,n\}$. Similarly, let $\left[\begin{smallmatrix}x\\ \widehat{x}\end{smallmatrix}\right]\le_{\mathrm{SE}} \left[\begin{smallmatrix}y\\ \widehat{y}\end{smallmatrix}\right]$ be such that $\widehat{x}_i=\widehat{y}_i$ and let $\left[\begin{smallmatrix}w\\ \widehat{w}\end{smallmatrix}\right] \le_{\mathrm{SE}} \left[\begin{smallmatrix}v\\ \widehat{v}\end{smallmatrix}\right]$. Then, we have $[y,\widehat{y}]\subseteq [x,\widehat{x}]$ and $[v,\widehat{v}]\subseteq [w,\widehat{w}]$, and {\color{black}similar reasoning implies $\overline{\OF}_i^{\min}(y,\widehat{y},v,\widehat{v})\leq \overline{\OF}^{\min}_{i}(x,\widehat{x},w,\widehat{w})$,}
for every $i\in \{1,\ldots,n\}$. As a result, the embedding system~\eqref{eq:tightembedding-c} is monotone with respect to $\le_{\mathrm{SE}}$. Let $x_0\in [\underline{x}_0,\overline{x}_0]$ and $u\in [\underline{w},\overline{w}]$ and $t\mapsto x_u(t)$ be the trajectory of the system~\eqref{eq:dynamics-c} starting from $x_0$ with disturbance $u$. Note that $t\mapsto \left[\begin{smallmatrix}{x}_u(t)\\ {x}_u(t)\end{smallmatrix}\right]$ is a solution of the embedding system~\eqref{eq:tightembedding-c} with the initial condition $\left[\begin{smallmatrix}{x}_0\\ {x}_0\end{smallmatrix}\right]$ and the disturbance $\left[\begin{smallmatrix}u\\ u\end{smallmatrix}\right]$. Moreover, we know that $\left[\begin{smallmatrix}\underline{x}_0\\ \overline{x}_0\end{smallmatrix}\right] \le_{\mathrm{SE}} \left[\begin{smallmatrix}x_0\\ x_0\end{smallmatrix}\right]$ and $\left[\begin{smallmatrix}\underline{w}\\ \overline{w}\end{smallmatrix}\right] \le_{\mathrm{SE}} \left[\begin{smallmatrix}u\\ u\end{smallmatrix}\right]$. Therefore, by monotonicity of the embedding system~\eqref{eq:tightembedding-c} with respect to $\le_{\mathrm{SE}}$, we get $\left[\begin{smallmatrix}\underline{x}^{\min}(t)\\ \overline{x}^{\min}(t)\end{smallmatrix}\right] \le_{\mathrm{SE}}\left[\begin{smallmatrix}{x}_u(t)\\ {x}_u(t)\end{smallmatrix}\right]$, for every $t\in \real_{\ge 0}$.  
This implies that $\mathcal{R}_f(t,[\underline{x}_0,\overline{x}_0],[\underline{w},\overline{w}]) \subseteq [\underline{x}^{\min}(t),\overline{x}^{\min}(t)]$, for every $t\in \real_{\ge 0}$. On the other hand, $\OF^{\min}$ is the minimal inclusion function for $f$ and therefore, for every $x\le \widehat{x}$ and every $w\le \widehat{w}$, we have $\left[\begin{smallmatrix}\underline{\OF}(x,\widehat{x},w,\widehat{w})\\ \overline{\OF}(x,\widehat{x},w,\widehat{w})\end{smallmatrix}\right] \le_{\mathrm{SE}} \left[\begin{smallmatrix}\underline{\OF}^{\min}(x,\widehat{x},w,\widehat{w})\\ \overline{\OF}^{\min}(x,\widehat{x},w,\widehat{w})\end{smallmatrix}\right]$.
Thus, for every $i\in \{1,\ldots,n\}$, every $x\le \widehat{x}$, and every $w\le \widehat{w}$,
\begin{align*}
    \underline{\OF}_i(x,\widehat{x}_{[i:x]},w,\widehat{w}) &\le \min_{\substack{z\in [x,\widehat{x}],\; z_i = x_i,\\ \xi\in [w,\widehat{w}]}} f_i(z,\xi), \\
    \overline{\OF}_i(x_{[i:\widehat{x}]},\widehat{x},w,\widehat{w}) &\ge \max_{\substack{z\in [x,\widehat{x}],\; z_i = \widehat{x}_i,\\ \xi\in [w,\widehat{w}]}} f_i(z,\xi).
\end{align*}
Now, we can use the classical monotone comparison Lemma~\cite[Theorem 3.8.1]{ANM-LH-DL:08-new} to obtain $\left[\begin{smallmatrix}\underline{x}(t)\\ \overline{x}(t)\end{smallmatrix}\right]
  \le_{\mathrm{SE}}    \left[\begin{smallmatrix}\underline{x}^{\min}(t)\\ \overline{x}^{\min}(t)\end{smallmatrix}\right]$, for every $t\in \real_{\ge 0}$.
This implies that $[\underline{x}^{\min}(t),\overline{x}^{\min}(t)] \subseteq [\underline{x}(t),\overline{x}(t)]$, for every $t\in \real_{\ge 0}$, and completes the proof. 
\end{proof}

\begin{remark}\label{rem:localized}
\;
\begin{enumerate}
    \item\label{p1:localized} It is straightforward to extend Proposition~\ref{thm:reach-c} to the case when the mapping $\OF$ is a $[y,\widehat{y}]\times \real^q$-localized inclusion function for $f$. In this setting, the results of Proposition~\ref{thm:reach-c} hold as long as we have $[\underline{x}(t),\overline{x}(t)]\subseteq [y,\widehat{y}]$. 
    \item Proposition~\ref{thm:reach-c} can be considered as a generalization of~\cite[Proposition 6]{PJM-AD-MA:19} and~\cite[Proposition 1]{MA-MD-SC:21}. Indeed, in the special case that $\OF$ is a decomposition-based inclusion function constructed from the decomposition function $d$, one can recover the embedding system
\begin{align*}
\dot{x}_i &= d(x,\widehat{x}_{[i:x]},w,\widehat{w}),\\  \dot{\widehat{x}}_i & = 
 d(\widehat{x},x_{[i:\widehat{x}]},\widehat{w},w)
\end{align*}
for every $i\in \{1,\ldots,n\}$. This embedding system is identical to~\cite[Equation (7)]{PJM-AD-MA:19} and~\cite[Equation (3)]{MA-MD-SC:21}. We highlight that this extension is crucial for our framework as the inclusion functions we obtain for learning-based components are neither thin nor decomposition-based. 
    
    \item Proposition~\ref{thm:reach-c} can alternatively be proved using~\cite[Theorem 2]{JKS-PIB:13}. However, our proof of Proposition~\ref{thm:reach-c} is different in that it uses monotone system theory~\cite{DA-EDS:03,EDS:01} and classical monotone comparison Lemma~\cite[Theorem 3.8.1]{ANM-LH-DL:08-new}. Indeed, compared to~\cite[Theorem 2]{JKS-PIB:13}, our proof techniques allow to compare accuracy of different over-approximations of reachable sets.    

\item  \change{For the dynamical system~\eqref{eq:dynamics-c} with locally Lipschitz vector field $f$, the minimal inclusion function for $f$ as defined in equation~\eqref{eq:tight} exists. Therefore, the reachability framework from Proposition~\ref{thm:reach-c} is applicable to locally Lipschitz nonlinear systems, regardless of monotonicity of the original dynamics. }

    \end{enumerate}
    \end{remark}

We now study the accuracy of the over-approximations provided in Proposition~\ref{thm:reach-c} using the notion of convergence rate of embedding systems.

\begin{definition}[Convergence rate of embedding systems]\label{def:conv-rate-emb}
Consider the dynamical system~\eqref{eq:dynamics-c}
with the embedding system~\eqref{eq:ct-embedingsystem}. Then the convergence rate of the embedding system~\eqref{eq:ct-embedingsystem} is the largest value of $\alpha\in \real_{\ge 0}\cup \{\infty\}$ such that 
\begin{multline*}
    \|\overline{x}(t)-\underline{x}(t)\|_{\infty} - \max_{y,z\in [\underline{x}_0,\overline{x}_0]}\|\phi_f(t,y)-\phi_f(t,z)\|_{\infty} \\ \le M(t) \|\overline{x}_0-\underline{x}_0\|^{\alpha}_{\infty}
\end{multline*}
for some $M:\mathbb{R}_{\ge 0}\to \real_{\ge 0}$ and for $t\mapsto \left[\begin{smallmatrix}\underline{x}(t)\\ \overline{x}(t)\end{smallmatrix}\right]$ being the trajectory of the embedding system~\eqref{eq:ct-embedingsystem} starting from any $\left[\begin{smallmatrix}\underline{x}_0\\ \overline{x}_0\end{smallmatrix}\right]$ with fixed disturbance $\left[\begin{smallmatrix}w\\ \widehat{w}\end{smallmatrix}\right] = \vect{0}_{2n}$.    
\end{definition}

Definition~\ref{def:conv-rate-emb} raises an important question: what is the connection between the convergence rate of the embedding system~\eqref{eq:ct-embedingsystem} and the convergence rate of its associated inclusion function $\OF$ (as in Definition~\ref{def:conv-rate-inclusion})? In the special case when $f$ is a discrete-time monotone vector field, one can choose the inclusion function $\OF(\underline{x},\overline{x}) = \left[\begin{smallmatrix}f(\underline{x})\\ f(\overline{x})\end{smallmatrix}\right]$ for $f$. In this case, it can be shown that the convergence rate of both the embedding system and the inclusion function are the same and equal to $\alpha = \infty$. Given a Lipschitz and monotone inclusion function $\OF$ for $f$, one can show that the convergence rate of the embedding system~\eqref{eq:ct-embedingsystem} satisfies $\alpha\ge 1$~\cite[Theorem 4.1]{REM-RBK-MJC:09}. However, in general, the inclusion function $\OF$ does not reveal more information about the convergence rate of the embedding system~\eqref{eq:ct-embedingsystem}. Even when the convergent rate of the inclusion function $\OF$ is $\alpha = \infty$, i.e., the embedding system is given by equations~\eqref{eq:tightembedding-c}, one might get a convergence rate of $\alpha = 1$ for the embedding system. \change{Thus, even the reachable set estimates from the embedding system associated to the minimal inclusion function can incur over-approximation errors proportional to the size of uncertainty set.}

\section{Interval reachability of learning-based systems}
\label{sec:lbs}
In this section, we develop a computationally efficient framework for studying reachability of neural network controlled systems. The key idea of our approach is to employ the framework of Section~\ref{sec:IA-d} and to capture the interactions between the open-loop system and the neural network controller using a suitable closed-loop embedding system. However, finding a closed-loop embedding system (or, equivalently, computing an inclusion function for the closed-loop vector field) using the approaches in Section~\ref{sec:IA} is often not tractable, as the neural networks  can have a large number of parameters. Instead, we propose to use a compositional approach %
based on an inclusion function for the open-loop dynamics and an inclusion function for the neural network controller.
We develop two different approaches for designing the closed-loop inclusion function from the open-loop vector field $f$ and the localized inclusion function of the neural network. 

First, we propose an %
 \emph{interconnection-based} approach where %
the key idea is to consider the closed-loop embedding system as the feedback interconnection of the open-loop embedding system and the neural network inclusion function. 
This method, as elaborated below, can capture some of the stabilizing effects of the neural network controller by approximating the input-output behavior of the neural network separately on the faces of a hyperrectangle. Thus, we consider this approach a semi-input-output approach. The advantages of the interconnection-based approach are its computational speed and that it is amenable to any open-loop inclusion function. %

Second, we propose an \emph{interaction-based} approach where  %
the key idea is to use the Jacobian-based cornered inclusion function (developed in Proposition~\ref{thm:jacobian-basedIF}) to more fully capture the interaction between the neural network controller and the system. The advantages of this approach are reduced conservatism compared to the interconnection-based inclusion function while retaining much of the computational efficiencies of interval reachability analysis.

Our methods are applicable to general nonlinear systems with neural network controllers. Nonetheless, the essence of our approaches, and their advantages over a naive input-output approach, are evident even in the simple setting of a linear system with a linear feedback controller, as shown in the following illustrative example.

\begin{example}[Invariant intervals]
Consider the control system
    $\dot{x} = Ax + Bu$
where $x\in \real^2$ and $A = \left[\begin{smallmatrix} -2 & 1\\ 1 & -2\end{smallmatrix}\right]$ and $B = \left[\begin{smallmatrix} 0\\1\end{smallmatrix}\right]$. Note that $A$ is Hurwitz %
so the origin is the globally asymptotically stable equilibrium point of the open-loop system when $u\equiv 0$. %
Consider a state feedback controller $u=\pi(x) = Kx = k_1x_1+k_2x_2$ with $k_1=k_2=-3$ designed to, \emph{e.g.}, achieve faster convergence to the origin. As a special case of reachability analysis, our goal is to characterize forward invariant intervals of the closed-loop system around the origin. We compare three different approaches for finding invariant intervals. 

\paragraph*{Naive input-output approach} this approach decouples the system from the controller and only uses knowledge of the output range of the controller. In particular, this approach seeks an interval $S = [-\xi,\xi]$, $\xi\in\mathbb{R}^2_{\geq 0}$ such that $S$ is robustly forward invariant for the open loop system %
under any $u\in \pi(S)$. However, %
such a robustly forward invariant set does not exist. By contradiction, suppose $S=[-\xi,\xi]$ is forward invariant and consider the point $x=\left[\begin{smallmatrix} 0\\ \xi_2\end{smallmatrix}\right]$ on the boundary of this set and 
pick $u=\pi(-x)= 3\xi_2\in \pi(S)$, for which  $\dot{x}_2 = -2\xi_2 + u = \xi_2>0$ and the resulting trajectory immediately leaves $S$.  %
Despite the stability of the open-loop system and the stabilizing effect of the controller, this approach fails to capture the stability of the closed-loop system.

\paragraph*{Interconnection-based approach} this approach also decouples the system from the controller and only assumes knowledge of the output of the controller to characterize forward invariant intervals $S=[-\xi,\xi]$, for some $\xi\in\mathbb{R}^2_{\geq 0}$. However, it considers the interconnection between the system and the controller on the edges of the interval $S$. Let $S_1^{-},S_1^{+},S_2^{-},S_2^{+}$ denote the edges of $S$ defined by $S_i^{\pm} = \{x\in S\mid x_i = \pm \xi_i\}$ for $i\in \{1,2\}$. Then a sufficient condition for forward invariance of $S$ is that $\dot{x}_i\ge 0$ for every $x\in S^-_i$, and $\dot{x}_i\le 0$ for every $x\in S^+_i$. Note that, if $\xi_1/\xi_2\ge \frac{1}{2}$, then we have $\dot{x}_1\le 0$ on $S_1^+$ and we have $\dot{x}_1\ge 0$ on $S_1^-$. On $S^{+}_2$, we have $\dot{x}_2 \in [-\xi_1,\xi_1] - 2\xi_2 + u$ and $u \in  -3[-\xi_1,\xi_1] - 3 \xi_2$. This implies that $\dot{x}_2 \in  4[-\xi_1,\xi_1] - 5\xi_2$, and, therefore, we need $\xi_1/\xi_2 \le \frac{5}{4}$ for $\dot{x}_2\leq 0$. A similar condition is required to ensure $\dot{x}_2\ge 0$ on $S^{-}_2$. This implies that, using this interconnection-based approach, we  certify  $S = [-\xi,\xi]$ is forward invariant for any $\xi\in \real^2_{>0}$ satisfying $\frac{1}{2}\le \xi_1/\xi_2\le \frac{5}{4}$. 

\paragraph*{Interaction-based approach} this approach uses the knowledge of the controller functional dependency $\pi(x)=Kx$ and the open-loop dynamics
to obtain the closed-loop system 
    $\dot{x} = (A + BK) x = \begin{bsmallmatrix}
        -2 & 1 \\ -2 & -5
    \end{bsmallmatrix}x. $
One can show that the interval $S = [-\xi,\xi]$ is forward invariant for this closed-loop system %
for any $\xi\in \real^2_{\ge 0}$ satisfying $\frac{1}{2}\le \xi_1/\xi_2\le \frac{5}{2}$. 

\end{example}

Now, we develop a mathematical framework for reachability analysis of the learning-based closed-loop system~\eqref{eq:closedloop-c}. Given an open-loop system of the form~\eqref{eq:dynamics-c}, we use approaches developed in Section~\ref{subsec:compute} to construct an inclusion function $\OF^{\mathrm{o}} = \left[\begin{smallmatrix}\underline{\OF}^{\mathrm{o}}\\ \overline{\OF}^{\mathrm{o}}\end{smallmatrix}\right]:\mathcal{T}^{2n}_{\ge 0}\times \mathcal{T}^{2p}_{\ge 0}\times \mathcal{T}^{2q}_{\ge 0}\to \mathcal{T}^{2n}_{\ge 0}$ for the open-loop vector field $f$. We also assume that we have access to a neural network inclusion function $\ON$ as developed in Section~\ref{sec:nninclusion}. We construct two classes of inclusion functions for the closed-loop vector field $f^c$ in~\eqref{eq:closedloop-c}.

\subsection{Interconnection-based approach}

In the first approach, which we refer to as interconnection-based approach, we obtain the closed-loop inclusion function from a feedback interconnection of the open-loop embedding system and the neural network inclusion function. 

 \begin{theorem}[Closed-loop interconnection-based inclusion function]\label{thm:cl-comp}
 Consider the open-loop system~\eqref{eq:plant-c} with an inclusion function $\OF^{\mathrm{o}}$ and a neural network controller of the form~\eqref{eq:NN} with a $[y,\widehat{y}]$-localized inclusion function $\ON_{[y,\widehat{y}]} = \left[\begin{smallmatrix}\underline{\ON}_{[y,\widehat{y}]}\\ \overline{\ON}_{[y,\widehat{y}]}\end{smallmatrix}\right]$. Then, the map $\OF_{[y,\widehat{y}]}^{\mathrm{con}}:\mathcal{T}^{2n}_{\ge 0}\times \mathcal{T}^{2q}_{\ge 0}\to \mathcal{T}^{2n}_{\ge 0}$ defined by
 \begin{align}\label{eq:int-inclusion}
    \underline{\OF}_{[y,\widehat{y}]}^{\mathrm{con}}(x,\widehat{x},{w},\widehat{w})
     & = \underline{\OF}^{\mathrm{o}}(x,\widehat{x}, {\xi},\widehat{\xi},{w},\widehat{w})\nonumber\\
     \overline{\OF}_{[y,\widehat{y}]}^{\mathrm{con}}(x,\widehat{x},{w},\widehat{w}) & = \overline{\OF}^{\mathrm{o}}(x,\widehat{x}, {\eta},\widehat{\eta},{w},\widehat{w})
\end{align}
is a $[y,\widehat{y}]\times \real^q$-localized inclusion function for the closed-loop vector field $f^{\mathrm{c}}$, where 
\begin{align*}
   {\xi} & =  \underline{\ON}_{[y,\widehat{y}]}(x,\widehat{x}),\quad \widehat{\xi} = \overline{\ON}_{[y,\widehat{y}]}(x,\widehat{x}),\\
   {\eta} &= \overline{\ON}_{[{y},\widehat{y}]}(x,\widehat{x}), \quad 
   \widehat{\eta} = \underline{\ON}_{[{y},\widehat{y}]}(x,\widehat{x}).
\end{align*}
\end{theorem}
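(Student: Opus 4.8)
The plan is to verify directly that the candidate map $\OF_{[y,\widehat{y}]}^{\mathrm{con}}$ satisfies the defining inequality~\eqref{eq:incdef} for a $[y,\widehat{y}]\times\real^q$-localized inclusion function of $f^{\mathrm{c}}$. So fix any $[x,\widehat{x}]\subseteq[y,\widehat{y}]$, any $[w,\widehat{w}]$, and any $z\in[x,\widehat{x}]$, $v\in[w,\widehat{w}]$; I must show $\underline{\OF}_{[y,\widehat{y}]}^{\mathrm{con}}(x,\widehat{x},w,\widehat{w})\le f^{\mathrm{c}}(z,v)\le\overline{\OF}_{[y,\widehat{y}]}^{\mathrm{con}}(x,\widehat{x},w,\widehat{w})$, where $f^{\mathrm{c}}(z,v)=f(z,N(z),v)$.

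The key observation is the two-stage bounding. First, since $\ON_{[y,\widehat{y}]}$ is a $[y,\widehat{y}]$-localized inclusion function for $N$ and $z\in[x,\widehat{x}]\subseteq[y,\widehat{y}]$, the defining property~\eqref{eq:NNverifier} gives $\xi=\underline{\ON}_{[y,\widehat{y}]}(x,\widehat{x})\le N(z)\le\overline{\ON}_{[y,\widehat{y}]}(x,\widehat{x})=\widehat{\xi}$; in other words $N(z)\in[\xi,\widehat{\xi}]$. Note also that $\xi\le\widehat{\xi}$, so $\left[\begin{smallmatrix}\xi\\\widehat{\xi}\end{smallmatrix}\right]\in\mathcal{T}^{2p}_{\ge 0}$ and the arguments fed to $\OF^{\mathrm{o}}$ are legitimate. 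Second, $\OF^{\mathrm{o}}$ is an inclusion function for the open-loop vector field $f$ viewed as a map of $(x,u,w)$. Applying~\eqref{eq:incdef} for $\OF^{\mathrm{o}}$ with the input box $[x,\widehat{x}]\times[\xi,\widehat{\xi}]\times[w,\widehat{w}]$ and the point $(z,N(z),v)$ inside it yields $\underline{\OF}^{\mathrm{o}}(x,\widehat{x},\xi,\widehat{\xi},w,\widehat{w})\le f(z,N(z),v)\le\overline{\OF}^{\mathrm{o}}(x,\widehat{x},\xi,\widehat{\xi},w,\widehat{w})$. The left-hand side is exactly $\underline{\OF}_{[y,\widehat{y}]}^{\mathrm{con}}(x,\widehat{x},w,\widehat{w})$ by definition, giving the lower bound.

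For the upper bound there is an asymmetry to handle carefully: the definition uses $\eta=\overline{\ON}_{[y,\widehat{y}]}(x,\widehat{x})=\widehat{\xi}$ and $\widehat{\eta}=\underline{\ON}_{[y,\widehat{y}]}(x,\widehat{x})=\xi$, so $\eta$ and $\widehat\eta$ are the same two numbers as $\xi,\widehat\xi$ but swapped. Since $\underline{\OF}^{\mathrm{o}}$ and $\overline{\OF}^{\mathrm{o}}$ together form an inclusion function on $\mathcal{T}^{2p}_{\ge 0}$ arguments, the relevant upper bound $\overline{\OF}^{\mathrm{o}}(x,\widehat{x},\eta,\widehat{\eta},w,\widehat{w})$ should be read with the understanding that $\overline{\OF}^{\mathrm{o}}$ is being evaluated on the box $[\widehat{\eta},\eta]=[\xi,\widehat\xi]$ (the ordered pair $(\widehat\eta,\eta)=(\xi,\widehat\xi)$ lies in $\mathcal{T}^{2p}_{\ge0}$); thus $\overline{\OF}_{[y,\widehat{y}]}^{\mathrm{con}}(x,\widehat{x},w,\widehat{w})=\overline{\OF}^{\mathrm{o}}(x,\widehat{x},\xi,\widehat\xi,w,\widehat{w})$, which by the bound already obtained is $\ge f(z,N(z),v)=f^{\mathrm{c}}(z,v)$. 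I would also record the routine check that the output lands in $\mathcal{T}^{2n}_{\ge0}$, i.e. $\underline{\OF}_{[y,\widehat{y}]}^{\mathrm{con}}\le\overline{\OF}_{[y,\widehat{y}]}^{\mathrm{con}}$, which is immediate from the chain of inequalities applied to any single point $z$.

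The main obstacle is purely notational rather than mathematical: keeping straight the role of the swapped pair $(\eta,\widehat\eta)$ versus $(\xi,\widehat\xi)$ and confirming that the open-loop inclusion function is always invoked on a correctly ordered box so that the hypotheses of Definition~\ref{def:embedding} genuinely apply. Once the convention for evaluating $\underline{\OF}^{\mathrm{o}},\overline{\OF}^{\mathrm{o}}$ on ordered argument pairs is pinned down, the proof is a two-line composition of the two inclusion properties, and the localization bookkeeping ($[x,\widehat{x}]\subseteq[y,\widehat{y}]$ propagating into the neural-network stage) is straightforward.
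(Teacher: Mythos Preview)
Your proposal is correct and follows essentially the same two-stage composition argument as the paper: first bound $N(z)\in[\xi,\widehat{\xi}]$ via the localized neural-network inclusion function, then apply the open-loop inclusion function $\OF^{\mathrm{o}}$ on the resulting box to sandwich $f(z,N(z),v)=f^{\mathrm{c}}(z,v)$. The paper's proof dispatches the upper bound with a bare ``similarly,'' whereas you explicitly flag and resolve the notational asymmetry in the swapped pair $(\eta,\widehat{\eta})=(\widehat{\xi},\xi)$; this is a welcome clarification but not a different argument.
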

\begin{proof}
Let $z\in [x,\widehat{x}]\subseteq [y,\widehat{y}]$ and $v\in [w,\widehat{w}]$. Since $\ON_{[y,\widehat{y}]} = \left[\begin{smallmatrix}\underline{\ON}_{[y,\widehat{y}]}\\ \overline{\ON}_{[y,\widehat{y}]}\end{smallmatrix}\right]$ is a $[y,\widehat{y}]$-localized inclusion function for the neural network controller $N$, we have $\underline{\ON}_{[y,\widehat{y}]}(x,\widehat{x}) \le N(z) \le \overline{\ON}_{[y,\widehat{y}]}(x,\widehat{x})$.
Therefore, we can compute 
$\underline{\OF}_{[y,\widehat{y}]}^{\mathrm{con}}(x,\widehat{x},{w},\widehat{w})
      = \underline{\OF}^{\mathrm{o}}(x,\widehat{x}, {\xi},\widehat{\xi},{w},\widehat{w})    \le f(z,N(z),v) = f^c(z,v),$ 
where the second inequality holds because $\OF^{\mathrm{o}}$ is an inclusion function for the open-loop vector field $f$. Similarly, one can show that $\overline{\OF}_{[y,\widehat{y}]}^{\mathrm{con}}(x,\widehat{x},{w},\widehat{w})\ge f^c(z,v)$. This implies that ${\OF}_{[y,\widehat{y}]}^{\mathrm{con}}$ is a $[y,\widehat{y}]$-localized inclusion function for $f^c$. 
\end{proof}

\subsection{Interaction-based approach}

The second approach, which we refer to as interaction-based approach, uses a Jacobian-based cornered inclusion function for the open-loop system and an affine neural network bound to capture the first order interaction between the neural network controller and the system.

 \begin{theorem}[%
 Closed-loop interaction-based inclusion function]\label{thm:jacobian-based}

Consider the closed-loop system~\eqref{eq:closedloop-c} and assume the open-loop vector field $f$ is continuously differentiable and the neural network controller~\eqref{eq:NN} satisfies affine bounds of the form~\eqref{eq:crown} on the interval $[y,\widehat{y}]$. Assume that, for every $z\in [x,\widehat{x}]\subseteq [y,\widehat{y}]$, every $\xi\in [u,\widehat{u}]$, and every $\eta\in [w,\widehat{w}]$,
\begin{align}\label{eq:boundsonf}
    D_z f(z,\xi,\eta) &\in [\underline{J}_{[x,\widehat{x}]},\overline{J}_{[x,\widehat{x}]}],\nonumber\\
    D_\xi f(z,\xi,\eta) &\in [\underline{J}_{[u,\widehat{u}]},\overline{J}_{[u,\widehat{u}]}], \nonumber\\ D_\eta f(z,\xi,\eta) &\in [\underline{J}_{[w,\widehat{w}]},\overline{J}_{[w,\widehat{w}]}].
    \end{align} 
 Then, the map $\OF_{[y,\widehat{y}]}^{\mathrm{act}}: \mathcal{T}^{2n}_{\ge 0}\times \mathcal{T}^{2q}_{\ge 0}\to \mathcal{T}^{2n}_{\ge 0}$ defined by
\begin{align}\label{eq:jac-inclusion}
\hspace{-0.4cm}\OF_{[y,\widehat{y}]}^{\mathrm{act}}(x,\widehat{x},w,\widehat{w}) 
 = \left[\begin{smallmatrix}[\underline{H}]^+ - \underline{J}_{[x,\widehat{x}]}  & [\underline{H}]^- \\ [\overline{H}]^- - \overline{J}_{[x,\widehat{x}]}  & [\overline{H}]^+\end{smallmatrix}\right]\left[\begin{smallmatrix}x\\ \widehat{x} \end{smallmatrix}\right] + L\left[\begin{smallmatrix}w\\ \widehat{w}\end{smallmatrix}\right] + Q,
\end{align} 
 is a $[y,\widehat{y}]\times \real^q$-localized inclusion function for the closed-loop vector field $f^c$, where
\begin{align*}
    \underline{H} &= \underline{J}_{[x,\widehat{x}]} + [\underline{J}_{[u,\widehat{u}]}]^+\underline{C}(y,\widehat{y}) +[\underline{J}_{[u,\widehat{u}]}]^-\overline{C}(y,\widehat{y})\\
    \overline{H} &= \overline{J}_{[x,\widehat{x}]} + [\overline{J}_{[u,\widehat{u}]}]^+\overline{C}(y,\widehat{y}) +[\overline{J}_{[u,\widehat{u}]}]^-\underline{C}(y,\widehat{y}), \\
    L & = \left[\begin{smallmatrix}-[\underline{J}_{[w,\widehat{w}]}]^- & [\underline{J}_{[w,\widehat{w}]}]^-\\ -[\overline{J}_{[w,\widehat{w}]}]^+ & [\overline{J}_{[w,\widehat{w}]}]^+\end{smallmatrix}\right], \\ Q & = \left[\begin{smallmatrix}  - \underline{J}_{[u,\widehat{u}]} u  +  [\underline{J}_{[u,\widehat{u}]}]^+\underline{d}(y,\widehat{y}) + [\underline{J}_{[u,\widehat{u}]}]^-\overline{d}(y,\widehat{y}) + f(x,u,w)\\ - \overline{J}_{[u,\widehat{u}]} u   +  [\overline{J}_{[u,\widehat{u}]}]^+\overline{d}(y,\widehat{y}) + [\overline{J}_{[u,\widehat{u}]}]^-\underline{d}(y,\widehat{y}) + f(x,u,w) \end{smallmatrix}\right]. 
\end{align*}   
 \end{theorem}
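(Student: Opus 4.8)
The plan is to reduce the statement to the two pointwise inequalities
$\underline{\OF}^{\mathrm{act}}_{[y,\widehat{y}]}(x,\widehat{x},w,\widehat{w})\le f^{\mathrm c}(z,v)\le \overline{\OF}^{\mathrm{act}}_{[y,\widehat{y}]}(x,\widehat{x},w,\widehat{w})$, required for every $z\in[x,\widehat{x}]\subseteq[y,\widehat{y}]$ and every $v\in[w,\widehat{w}]$, and to prove the upper one (the lower one is the mirror image). Mimicking the proof of Proposition~\ref{thm:jacobian-basedIF}, I would expand $f^{\mathrm c}(z,v)=f(z,N(z),v)$ by the integral form of the mean value theorem along the diagonal segment from the corner $(x,u,w)$ to $(z,N(z),v)$: since $f$ is continuously differentiable and $N(z)$ is merely a fixed point, $f(z,N(z),v)=f(x,u,w)+\int_0^1\big(D_zf\cdot(z-x)+D_\xi f\cdot(N(z)-u)+D_\eta f\cdot(v-w)\big)\,dt$, where each Jacobian block is evaluated at $(x+t(z-x),\,u+t(N(z)-u),\,w+t(v-w))$. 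The role of $[u,\widehat{u}]$ is exactly to guarantee that this segment stays inside $[x,\widehat{x}]\times[u,\widehat{u}]\times[w,\widehat{w}]$: taking $[u,\widehat{u}]$ to over-approximate $N([x,\widehat{x}])$ — e.g. $[u,\widehat{u}]=[\underline{\ON}_{[y,\widehat{y}]}(x,\widehat{x}),\overline{\ON}_{[y,\widehat{y}]}(x,\widehat{x})]$ from~\eqref{eq:crownif} — yields both $u\le N(z)$ and the inclusions needed for the Jacobian enclosures~\eqref{eq:boundsonf} to be valid uniformly along the integration path.

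Next I would bound the three integrands, using $z-x\ge\vect{0}_n$, $v-w\ge\vect{0}_q$, and $N(z)-u\ge\vect{0}_p$. From $D_zf\le\overline{J}_{[x,\widehat{x}]}$ entrywise and $z-x\ge\vect{0}_n$ the first term is at most $\overline{J}_{[x,\widehat{x}]}(z-x)$. For the disturbance term, $D_\eta f\le\overline{J}_{[w,\widehat{w}]}$ and $v-w\ge\vect{0}_q$ give at most $\overline{J}_{[w,\widehat{w}]}(v-w)$; splitting $\overline{J}_{[w,\widehat{w}]}=[\overline{J}_{[w,\widehat{w}]}]^++[\overline{J}_{[w,\widehat{w}]}]^-$ and using $w\le v\le\widehat{w}$ bounds this by $[\overline{J}_{[w,\widehat{w}]}]^+(\widehat{w}-w)$, which is the lower block-row of $L\big[\begin{smallmatrix}w\\\widehat{w}\end{smallmatrix}\big]$. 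The control term is the crux: $D_\xi f\le\overline{J}_{[u,\widehat{u}]}$ with $N(z)-u\ge\vect{0}_p$ gives at most $\overline{J}_{[u,\widehat{u}]}(N(z)-u)$; then I would split $\overline{J}_{[u,\widehat{u}]}$ into its nonnegative and nonpositive parts and substitute the affine CROWN bounds~\eqref{eq:crown} in the sign-matching direction, $N(z)\le\overline{C}(y,\widehat{y})z+\overline{d}(y,\widehat{y})$ against $[\overline{J}_{[u,\widehat{u}]}]^+$ and $N(z)\ge\underline{C}(y,\widehat{y})z+\underline{d}(y,\widehat{y})$ against $[\overline{J}_{[u,\widehat{u}]}]^-$, turning $N(z)$ into an affine function of $z$. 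Adding the $z$-coefficients from the first and third bounds produces precisely $\overline{H}=\overline{J}_{[x,\widehat{x}]}+[\overline{J}_{[u,\widehat{u}]}]^+\overline{C}(y,\widehat{y})+[\overline{J}_{[u,\widehat{u}]}]^-\underline{C}(y,\widehat{y})$, and the leftover constants collect into the lower block-row of $Q$, namely $-\overline{J}_{[u,\widehat{u}]}u+[\overline{J}_{[u,\widehat{u}]}]^+\overline{d}(y,\widehat{y})+[\overline{J}_{[u,\widehat{u}]}]^-\underline{d}(y,\widehat{y})+f(x,u,w)$.

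At that point we have $f^{\mathrm c}(z,v)\le\overline{H}z-\overline{J}_{[x,\widehat{x}]}x+[\overline{J}_{[w,\widehat{w}]}]^+(\widehat{w}-w)+\big(-\overline{J}_{[u,\widehat{u}]}u+[\overline{J}_{[u,\widehat{u}]}]^+\overline{d}(y,\widehat{y})+[\overline{J}_{[u,\widehat{u}]}]^-\underline{d}(y,\widehat{y})+f(x,u,w)\big)$, which is affine in $z\in[x,\widehat{x}]$; applying $z-x\ge\vect{0}_n$ and $\widehat{x}-z\ge\vect{0}_n$ once more via $\overline{H}z\le[\overline{H}]^+\widehat{x}+[\overline{H}]^-x$ yields the coefficient $[\overline{H}]^--\overline{J}_{[x,\widehat{x}]}$ on $x$ and $[\overline{H}]^+$ on $\widehat{x}$, i.e. exactly $\overline{\OF}^{\mathrm{act}}_{[y,\widehat{y}]}(x,\widehat{x},w,\widehat{w})$ as in~\eqref{eq:jac-inclusion}. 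The lower inequality follows by the symmetric computation: replace every $\overline{J}$ by the corresponding $\underline{J}$, reverse each matrix and vector inequality, and substitute the CROWN bounds with the opposite sign convention, which reproduces $\underline{H}$ together with the upper block-rows of $L$ and $Q$. The $[y,\widehat{y}]\times\real^q$-localization needs no separate argument, since $[y,\widehat{y}]$ enters only through the validity of~\eqref{eq:crown}, which holds for $[x,\widehat{x}]\subseteq[y,\widehat{y}]$.

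I expect the main obstacle to be the sign bookkeeping rather than any analytic difficulty: one chains two successive $[\,\cdot\,]^+/[\,\cdot\,]^-$ decompositions — first of $\overline{J}_{[u,\widehat{u}]}$, which dictates whether $N(z)$ is replaced by its upper or its lower affine bound, and then of $\overline{H}$, which dictates whether $z$ is replaced by $\widehat{x}$ or $x$ — and must verify that their composition gives exactly the asserted block-row $\big[\begin{smallmatrix}[\overline{H}]^--\overline{J}_{[x,\widehat{x}]}&[\overline{H}]^+\end{smallmatrix}\big]$ and not something like $[\overline{H}-\overline{J}_{[x,\widehat{x}]}]^{\pm}$. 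A second point worth stating explicitly is the standing requirement that $[u,\widehat{u}]\supseteq N([x,\widehat{x}])$, which is what licenses using one fixed Jacobian enclosure along the whole Taylor segment and also supplies the inequality $N(z)-u\ge\vect{0}_p$ used repeatedly above.
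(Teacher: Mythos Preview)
Your proposal is correct and follows essentially the same approach as the paper: Taylor-expand $f(z,N(z),v)$ from the corner $(x,u,w)$, bound each Jacobian block by~\eqref{eq:boundsonf}, insert the CROWN bounds for $N(z)$ according to the sign of $[\overline J_{[u,\widehat u]}]^{\pm}$, and finally split $\overline H z$ via $[\overline H]^{\pm}$. The only technical variation is your integration path: you integrate along the diagonal segment from $(x,u,w)$ to $(z,N(z),v)$, whereas the paper uses a staircase path, writing $f(z,N(z),\eta)=f(x,u,w)+M(z-x)+P(N(z)-u)+R(\eta-w)$ with $M=\int_0^1 D_x f(\tau z+(1-\tau)x,u,w)\,d\tau$, $P=\int_0^1 D_u f(z,\tau N(z)+(1-\tau)u,w)\,d\tau$, $R=\int_0^1 D_w f(z,N(z),\tau\eta+(1-\tau)w)\,d\tau$; since the enclosures~\eqref{eq:boundsonf} are uniform on the full box $[x,\widehat x]\times[u,\widehat u]\times[w,\widehat w]$, both paths yield the same bounds. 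Your explicit remark that one needs $[u,\widehat u]\supseteq N([x,\widehat x])$ is a useful clarification that the paper leaves implicit.
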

\begin{proof}
Let $z\in [x,\widehat{x}]$ and $\eta\in [w,\widehat{w}]$. Since $f$ is continuously differentiable, the fundamental theorem of Calculus gives
    \begin{align*}
        f^c(z,\eta) &= f(z,N(z),\eta) = f(x,u,w) + M(z,u,w) (z-x) \\ &  + P(z,u,w) (N(z)-u) + R(z,\eta,w)(\eta-w), 
    \end{align*}
    where $M(z,u,w) = \int_{0}^{1} D_xf (\tau z +(1-\tau)x,u,w)d\tau$, $P(z,u,w) = \int_{0}^{1} D_uf (z,\tau N(z) + (1-\tau) u,w)d\tau$, and $R(z,\eta,w) = \int_{0}^{1} D_wf (z,N(z),\tau \eta + (1-\tau)w)d\tau$.     
Using the bounds~\eqref{eq:boundsonf}, we get $M(z,u,w)  \ge  \underline{J}_{[x,\widehat{x}]}$, $P(z,u,w)  \ge \underline{J}_{[u,\widehat{u}]}$, and $R(z,\eta,w) \ge \underline{J}_{[w,\widehat{w}]}$. Moreover, we have $N(z)-u\ge \vect{0}_p$. Therefore, using the affine bounds~\eqref{eq:crown} for the neural network,
    $P(z,u,w)(N(z)-u)  \ge [P(z,u,w)]^+(\underline{C} (y,\widehat{y}) z +\underline{d}(y,\widehat{y})-u)   +[P(z,u,w)]^-(\overline{C}(y,\widehat{y})z + \overline{d}(y,\widehat{y})-u).$ 
    Using the fact that $z-x\ge \vect{0}_n$ and $
    \eta-w\ge \vect{0}_q$, we have 
     \begin{align*}
        f^c(z,\eta) & \ge f(x,u,w) + M(z,u,w) (z-x) \\ & + 
        [P(z,u,w)]^+(\underline{C}(y,\widehat{y})z + \overline{d}(y,\widehat{y})-u)\\ & + [P(z,u,w)]^{-} (\overline{C}(y,\widehat{y})z + \overline{d}(y,\widehat{y})-u) \\ & + R(z,\eta,w)(\eta-w)
        \\& \ge 
        \underline{H}z -\underline{J}_{[x,\widehat{x}]} x +  \underline{J}_{[w,\widehat{w}]} \eta - \underline{J}_{[w,\widehat{w}]} w \\ & + [\underline{J}_{[w,\widehat{w}]}]^{+}\underline{d}(y,\widehat{y}) +[\underline{J}_{[w,\widehat{w}]}]^{-}\overline{d}(y,\widehat{y}) +  f(x,u,w),
    \end{align*}
    where the second inequality follows from the Jacobian bounds. Now, we note that $x\le z\le \widehat{x}$ and $w\le \eta \le \widehat{w}$ and therefore
      $  \underline{H}z \ge [\underline{H}]^+ x + [\underline{H}]^+ \widehat{x}$ and $
    \underline{J}_{[w,\widehat{w}]}\eta \ge [\underline{J}_{[w,\widehat{w}]}]^+ w + [\underline{J}_{[w,\widehat{w}]}]^- \widehat{w}.$ 
    As a result, we get 
    \begin{align*}
         f^c(z,\eta) &\ge [\underline{H}]^{+}x + [\underline{H}]^{-}\widehat{x} - \underline{J}_{[x,\widehat{x}]} x - [\underline{J}_{[w,\widehat{w}]}]^{-} w \\ & + [\underline{J}_{[w,\widehat{w}]}]^-\widehat{w}  +   [\underline{J}_{[w,\widehat{w}]}]^{+}\underline{d}(y,\widehat{y}) +[\underline{J}_{[w,\widehat{w}]}]^{-}\overline{d}(y,\widehat{y}) \\ & +  f(x,u,w).
    \end{align*}
    This implies that $f^c(z,\eta) \ge \underline{\OF}_{[y,\widehat{y}]}^{\mathrm{con}}(x,\widehat{x},w,\widehat{w})$. Similarly, one can show that $f^c(z,\eta) \le \overline{\OF}_{[y,\widehat{y}]}^{\mathrm{con}}(x,\widehat{x},w,\widehat{w})$. %
\end{proof}

\begin{remark}\label{rem:important}
\;
\begin{enumerate}
\item \textit{(Jacobian-based cornered inclusion function):} Theorem~\ref{thm:jacobian-based} provides a localized inclusion function for the closed-loop vector field $f^c$ using the Jacobian-based cornered inclusion function (as in Proposition~\ref{thm:jacobian-basedIF}) for the corner point $(x,u,w)$. First, one can obtain different but similar localized inclusion functions using Taylor expansion of $f$ around other corner points such as $(\widehat{x},{u},w)$, $(\widehat{x},\widehat{u},w)$, etc. Our numerical results show that, in some cases, considering a few corner points and combining the resulting localized inclusion functions using Proposition \ref{prop:intersect} significantly reduces conservatism at low computational expense.
Second, as it is clear from the proof of Theorem~\ref{thm:jacobian-based}, the Jacobian-based cornered inclusion function is particularly well-suited for integrating the neural network bounds into the closed-loop inclusion function and for studying the interaction between the system and the neural network controller. 

\item\label{p2:important} \textit{(Mixed Jacobian-based inclusion functions):} Theorem~\ref{thm:jacobian-based} uses the Jacobian-based cornered inclusion function (as in Proposition~\ref{thm:jacobian-basedIF}) for the open-loop vector field $f$. Alternatively, one can use the mixed Jacobian-based cornered inclusion function (as in Proposition~\ref{thm:mixed}) for the open-loop vector $f$ and obtain a different class of inclusion functions for $f^c$. This class of closed-loop inclusion functions can be obtained from~\eqref{eq:jac-inclusion} by replacing $\underline{J}_{[x,\widehat{x}]},\overline{J}_{[x,\widehat{x}]},\underline{J}_{[u,\widehat{u}]},\overline{J}_{[u,\widehat{u}]},\underline{J}_{[w,\widehat{w}]}$, and $\overline{J}_{[w,\widehat{w}]}$ with their counterparts $\underline{M}_{[x,\widehat{x}]},\overline{M}_{[x,\widehat{x}]},\underline{M}_{[u,\widehat{u}]},\overline{M}_{[u,\widehat{u}]},\underline{M}_{[w,\widehat{w}]}$, and $\overline{M}_{[w,\widehat{w}]}$ as defined in Proposition~\ref{thm:mixed} and is demonstrated in the numerical experiments below. Our numerical results show that, in most cases, mixed Jacobian-based cornered inclusion functions significantly reduces the conservatism of the over-approximation as compared to their non-mixed counterparts.

\item \textit{(Comparison):} The interconnection-based approach provides a general and computationally efficient framework for constructing closed-loop inclusion functions. This approach can be applied using any inclusion function for the open-loop system and any inclusion function of the form~\eqref{eq:NNverifier} for the neural network. The computational efficiency of the interconnection-based approach is a direct consequence of its compositional structure. %
On the other hand, the interaction-based approach fully captures the first-order interaction between the system and the neural networks using a Jacobian-based decomposition. However, this approach requires affine bounds of the form~\eqref{eq:crown} for the neural network and  Jacobian bounds of the form~\eqref{eq:boundsonf} for the open-loop system. As a result, computing the closed-loop inclusion functions using the interaction-based approach is generally more computationally expensive.
\end{enumerate}

\end{remark}

When the open-loop  system~\eqref{eq:plant-c} is linear, one can significantly simplify the expression for the closed-loop inclusion functions obtained from the interconnection-based and the interaction-based approaches. 

\begin{corollary}[Linear open-loop systems]\label{thm:linear}
Consider the closed-loop system~\eqref{eq:closedloop-c} and assume that the open-loop vector field $f$ is linear with the form $ f(x,u,w) = A x + B u + D w$, where $A\in \real^{n\times n}$, $B\in \real^{n\times p}$ and $D\in \real^{n\times q}$.  Then the following statements hold:
     \begin{enumerate}
         \item\label{p1:linear} using the minimal inclusion function for the open-loop vector field $f$ and affine neural network inclusion functions of the form~\eqref{eq:crownif}, the closed-loop inclusion function $\OF^{\mathrm{con}}_{[y,\widehat{y}]}$ defined in~\eqref{eq:int-inclusion} is given by:
         {\color{black}\begin{align}\label{eq:linear-inclusion-int}
            \hspace{-0.4cm} \OF_{[y,\widehat{y}]}^{\mathrm{con}} &= \left[\begin{smallmatrix} [A]^+ + \left[[B]^{+}\underline{C} + [B]^-\overline{C}\right]^{+} & [A]^- + \left[[B]^{+}\underline{C} + [B]^-\overline{C}\right]^{-} \nonumber \\  [A]^- + \left[[B]^{+}\overline{C} + [B]^-\underline{C}\right]^{-}& [A]^+ + \left[[B]^{+}\overline{C} + [B]^-\underline{C}\right]^{+} \end{smallmatrix}\right]\left[\begin{smallmatrix} x \\ \widehat{x} \end{smallmatrix}\right] \\ & + \left[\begin{smallmatrix} [D]^+ & [D]^- \\ [D]^- & [D]^+ \end{smallmatrix}\right]\left[\begin{smallmatrix} w \\ \widehat{w} \end{smallmatrix}\right] + \left[\begin{smallmatrix} [B]^+\underline{d} + [B]^-\overline{d} \\  [B]^-\underline{d} + [B]^+\overline{d} \end{smallmatrix}\right].
         \end{align}}

         \item\label{p2:linear} the inclusion function $\OF_{[y,\widehat{y}]}^{\mathrm{act}}$ defined in~\eqref{eq:jac-inclusion} is given by
{\color{black}\begin{align}\label{eq:linear-inclusion-jac}
             \OF_{[y,\widehat{y}]}^{\mathrm{act}} &= \left[\begin{smallmatrix} [A+ B^{+}\underline{C} + B^-\overline{C}]^{+} & [A+ B^{+}\underline{C} + B^-\overline{C}]^{-} \\  [A+ B^{+}\overline{C} + B^-\underline{C}]^{-}& [A+ B^{+}\overline{C} + B^-\underline{C}]^{+} \end{smallmatrix}\right]\left[\begin{smallmatrix} x \\ \widehat{x} \end{smallmatrix}\right] \nonumber\\ & + \left[\begin{smallmatrix} [D]^+ & [D]^- \\ [D]^- & [D]^+\end{smallmatrix}\right]\left[\begin{smallmatrix} w \\ \widehat{w} \end{smallmatrix}\right] + \left[\begin{smallmatrix} [B]^+\underline{d} + [B]^-\overline{d} \\  [B]^-\underline{d} + [B]^+\overline{d} \end{smallmatrix}\right].
         \end{align}}
         \vspace{-1em}
     \end{enumerate} 
\end{corollary}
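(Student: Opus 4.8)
The strategy is straightforward specialization: both claims follow by substituting the explicit linear data $f(x,u,w)=Ax+Bu+Dw$ into the general inclusion-function formulas~\eqref{eq:int-inclusion} and~\eqref{eq:jac-inclusion}, then simplifying. The two parts differ only in which open-loop inclusion function is used, so I would treat them in parallel, isolating the single point where they diverge.

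For part~\ref{p1:linear}, I first note that for a linear vector field the minimal inclusion function is computed coordinatewise via~\eqref{eq:tight}: since each $f_i$ is affine, the extrema over a box are attained at vertices determined by the signs of the coefficients, giving $\underline{\OF}^{\mathrm o}(x,\widehat x,\xi,\widehat\xi,w,\widehat w) = [A]^+x + [A]^-\widehat x + [B]^+\xi + [B]^-\widehat\xi + [D]^+w + [D]^-\widehat w$ and the mirror expression for $\overline{\OF}^{\mathrm o}$. Then I substitute $\xi = \underline{\ON}_{[y,\widehat y]}(x,\widehat x)$ and $\widehat\xi = \overline{\ON}_{[y,\widehat y]}(x,\widehat x)$ from~\eqref{eq:int-inclusion}, using the CROWN-based affine inclusion function~\eqref{eq:crownif}, namely $\underline{\ON}_{[y,\widehat y]}(x,\widehat x) = [\underline C]^+x + [\underline C]^-\widehat x + \underline d$ and $\overline{\ON}_{[y,\widehat y]}(x,\widehat x) = [\overline C]^+\widehat x + [\overline C]^-x + \overline d$. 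Collecting the $x$, $\widehat x$, $w$, $\widehat w$, and constant terms and using the identity $[B]^+[\underline C]^+ + [B]^-[\overline C]^- = [\,[B]^+\underline C + [B]^-\overline C\,]^+$ (and the analogous one for the $\widehat x$ coefficient), I recover~\eqref{eq:linear-inclusion-int}. The disturbance block $\left[\begin{smallmatrix} D^+ & D^- \\ D^- & D^+\end{smallmatrix}\right]$ and the constant block $\left[\begin{smallmatrix} B^+\underline d + B^-\overline d \\ B^-\underline d + B^+\overline d\end{smallmatrix}\right]$ drop out immediately from the substitution.

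For part~\ref{p2:linear}, I instead start from the interaction-based formula~\eqref{eq:jac-inclusion}. For a linear $f$ the Jacobian blocks are constant: $\underline J_{[x,\widehat x]} = \overline J_{[x,\widehat x]} = A$, $\underline J_{[u,\widehat u]} = \overline J_{[u,\widehat u]} = B$, $\underline J_{[w,\widehat w]} = \overline J_{[w,\widehat w]} = D$. Substituting these, the $\underline H$ and $\overline H$ matrices collapse to $\underline H = A + [B]^+\underline C + [B]^-\overline C$ and $\overline H = A + [B]^+\overline C + [B]^-\underline C$; the term $[\underline H]^+ - \underline J_{[x,\widehat x]}$ in~\eqref{eq:jac-inclusion} becomes $[A + B^+\underline C + B^-\overline C]^+ - A$, but I must check that the $-A$ contribution recombines — actually since $f(x,u,w) = Ax+Bu+Dw$ the constant vector $Q$ contains $f(x,u,w)$ which supplies an $Ax$ term, and $-\underline J_{[u,\widehat u]}u + \ldots = -Bu + B^+\underline d + B^-\overline d + Ax + Bu + Dw = Ax + Dw + B^+\underline d + B^-\overline d$; combining the $Ax$ from $Q$ with the $-Ax$ from the coefficient matrix recovers exactly the clean coefficient $[A+B^+\underline C + B^-\overline C]^+$ on $x$. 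The $L$ block reduces to $\left[\begin{smallmatrix} D^+ & D^- \\ D^- & D^+\end{smallmatrix}\right]$ since $[D]^- $ appears as $-[D]^-$ with sign flips, and the constant reduces to $\left[\begin{smallmatrix} B^+\underline d + B^-\overline d \\ B^-\underline d + B^+\overline d\end{smallmatrix}\right]$, yielding~\eqref{eq:linear-inclusion-jac}.

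The main obstacle — really the only delicate point — is the bookkeeping of the affine constant terms: in~\eqref{eq:jac-inclusion} the matrix $Q$ carries $f(x,u,w) = Ax + Bu + Dw$ as well as the $-\underline J_{[u,\widehat u]}u$ correction, and one has to verify carefully that the $x$-dependent and $w$-dependent pieces hidden inside $Q$ recombine with the explicit coefficient blocks so that the final $x$-coefficient is genuinely $[A+B^+\underline C + B^-\overline C]^+$ with no residual $\pm A$, and that the $u$-dependent terms cancel entirely (they should, since $u$ is not a free variable of $f^c$). The comparison remark after the statement — that~\eqref{eq:linear-inclusion-int} and~\eqref{eq:linear-inclusion-jac} differ precisely in whether $[A]^{\pm}$ is split off from $[B^+\underline C + B^-\overline C]^{\pm}$ or kept together inside a single $[\,\cdot\,]^{\pm}$ — then follows by inspection, and shows transparently why the interaction-based bound is no looser (by subadditivity of $[\cdot]^+$ on the relevant blocks). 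I would present the computation for part~\ref{p2:linear} in full and remark that part~\ref{p1:linear} is entirely analogous.
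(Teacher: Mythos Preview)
Your overall strategy --- plug the linear data into~\eqref{eq:int-inclusion} and~\eqref{eq:jac-inclusion} and collect terms --- is exactly the paper's, and your handling of part~\ref{p2:linear} (tracking how the $Ax+Dw$ hidden inside $Q$ recombines with the explicit coefficient blocks, and how the $\pm Bu$ terms cancel) is correct and in fact more detailed than the paper, which simply lists $\underline H,\overline H,L,Q$ and asserts the result.

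There is, however, one genuine gap in your treatment of part~\ref{p1:linear}. The identity you invoke,
\[
B^+[\underline C]^+ + B^-[\overline C]^- \;=\; \bigl[B^+\underline C + B^-\overline C\bigr]^+,
\]
is \emph{false} in general. Set $P := B^+[\underline C]^+ + B^-[\overline C]^-\ge 0$ and $N := B^+[\underline C]^- + B^-[\overline C]^+\le 0$; then $P+N = B^+\underline C + B^-\overline C$, but $P$ equals the elementwise positive part of $P+N$ only when $P$ and $N$ have disjoint supports. For a one-line counterexample take $B=[1\ \ -1]$, $\underline C=(1,-1)^\top$, $\overline C=(-1,1)^\top$: then $P=1$ and $N=-1$, so the left side is $1$ while the right side is $[0]^+=0$. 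Direct substitution into~\eqref{eq:int-inclusion} therefore yields the $x$-coefficient $[A]^+ + B^+[\underline C]^+ + B^-[\overline C]^-$ (and analogously for $\widehat x$), not the single-bracket form displayed in~\eqref{eq:linear-inclusion-int}. The paper's own proof does not carry out this algebra --- it simply says ``replace \ldots\ to obtain the result'' --- so the discrepancy you have surfaced sits in the stated formula rather than in your method. Both expressions are legitimate $[y,\widehat y]$-localized inclusion functions for $f^c$, since they correspond to two different nonnegative/nonpositive splittings of the same matrix $B^+\underline C + B^-\overline C$; the paper's canonical splitting is in fact the tighter of the two, but it is not what literal substitution into~\eqref{eq:int-inclusion} with~\eqref{eq:crownif} produces.
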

    \begin{proof}
         Regarding part~\ref{p1:linear}, it is easy to see that the minimal inclusion function of the linear open-loop vector field $f$ is
         \begin{align*}
             \OF^{\mathrm{o}}(x,\widehat{x},u,\widehat{u},w,\widehat{w}) & = \left[\begin{smallmatrix} [A]^+ & [A]^-\\  [A]^-& [A]^{+}\end{smallmatrix}\right] 
\left[\begin{smallmatrix} x\\ \widehat{x}\end{smallmatrix}\right] + \left[\begin{smallmatrix} [B]^+ & [B]^-\\  [B]^-& [B]^{+}\end{smallmatrix}\right] 
\left[\begin{smallmatrix} u\\ \widehat{u}\end{smallmatrix}\right] \\ & + \left[\begin{smallmatrix} [D]^+ & [D]^-\\  [D]^-& [D]^{+}\end{smallmatrix}\right] 
\left[\begin{smallmatrix} w\\ \widehat{w}\end{smallmatrix}\right].    
         \end{align*}
         One can then replace the above minimal inclusion function and the affine bounds~\eqref{eq:crownif} into equation~\eqref{eq:int-inclusion} and use Theorem~\ref{thm:cl-comp} to obtain the result. Regarding part~\ref{p2:linear}, we use the bounds $\underline{J}_{[x,\widehat{x}]} = \overline{J}_{[x,\widehat{x}]} = A$, $\underline{J}_{[u,\widehat{u}]} = \overline{J}_{[u,\widehat{u}]} = B$, and $\underline{J}_{[w,\widehat{w}]} = \overline{J}_{[w,\widehat{w}]} = D$ in Theorem~\ref{thm:jacobian-based}. Thus, using the notation of Theorem~\ref{thm:jacobian-based}, we compute 
         \begin{align*}
             \underline{H} &= A + [B]^+\underline{C}(y,\widehat{y}) + [B]^- \overline{C}(y,\widehat{y}),\\ 
             \overline{H} &=  A + [B]^+\overline{C}(y,\widehat{y}) + [B]^- \underline{C}(y,\widehat{y}), \\
             L &= \left[\begin{smallmatrix}  -[D]^- & [D]^- \\
             -[D]^+ & [D]^+\end{smallmatrix}\right],\qquad  
             Q = \left[\begin{smallmatrix}
             Ax + D w  + [B]^+\underline{d}(y,\widehat{y}) + [B]^- \overline{d}(y,\widehat{y})\\
             Ax + D w + [B]^+\overline{d}(y,\widehat{y}) + [B]^- \underline{d}(y,\widehat{y})
             \end{smallmatrix}\right].
         \end{align*}
         The result then follows by replacing the above terms into equation~\eqref{eq:jac-inclusion} and using Theorem~\ref{thm:jacobian-based}.
     \end{proof}

\begin{remark}[The role of interactions] The closed-form expressions~\eqref{eq:linear-inclusion-int} and~\eqref{eq:linear-inclusion-jac} demonstrate the difference between the interconnection-based and the interaction-based approach. In both cases, one can interpret the term $A$ as the effect of open-loop dynamics and the terms $B^{+}\underline{C} + B^{-}\overline{C}$ and $B^{+}\overline{C} + B^{-}\underline{C}$ as the effect of the neural network controller. The interconnection-based approach considers the interconnection of the cooperative and competitive effect of the open-loop system and the neural network controller, separately. This leads to the terms  $[A]^{\pm} + [B^{+}\underline{C} + B^{-}\overline{C}]^{\pm}$ and  $[A]^{\pm} + [B^{+}\overline{C} + B^{-}\underline{C}]^{\pm}$. The interaction-based approach \change{instead} considers the interaction between the open-loop system and the neural network controller via the terms $A + B^{+}\underline{C} + B^{-}\overline{C}$ and $A+B^{+}\overline{C} + B^{-}\underline{C}$ and then the cooperative and competitive effect of these interactions are separated. 
\end{remark}

\subsection{Interval Reachability of Closed-loop Systems}
Our culminating conclusion is that, by using the interconnection-based and the interaction-based inclusion functions, 
we obtain computationally efficient over-approximations of reachable sets of the closed-loop system. %

\begin{theorem}[Reachability of closed-loop system]\label{thm:cl-comp-c}
Let $\mathrm{c}\in \{\mathrm{con},\mathrm{act}\}$, the disturbance set $\mathcal{W}=[\underline{w},\overline{w}]$, and the initial set $\mathcal{X}_0=[\underline{x}_0,\overline{x}_0]$. For the closed-loop dynamical system~\eqref{eq:closedloop-c} with $t\mapsto \left[\begin{smallmatrix}\underline{x}^{\mathrm{c}}(t)\\ \overline{x}^{\mathrm{c}}(t)\end{smallmatrix}\right]$ being the trajectory of the embedding system
\begin{align}\label{eq:ct-good}
\dot{x}_i & = \left(\underline{\OF}_{[x,\widehat{x}]}^{\mathrm{c}}(x,\widehat{x}_{[i:x]},\ulw,\olw)\right)_i,\nonumber\\
\dot{\widehat{x}}_i & =\left(\overline{\OF}_{[x,\widehat{x}]}^{\mathrm{c}}(x_{[i:\widehat{x}]},\widehat{x},\olw,\ulw)\right)_i,
\end{align}
for every $i\in\{1,\ldots,n\}$, starting from $\left[\begin{smallmatrix}\underline{x}_0\\ \overline{x}_0\end{smallmatrix}\right]$, the following statements hold\change{, for every $t\ge 0$}:
 \begin{enumerate}
     \item\label{p1:cl-comp-c} $\mathcal{R}_{f^{\mathrm{c}}}(t,[\underline{x}_0,\overline{x}_0], [\underline{w},\overline{w}])\subseteq [\underline{x}^{\mathrm{c}}(t),\overline{x}^{\mathrm{c}}(t)]$;
     \item\label{p2:cl-comp-c} moreover, if the open-loop system~\eqref{eq:plant-c} is linear, then
     \begin{align*}
[\underline{x}^{\mathrm{act}}(t),\overline{x}^{\mathrm{act}}(t)]\subseteq [\underline{x}^{\mathrm{con}}(t),\overline{x}^{\mathrm{con}}(t)]. 
\end{align*}
 \end{enumerate}

\end{theorem}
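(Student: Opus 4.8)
The plan is to handle the two parts separately: part~\ref{p1:cl-comp-c} follows by assembling results already proved, while part~\ref{p2:cl-comp-c} combines an algebraic comparison of the explicit linear inclusion functions with the monotone comparison Lemma, exactly as in the proof of Proposition~\ref{thm:reach-c}.

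For part~\ref{p1:cl-comp-c}, Theorem~\ref{thm:cl-comp} (taking $\mathrm{c}=\mathrm{con}$) and Theorem~\ref{thm:jacobian-based} (taking $\mathrm{c}=\mathrm{act}$) guarantee that for every box $[x,\widehat{x}]$ the map $\OF^{\mathrm{c}}_{[x,\widehat{x}]}$ is a $[x,\widehat{x}]\times\real^q$-localized inclusion function for $f^{\mathrm{c}}$. System~\eqref{eq:ct-good} is then the embedding system~\eqref{eq:ct-embedingsystem} built from this family, with the localization interval at time $t$ being the current box $[\underline{x}^{\mathrm{c}}(t),\overline{x}^{\mathrm{c}}(t)]$. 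The claim therefore follows from the localized extension of Proposition~\ref{thm:reach-c} in Remark~\ref{rem:localized}\ref{p1:localized}, whose hypothesis $[\underline{x}^{\mathrm{c}}(t),\overline{x}^{\mathrm{c}}(t)]\subseteq[y,\widehat{y}]$ is satisfied vacuously since $[y,\widehat{y}]$ is the box itself. In detail, I would rerun the first-crossing argument of Proposition~\ref{thm:reach-c}: if a closed-loop trajectory $\phi_{f^{\mathrm{c}}}(\cdot,z_0,v)$ with $z_0\in[\underline{x}_0,\overline{x}_0]$ and $v(\cdot)$ valued in $[\underline{w},\overline{w}]$ first left the box at time $\tau$ through, say, the $i$th lower face, then $\phi_{f^{\mathrm{c}}}(\tau,z_0,v)$ lies in $[\underline{x}^{\mathrm{c}}(\tau),\overline{x}^{\mathrm{c}}(\tau)_{[i:\underline{x}^{\mathrm{c}}(\tau)]}]\subseteq[\underline{x}^{\mathrm{c}}(\tau),\overline{x}^{\mathrm{c}}(\tau)]$, so the localized inclusion property gives $\dot{\underline{x}}^{\mathrm{c}}_i(\tau)\le f^{\mathrm{c}}_i(\phi_{f^{\mathrm{c}}}(\tau,z_0,v),v(\tau))$, contradicting the crossing; the upper faces are symmetric.

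For part~\ref{p2:cl-comp-c}, I would start from the closed forms~\eqref{eq:linear-inclusion-int} and~\eqref{eq:linear-inclusion-jac}. Setting $P:=B^+\underline{C}+B^-\overline{C}$ and $R:=B^+\overline{C}+B^-\underline{C}$, the two closed-loop inclusion functions share their disturbance blocks and affine offsets and differ only in the state block, where $\OF^{\mathrm{con}}_{[y,\widehat{y}]}$ uses $[A]^{\pm}+[P]^{\pm}$ (resp.\ $[A]^{\mp}+[R]^{\mp}$) and $\OF^{\mathrm{act}}_{[y,\widehat{y}]}$ uses $[A+P]^{\pm}$ (resp.\ $[A+R]^{\mp}$). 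The one elementary ingredient is the superadditivity of the negative part of matrices under addition, $[M+N]^-\ge[M]^-+[N]^-$ entrywise; it makes $\Delta:=[A+P]^--[A]^--[P]^-$ entrywise nonnegative, and since $\widehat{x}-x\ge 0$ this yields $([A]^++[P]^+)x+([A]^-+[P]^-)\widehat{x}=[A+P]^+x+[A+P]^-\widehat{x}-\Delta(\widehat{x}-x)\le[A+P]^+x+[A+P]^-\widehat{x}$, and the analogous manipulation with $R$ gives the reverse inequality for the upper parts. Hence $\OF^{\mathrm{con}}_{[y,\widehat{y}]}(x,\widehat{x},w,\widehat{w})\le_{\mathrm{SE}}\OF^{\mathrm{act}}_{[y,\widehat{y}]}(x,\widehat{x},w,\widehat{w})$ for every $x\le\widehat{x}$, so the $\mathrm{act}$ embedding vector field~\eqref{eq:ct-good} dominates the $\mathrm{con}$ one in the southeast order pointwise (the substituted arguments $\widehat{x}_{[i:x]}$, $x_{[i:\widehat{x}]}$ still satisfy the ordering required above).

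To conclude, I would argue as in the second half of Proposition~\ref{thm:reach-c}'s proof: the affine map $\OF^{\mathrm{con}}_{[y,\widehat{y}]}$ has the nonnegative/nonpositive block structure that makes it a monotone inclusion function, so the $\mathrm{con}$ embedding system~\eqref{eq:ct-good} is monotone with respect to $\le_{\mathrm{SE}}$; then $t\mapsto[\underline{x}^{\mathrm{act}}(t);\overline{x}^{\mathrm{act}}(t)]$ is a super-solution of the $\mathrm{con}$ embedding system with matching initial data, and the monotone comparison Lemma~\cite[Theorem~3.8.1]{ANM-LH-DL:08-new} gives $[\underline{x}^{\mathrm{con}}(t);\overline{x}^{\mathrm{con}}(t)]\le_{\mathrm{SE}}[\underline{x}^{\mathrm{act}}(t);\overline{x}^{\mathrm{act}}(t)]$, i.e.\ the desired inclusion $[\underline{x}^{\mathrm{act}}(t),\overline{x}^{\mathrm{act}}(t)]\subseteq[\underline{x}^{\mathrm{con}}(t),\overline{x}^{\mathrm{con}}(t)]$. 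I expect the main obstacle to be the bookkeeping around the evolving localization interval: because~\eqref{eq:ct-good} localizes at the current box, the neural-network bounds $\underline{C},\overline{C},\underline{d},\overline{d}$ are state-dependent, so the pointwise domination and the monotonicity of the $\mathrm{con}$ system must be invoked with the two trajectories evaluated under a common localization (equivalently, one may fix a single $[y,\widehat{y}]$ enclosing all reachable states, which renders both embedding systems linear time-invariant and the comparison transparent); reconciling this with the definition~\eqref{eq:ct-good}, together with the sign bookkeeping in the $[\cdot]^{\pm}$ algebra, is the delicate point.
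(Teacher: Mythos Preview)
Your proposal is correct and follows essentially the same route as the paper: part~\ref{p1:cl-comp-c} via Theorems~\ref{thm:cl-comp} and~\ref{thm:jacobian-based} together with the localized version of Proposition~\ref{thm:reach-c} (Remark~\ref{rem:localized}\ref{p1:localized}), and part~\ref{p2:cl-comp-c} via the explicit linear forms of Corollary~\ref{thm:linear}, the elementary superadditivity of $[\cdot]^-$, monotonicity of the embedding systems, and the monotone comparison Lemma. The only cosmetic difference is that the paper carries out the comparison at the embedding-system level using the Metzler/non-Metzler splitting rather than at the inclusion-function level with $[\cdot]^{\pm}$; your flagged concern about the state-dependent localization is real but is glossed over in the paper's proof as well.
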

\begin{proof}
Regarding part~\ref{p1:cl-comp-c}, using  Theorem~\ref{thm:cl-comp} and Theorem~\ref{thm:jacobian-based}, we show that $\OF^{\mathrm{con}}_{[y,\widehat{y}]}$ and $\OF^{\mathrm{act}}_{[y,\widehat{y}]}$ are $[y,\widehat{y}]\times \real^q$-localized inclusion functions for the closed-loop vector field $f^c$. Thus, one can use the localized version of Proposition~\ref{thm:reach-c} (see Remark~\ref{rem:localized}\ref{p1:localized}) with $y=x$ and $\widehat{y}=\widehat{x}$ to obtain the result. Regarding part~\ref{p2:cl-comp-c}, for linear open-loop systems, using Theorem~\ref{thm:linear}, the embedding system associated to the interconnection-based inclusion function has the following form:
{\color{black}\begin{align}\label{eq:embedding-linear-int}
    \tfrac{d}{dt}\left[\begin{smallmatrix} x \\ \widehat{x} \end{smallmatrix}\right] &= \left[\begin{smallmatrix} [A]^{\mathrm{M}} + [B^{+}\underline{C} + B^-\overline{C}]^{\mathrm{M}} & [A]^{\mathrm{nM}} + [B^{+}\underline{C} + B^-\overline{C}]^{\mathrm{nM}} \\  [A]^{\mathrm{nM}} + [B^{+}\underline{C} + B^-\overline{C}]^{\mathrm{nM}}& [A]^{\mathrm{M}} + [B^{+}\underline{C} + B^-\overline{C}]^{\mathrm{M}} \end{smallmatrix}\right]\left[\begin{smallmatrix} x \\ \widehat{x} \end{smallmatrix}\right] \nonumber \\ & + \left[\begin{smallmatrix} [D]^+ & [D]^- \\ [D]^- & [D]^+ \end{smallmatrix}\right]\left[\begin{smallmatrix} w \\ \widehat{w} \end{smallmatrix}\right] + \left[\begin{smallmatrix} [B]^+\underline{d} + [B]^-\overline{d} \\  [B]^-\underline{d} + [B]^+\overline{d} \end{smallmatrix}\right] \nonumber\\ & = \OF^{\mathrm{con}}(x,\widehat{x},w,\widehat{w}),
\end{align}}
and the embedding system associated to the interaction-based inclusion function has the following form:
{\color{black}\begin{align}\label{eq:embedding-linear-jac}
    \tfrac{d}{dt}\left[\begin{smallmatrix} x \\ \widehat{x} \end{smallmatrix}\right] &= \left[\begin{smallmatrix} [A + [B]^{+}\underline{C} + [B]^-\overline{C}]^{\mathrm{M}} & [A + [B]^{+}\underline{C} + [B]^-\overline{C}]^{\mathrm{nM}} \\  [A + [B]^{+}\underline{C} + [B]^-\overline{C}]^{\mathrm{nM}}& [A + [B]^{+}\underline{C} + [B]^-\overline{C}]^{\mathrm{M}} \end{smallmatrix}\right]\left[\begin{smallmatrix} x \\ \widehat{x} \end{smallmatrix}\right] \nonumber \\ & + \left[\begin{smallmatrix} [D]^+ & [D]^- \\ [D]^- & [D]^+ \end{smallmatrix}\right]\left[\begin{smallmatrix} w \\ \widehat{w} \end{smallmatrix}\right] + \left[\begin{smallmatrix} [B]^+\underline{d} + [B]^-\overline{d} \\  [B]^-\underline{d} + [B]^+\overline{d} \end{smallmatrix}\right] \nonumber\\&= \OF^{\mathrm{act}}(x,\widehat{x},w,\widehat{w}).
\end{align}}
It is easy to check that both embedding systems~\eqref{eq:embedding-linear-int} and~\eqref{eq:embedding-linear-jac} are continuous-time monotone. Moreover, we have
{\color{black}\begin{align*}
 \left[A + [B]^{+}\underline{C} + [B]^-\overline{C}\right]^{\mathrm{M}} &\le [A]^{\mathrm{M}} + \left[[B]^{+}\underline{C} + [B]^-\overline{C}\right]^{\mathrm{M}},\\
 \left[A + [B]^{+}\underline{C} + [B]^-\overline{C}\right]^{\mathrm{nM}} &\ge [A]^{\mathrm{nM}} + \left[[B]^{+}\underline{C} + [B]^-\overline{C}\right]^{\mathrm{nM}}.
\end{align*}}
Therefore, for every $x\le \widehat{x}$, we have 
{\color{black}\begin{multline*}
 \left[A + [B]^{+}\underline{C} + [B]^-\overline{C}\right]^{\mathrm{M}} x + \left[A + [B]^{+}\underline{C} + [B]^-\overline{C}\right]^{\mathrm{nM}} \widehat{x} \\ \ge    ([A]^{\mathrm{M}} + [[B]^{+}\underline{C} + [B]^-\overline{C}]^{\mathrm{M}}) x \\ + ([A]^{\mathrm{nM}} + [[B]^{+}\underline{C} + [B]^-\overline{C}]^{\mathrm{nM}})\widehat{x}.
\end{multline*}}
This implies that, $\OF^{\mathrm{con}}(x,\widehat{x},w,\widehat{w})\le_{\mathrm{SE}} \OF^{\mathrm{act}}(x,\widehat{x},w,\widehat{w})$, for every  $ x \le \widehat{x}$ and every $w \le \widehat{w}$. Now, we can use the classical monotone comparison Lemma~\cite[Theorem 3.8.1]{ANM-LH-DL:08-new} to obtain $\left[\begin{smallmatrix} \underline{x}^{\mathrm{con}}(t) \\ \overline{x}^{\mathrm{con}}(t) \end{smallmatrix}\right] \le_{\mathrm{SE}} \left[\begin{smallmatrix} \underline{x}^{\mathrm{act}}(t) \\ \overline{x}^{\mathrm{act}}(t) \end{smallmatrix}\right]$ for every $t\in \real_{\ge 0}$.
As a result, we get $[\underline{x}^{\mathrm{act}}(t) ,\overline{x}^{\mathrm{act}}(t)]\subseteq [\underline{x}^{\mathrm{con}}(t) ,\overline{x}^{\mathrm{con}}(t)]$, for every $t\in \real_{\ge 0}$.
\end{proof}

 \begin{remark}
\change{The interval reachability framework developed in Sections~\ref{sec:IA-d} and~\ref{sec:lbs} has the following unique features.}
\begin{enumerate}
    \item \textit{(Computational efficiency)}: from a computational perspective, the framework presented in Theorem~\ref{thm:cl-comp-c} consists of two main ingredients: (i) the neural network verification algorithm for computing the inclusion function for the neural network (either in the form~\eqref{eq:NNverifier} or in the form~\eqref{eq:crownif}), and (ii) integration of a single trajectory of the embedding system~\eqref{eq:ct-embedingsystem}. Part (i) includes querying the neural network verification once per integration step and its runtime depends on the computational complexity of the associated algorithm. The runtime of part (ii) depends on the integration method and the form of the open-loop decomposition function $\OF$. \change{Since our approach only requires integration of one trajectory of the embedding system, it is generally computationally efficient. In Section~\ref{subsec:platooning}, we use a vehicle platooning example to demonstrate our approach's scalability to large dimensions where existing functional approaches suffer. }

    \item  \textit{(Nonlinearity of the dynamics):} For linear open-loop systems, a straightforward computation shows that Theorem~\ref{thm:cl-comp-c} with $c=\mathrm{act}$ coincides with~\cite[Lemma IV.3]{ME-GH-CS-JPH:21} with $p=\infty$ and $q=1$. \change{Our interaction-aware inclusion function generalizes this efficient approach to locally Lipschitz nonlinear systems.} 

    \change{\item\textit{(Flexibility of neural network verifiers):} The functional approaches from POLAR~\cite{CH-JF-XC-WL-QZ:22} and JuliaReach~\cite{SB-etal:19} are designed with specific neural network bounding algorithms. On the other hand, our framework can substitute arbitrary neural network interval verifiers for the interconnection-based approach, and functional linear bounds for the interaction-based approach. }
    




\end{enumerate}
\end{remark}

\section{Numerical Experiments}

We demonstrate the effectiveness of our reachability analysis, implemented as an open-source Python toolbox called \texttt{ReachMM}\footnote{\url{https://github.com/gtfactslab/ReachMM_TAC2023}}, using numerical experiments\footnote{All the experiments are performed using an AMD Ryzen 5 5600X CPU and 32 GB of RAM. Unless otherwise specified, runtimes are averaged over 100 runs, with mean and standard deviations reported.} for (i) a nonlinear bicycle model, (ii) the linear double integrator, (iii) several existing benchmarks in the literature, and (iv) a platoon of double integrator vehicles moving in a plane. %
We first briefly mention several algorithmic techniques that are implemented in the \texttt{ReachMM} toolbox and are used in several examples below to broaden the applicability and to improve the accuracy of our reachable set over-approximations. 

\paragraph*{Partitioning} %
The computational efficiency of our method makes it well-suited for partitioning to reduce compounding over-approximation error caused by the wrapping effect~\cite[Section 2.2.4]{LJ-MK-OD-EW:01}.  In Section~\ref{sec:DI}, we apply the partitioning methods proposed in~\cite{SJ-AH-SC:23,AH-SJ-SC:23} to improve  accuracy. %

\paragraph*{Redundant Variable Refinement} We accommodate a technique introduced in \cite{KS-JKS:17} to refine interval over-approximations of reachable sets using redundant state variables of the form 
$y= Ax + b$ for some $A\in\R^{m\times n}$ and $b\in\R^m$.
By augmenting the dynamical system~\eqref{eq:closedloop-c} with the additional dynamics $\dot{y}=A\dot{x}$, %
we create a new dynamical system with the new states $z = \begin{bsmallmatrix} x \\ y \end{bsmallmatrix} \in\R^{n + m}$ and the affine constraints $Mz = b$ where $M = \begin{bmatrix}
    -A & I_m 
\end{bmatrix} \in \real^{m\times (n+m)}$. Following~\cite{KS-JKS:17}, we use the interval bounds on $z$ to improve the accuracy of the interval bounds on $x$. We apply this technique in Section~\ref{sec:ARCH} for the TORA benchmark.   

\paragraph*{Zero-order hold control} In some applications, we wish to explicitly account for the practical restriction that the control $u(t)=N(x(t))$ cannot be continuously updated 
and, instead, the control must be implemented via, \emph{e.g.}, a zero-order hold strategy between sampling instances. %
It is straightforward to extend the interconnection-based approach to be able to capture the zero-order hold policy~\cite{AH-SJ-SC:23}. %
Adopting the zero-order hold policy in the interaction-based approach requires over-bounding the error and we omit this calculation for the sake of brevity. We refer to our code for details of this adaptation. We use this framework in Section~\ref{sec:ARCH}.

\paragraph*{Implementation} Our current implementation for all examples is in standard Python. For the interconnection-based approach, the inclusion function for the open-loop system is the natural inclusion functions computed using our package \texttt{npinterval}~\cite{AH-SJ-SC:23b}. For neural networks, the affine inclusion functions are obtained from CROWN~\cite{HZ-etal:18} and computed using~\texttt{autoLiRPA}~\cite{xu2020automatic}. Further performance improvements of \texttt{ReachMM} are likely possible by, \emph{e.g.}, implementing as compiled code, and are subject of ongoing and future work.

\subsection{Nonlinear bicycle model}

In the first experiment, we compare %
the interconnection-based and the interaction-based approach. %
Consider the nonlinear dynamics of a bicycle adopted from~\cite{PP-FA-BdAN-AdlF:17}:
\begin{xalignat}{2} \label{eq:vehicle}
     \dot{p_x} &= v \cos(\phi + \beta(u_2)) & \dot{\phi} &=\frac{v}{\ell_r}\sin(\beta(u_2))\nonumber\\
     \dot{p_y} &= v \sin(\phi + \beta(u_2)) & \dot{v} &= u_1
\end{xalignat} 
where $[p_x,p_y]^{\top}\in \real^2$ is the displacement of the center of mass in the $x-y$ plane, $\phi \in [-\pi,\pi)$ is the heading angle in the plane, and $v\in \real_{\geq 0}$ is the speed of the center of mass. Control input $u_1$ is the applied force, input $u_2$ is the angle of the front wheel, and $\beta(u_2) = \mathrm{arctan}\left(\frac{\ell_f}{\ell_f+\ell_r}\tan(u_2)\right)$ is the slip slide angle where the parameter $\ell_f$ ($\ell_r$) is the distance between the center of mass and the front (rear) wheel. In this example, for the sake of simplicity, we set $\ell_f=\ell_r=1$. Let $x=[p_x,p_y,\phi,v]^\top$ and $u=[u_1,u_2]^\top$. We apply the neural network controller ($4\times 100\times 100\times 2$, ReLU activations) defined in \cite{SJ-AH-SC:23}, which was trained to mimic an MPC that stabilizes the vehicle to the origin while avoiding a circular obstacle centered at $(4,4)$ with a radius of $2$. The dynamics are simulated using Euler integration with a step size of $0.125$. In Figure~\ref{fig:bicycle}, we compare the accuracy and runtime of different reachability approaches for the bicycle model. 

\paragraph*{Discussion} Figure~\ref{fig:bicycle} shows that the naive input-output approach is the fastest approach with low accuracy of over-approximation. Using the interconnection-based approach with $\OF^{\mathrm{con}}$ improves the accuracy with a slight increase in the runtime. In comparison, the interaction-based approach with the Jacobian-based cornered inclusion function $\OF^{\mathrm{act}}$ is notably slower %
due to the computation of Jacobian bounds~\eqref{eq:boundsonf} in this approach. Over short time horizons, the interaction-based approach is more accurate,  %
however, its accuracy deteriorates for longer horizons, which can be attributed to the decrease in accuracy of the Jacobian bounds~\eqref{eq:boundsonf}. As has been shown in Proposition~\ref{prop:intersect}, the accuracy of the intersection $\OF^{\mathrm{con}}\wedge\OF^{\mathrm{act}}$ is better than both $\OF^{\mathrm{con}}$ and $\OF^{\mathrm{act}}$. Finally, using the mixed Jacobian-based cornered inclusion function (cf. Remark~\ref{rem:important}\ref{p2:important}) significantly improves the accuracy of the interaction-based approach with little effect on its runtime. 

\begin{figure}
    \centering
  \includegraphics[width=1\columnwidth]{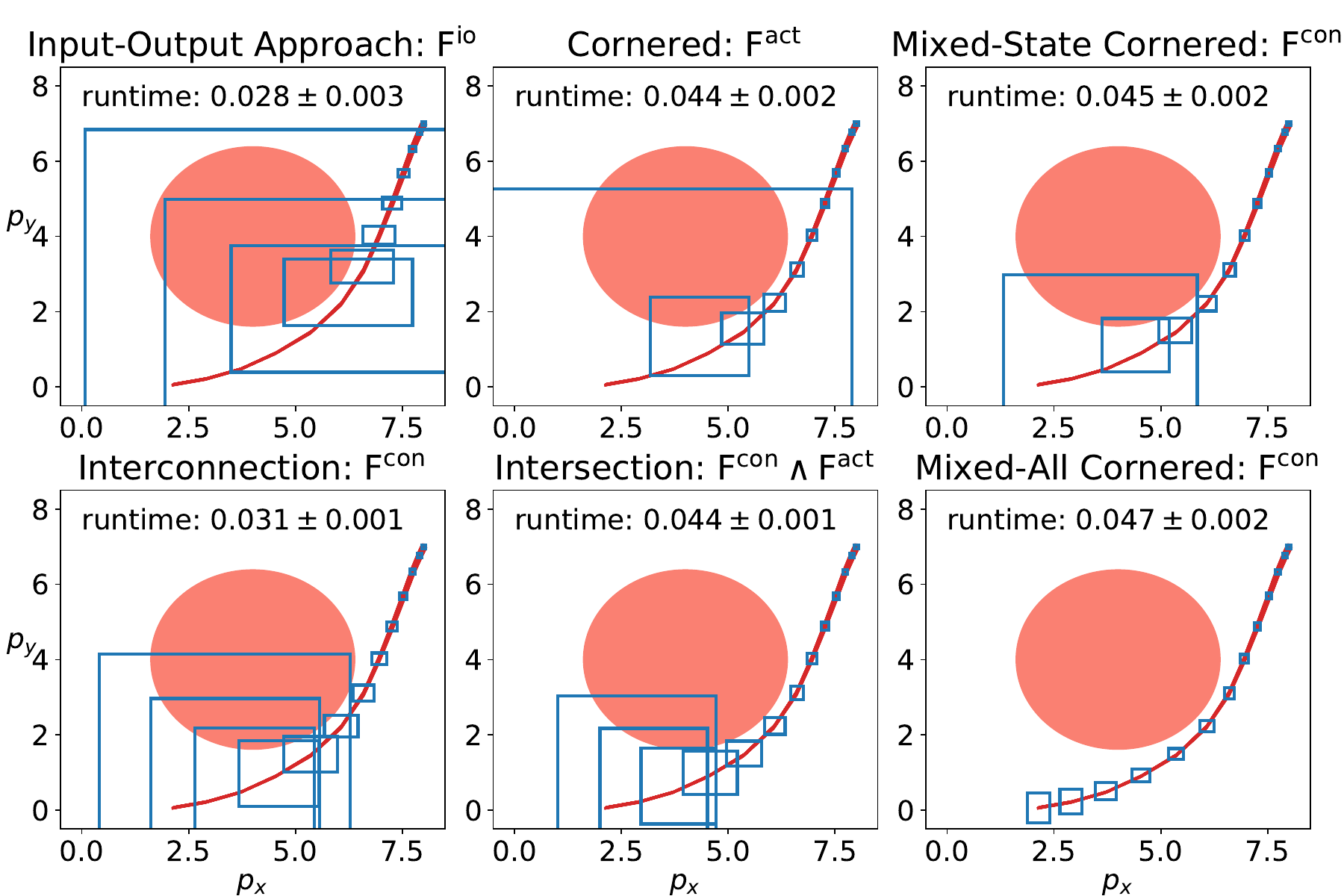}
  \vspace{-0.5cm}
    \caption{Accuracy and runtime comparison between different reachability approaches for the bicycle model~\eqref{eq:vehicle} from the initial set $[7.95,8.05]\times[6.95,7.05]\times[-2\pi/3 - 0.01, -2\pi/3 + 0.01]\times[1.99,2.01]$ \change{for the period $t\in [0,1.5]$}: (a) a naive input-output approach combining the natural inclusion function for the open-loop dynamics and the affine inclusion function for the neural network, (b) the interconnection-based approach with the closed-loop inclusion function $\OF^{\mathrm{con}}$ defined in~\eqref{eq:int-inclusion} constructed from the natural inclusion function for the open loop dynamics and affine inclusion function for the neural network, (c) the interaction-based approach with Jacobian-based cornered inclusion function $\OF^{\mathrm{act}}$ defined in~\eqref{eq:jac-inclusion}, (d) the intersection of the interconnection-based inclusion and interaction-based approach $\OF^{\mathrm{con}}\wedge \OF^{\mathrm{act}}$, (e) the interaction-based approach with \emph{mixed states} Jacobian-based cornered inclusion function defined in~\eqref{eq:jac-inclusion} and Remark~\ref{rem:important}\ref{p2:important}, and (f) the interaction-based approach with \emph{mixed states and control} Jacobian-based cornered inclusion function defined ~\eqref{eq:jac-inclusion} and Remark~\ref{rem:important}\ref{p2:important}. %
    The {\color{tab:blue} blue} boxes are hyper-rectangular over-approximation of reachable sets, $100$ simulated trajectories of the system are shown in {\color{tab:red}red}\change{, and the {\color{salmon}salmon} colored region represents a circular obstacle.} } 
    \label{fig:bicycle}
\end{figure}

\subsection{Double Integrator Model}\label{sec:DI}

In the second experiment, we focus on reachability of linear open-loop systems with neural network controllers. We study the accuracy and efficiency of the interaction-based approach~\eqref{eq:linear-inclusion-jac} for the double integrator benchmark system %
\begin{align} \label{eq:doubleintegrator}
    x(t+1)=\begin{bmatrix} 1 & 1 \\ 0 & 1 \end{bmatrix} x(t) + \begin{bmatrix} 0.5 \\ 1 \end{bmatrix} u(t).
\end{align} 
For this example, we use the discrete-time version of our framework as derived in~\cite[Section VII.B]{AH-SJ-SC:23}. 
We apply the neural network controller ($2\times 10\times 5\times 1$, ReLU activations) defined in 
\cite{ME-GH-CS-JPH:21}. \change{We consider the interaction-based approach with the closed-loop inclusion function $\OF^{\mathrm{act}}$ defined in~\eqref{eq:linear-inclusion-jac}, with the contraction-guided adaptive partitioning algorithm in~\cite{AH-SJ-SC:23} to improve its accuracy.} Additionally, we compare our proposed ReachMM to state-of-the-art algorithms for linear discrete-time systems: ReachLP~\cite{ME-GH-CS-JPH:21} with uniform partitioning (ReachLP-Unif) and greedy simulation guided partitioning (ReachLP-GSG), and ReachLipBnB \cite{TE-SS-MF:22} (branch-and-bound using LipSDP \cite{MF-MM-GJP:22}). Each algorithm is run with two different sets of hyper-parameters, aiming to compare their performances across various regimes. The setup for ReachMM is $(\varepsilon,\,D_p,\,D_\text{N})$\footnote{\change{$D_p$ defines the depth of the partition tree, $D_\text{N}$ defines how the depth of the partitions on the neural network, and $\varepsilon$ describes the width at which to adaptively partition an interval. See~\cite{AH-SJ-SC:23} for the full details.}}; ReachLP-Unif is $\#$ initial partitions, ReachLP-GSG is $\#$ of total propogator calls, ReachLipBnB is $\varepsilon$. The results are \change{illustrated} in Figure~\ref{fig:doubleintegrator} and the runtime comparisons are \change{provided} in 
Table~\ref{tab:DI_table}.

\begin{figure}
    \centering
    \includegraphics[width=\columnwidth]{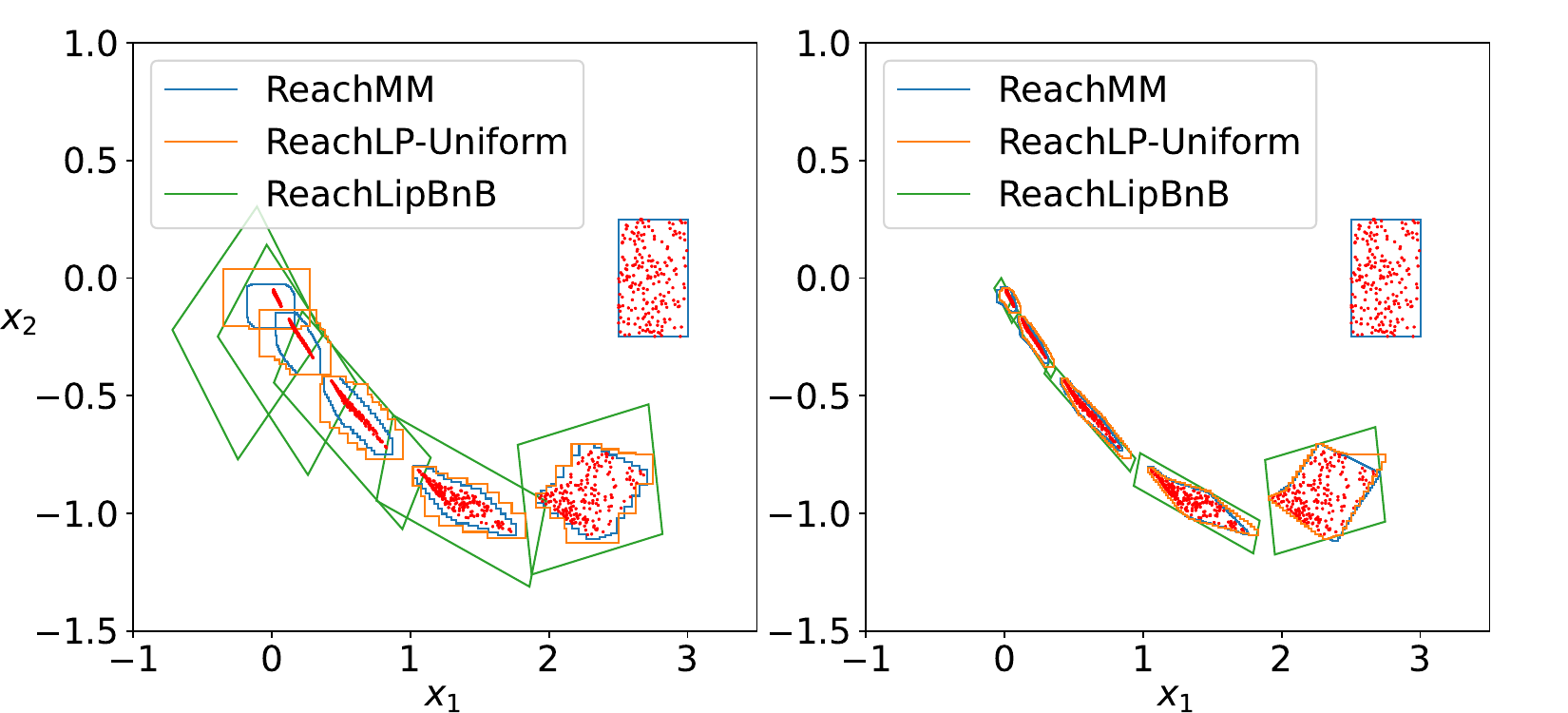}
    \vspace{-0.5cm}
    \caption{The over-approximated reachable sets of the closed-loop double integrator model~\eqref{eq:doubleintegrator} are compared for three different algorithms on two different runtime regimes, for the initial set $[2.5,3]\times[-0.25,0.25]$ and final time $T=5$. The experiment setup and performances are reported in Table~\ref{tab:DI_table}. 200 true trajectories are shown in red. The horizontal axis is $x_1$ and the vertical axis is $x_2$.}
    \label{fig:doubleintegrator}
\end{figure}

\begin{table}[h!]
    \centering
    \begin{tabular}{||c c c c||}
        \hline
        Method & Setup & Runtime (s) & Area \\
        \hline\hline
{\footnotesize ReachMM} & $(0.1$, $3$, $1)$  & $\mathbf{0.103\pm0.003}$ & $\mathbf{6.2\cdot10^{-2}}$ \\
 & $(0.05$, $6$, $2)$ & $\mathbf{1.762\pm0.026}$ & $\mathbf{9.9\cdot10^{-3}}$ \\
        \hline
        \multirow{2}{*}{\footnotesize ReachLP-Unif} 
        & 4 & $0.212 \pm 0.002$ & $1.5\cdot10^{-1}$ \\
        & 16 & $3.149 \pm 0.004$ & $1.0\cdot10^{-2}$ \\
        \hline
        \multirow{2}{*}{\footnotesize ReachLP-GSG} 
        &  55 & $0.913 \pm 0.031$ & $5.3\cdot10^{-1}$\\
        & 205 & $2.164 \pm 0.042$ & $8.8\cdot10^{-2}$\\
        \hline
        \multirow{2}{*}{\footnotesize ReachLipBnB} 
        & $0.1$ & $0.956 \pm 0.067$ & $5.4\cdot10^{-1}$ \\
        & $0.001$ & $3.681 \pm 0.100$ & $1.2\cdot10^{-2}$ \\
        \hline
    \end{tabular}
    \caption{Performance of Theorem~\ref{thm:jacobian-based} for the discrete-Time double integrator model.}%
    \label{tab:DI_table}
\end{table}

\paragraph*{Discussion} Figure~\ref{fig:doubleintegrator} and Table~\ref{tab:DI_table} show that, when the open-loop system is linear, our interaction-based approach (Corollary~\ref{thm:linear}\ref{p2:linear}) combined with a suitable partitioning scheme beats state-of-the-art approaches in both accuracy and runtime. 

\subsection{ARCH-COMP Benchmarks}\label{sec:ARCH}

In the third experiment, we analyze three of the benchmarks from ARCH-COMP22~\cite{ARCH22:ARCH_COMP22_Category_Report_Artificial}. 

\paragraph*{Adaptive Cruise Control (ACC)}

We consider the Adaptive Cruise Control benchmark %
from~\cite[Equation (1)]{ARCH22:ARCH_COMP22_Category_Report_Artificial}. The neural network controller ($5\times 20\times 20\times 20\times 20\times 20\times 1$, ReLU activations) with the input $(x_{\mathrm{lead}}- x_{\mathrm{ego}}, v_{\mathrm{lead}}- v_{\mathrm{ego}}, a_{\mathrm{lead}}- a_{\mathrm{ego}}, v_{\mathrm{set}}, T_{\mathrm{gap}})^{\top}\in \real^5$ is applied with a zero-order holding of $0.1$ seconds~\cite{ARCH22:ARCH_COMP22_Category_Report_Artificial}. Our goal is to verify that from the initial set $\calX_0=[90,110]\times[32,32.2]\times[0,0]\times[10,11]\times[30,30.2]\times[0,0]$, the  collision specification
\begin{align*}
    x_{\mathrm{lead}} - x_{\mathrm{ego}} \ge  D_{\mathrm{default}} + T_{\mathrm{gap}} x_{\mathrm{ego}}
\end{align*}
is never violated in the next $5$ seconds, where $D_{\mathrm{default}}=10$ and $T_{\mathrm{gap}} = 1.4s$. To verify the specification, we define $D_{\text{rel}} = x_{\mathrm{lead}} - x_{\mathrm{ego}}$ and $D_{\text{safe}} = D_{\mathrm{default}} + T_{\mathrm{gap}} x_{\mathrm{ego}}$ and ensure that $ D_{\text{rel}} - D_{\text{safe}} \ge 0$ for the given time horizon.

We consider the interconnection-based approach with the closed-loop inclusion function $\OF^{\mathrm{con}}$ defined in~\eqref{eq:int-inclusion} constructed from the natural inclusion function for open-loop system %
and an affine inclusion function for the neural network. %
We use this interconnection-based inclusion function and Euler integration with a step-size of $0.01$ to compute upper and lower bounds on the states of the system starting from the initial set $\mathcal{X}_0$. These bounds are then used to obtain upper and lower bounds on $D_{\text{rel}}$ and $D_{\text{safe}}$. The results are shown in Figure~\ref{fig:ACC} \change{and the comparison between the runtime of our approach with POLAR~\cite{CH-JF-XC-WL-QZ:22} and JuliaReach~\cite{SB-etal:19} is provided in Table~\ref{tab:ARCH_table}.}%

\begin{figure}
    \centering
    \includegraphics[width=0.7\columnwidth]{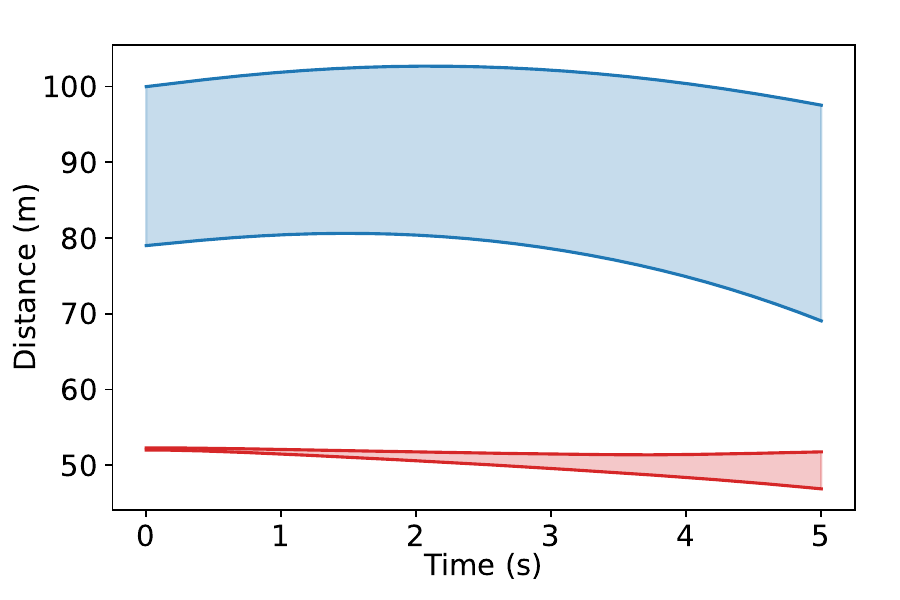}
    \caption{The interval estimates of the safety specifications for the Adaptive Cruise Control benchmark obtained from the inclusion function $\OF^{\mathrm{con}}$. The upper bound and lower bound on $D_\text{rel}$ are shown in {\color{tab:blue} blue} and the upper and lower bounds on $D_\text{safe}$ are shown in {\color{tab:red}red}. Our approach ensures that $D_\text{red}\geq D_\text{safe}$ for $t\in [0,5]$.}
    \label{fig:ACC}
\end{figure}

\paragraph*{2D Spacecraft Docking}

We consider the 2D spacecraft docking model \cite[Equation (13)]{ARCH22:ARCH_COMP22_Category_Report_Artificial}. The neural network controller $(4\times 4\times 256\times 256\times 4\times 2$, $\tanh$ activations) with a zero-order holding of $1$ second~\cite{ARCH22:ARCH_COMP22_Category_Report_Artificial}.  The goal is to verify that given an initial state set, the  safety specification
\begin{align} \label{eq:dockingspec}
   (\dot{s}^2_x + \dot{s}^2_x)^{\frac{1}{2}}\le 0.2 + 2n(s^2_x + s_y^2)^{\frac{1}{2}} 
\end{align}
is never violated in the next $40$ seconds. We define $H_L = (\dot{s}^2_x + \dot{s}^2_x)^{\frac{1}{2}}$ and $H_R = 0.2 + 2n(s^2_x + s_y^2)^{\frac{1}{2}}$ and our goal is to check that $H_L\le H_R$ for the next $40$ seconds. 

We consider the interconnection-based approach with the closed-loop inclusion function $\OF^{\mathrm{con}}$ defined in~\eqref{eq:int-inclusion} constructed from the natural inclusion function for the open-loop system %
and an affine inclusion function for the neural network. %
We use the closed-loop inclusion function $\OF^{\mathrm{con}}$ and Euler integration with a step-size $0.1$ to obtain upper and lower bounds on the states of the system starting from four different initial sets $\calX_0^i$ for $i\in\{0,1,2,3\}$ as shown in Figure~\ref{fig:docking}. These bounds are then used to provide upper and lower bounds on $H_L$ and $H_R$. The results are shown in Figure~\ref{fig:docking} \change{and the comparison between the runtime of our approach with POLAR~\cite{CH-JF-XC-WL-QZ:22} and JuliaReach~\cite{SB-etal:19} is provided in Table~\ref{tab:ARCH_table}.} %

\begin{figure}
    \centering
    \includegraphics[width=1\columnwidth]{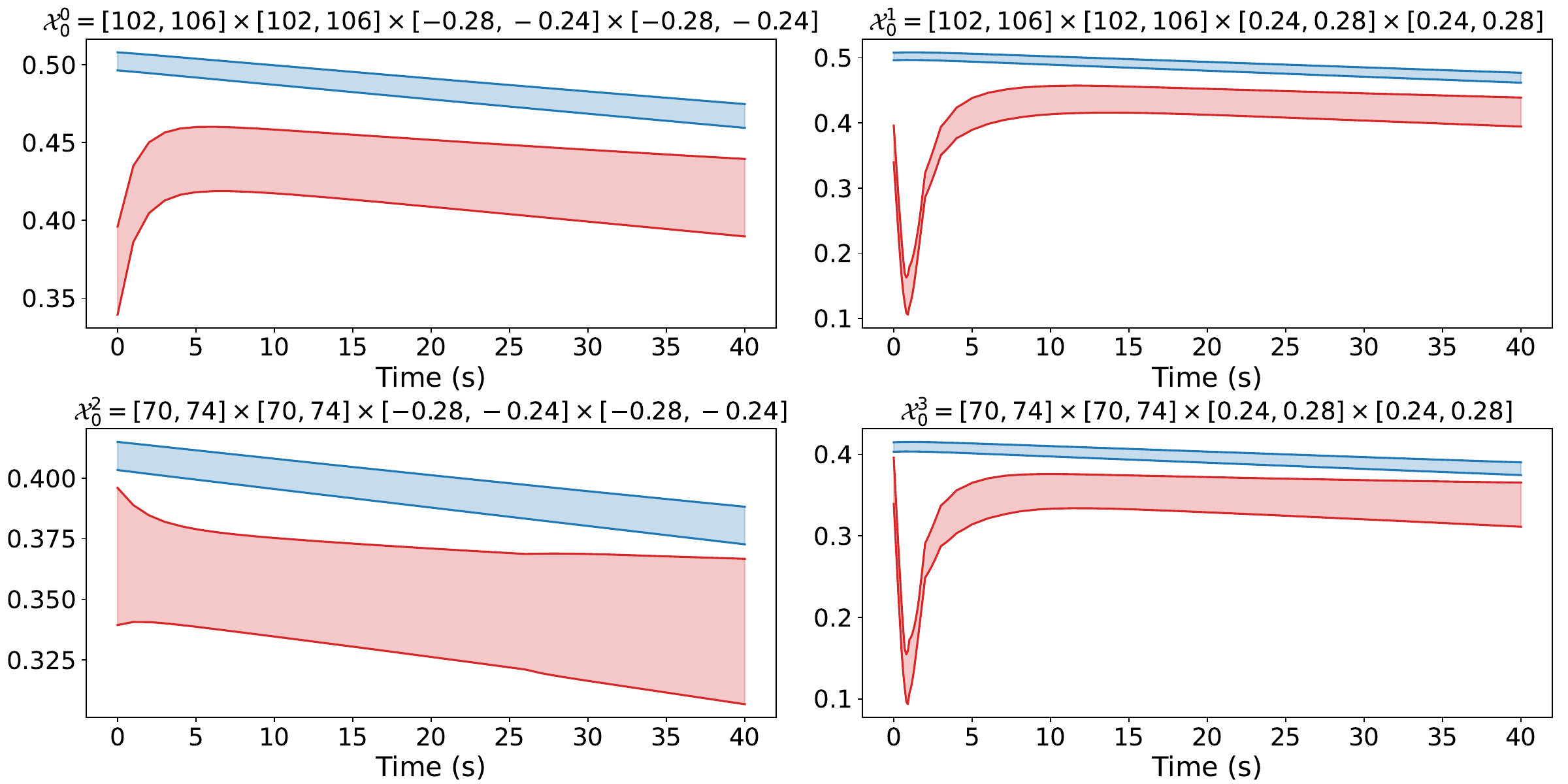}
    \vspace{-0.5cm}
    \caption{The interval estimates of the safety specification for the 2D Spacecraft Docking benchmark obtained from the interconnection-based inclusion function $\OF^{\mathrm{con}}$ are pictured for four different initial sets $\{\calX_0^i\}_{i=0}^3$. The upper and lower bounds on $H_{L}$ are shown in {\color{tab:red}red} and the upper and lower bounds on $H_R$ are shown in {\color{tab:blue}blue}. Our approach verifies that $H_L\le H_R$ for $t\in [0,40]$.}
    \label{fig:docking}
\end{figure}

\paragraph*{Sherlock-Benchmark-9 (TORA)}
We consider the translational oscillations by a rotational actuator (TORA) benchmark from~\cite[Equation (2)]{ARCH22:ARCH_COMP22_Category_Report_Artificial}.  The neural network controller ($4\times 20\times 20 \times 20 \times 1$ with $\mathrm{ReLU}$ activation functions and an additional layer with $\mathrm{tanh}$ activation function) is applied with a zero-order holding of $0.5$ seconds~\cite{ARCH22:ARCH_COMP22_Category_Report_Artificial}. The goal is to verify whether the system reaches the region $[-0.1,0.2]\times [-0.9,-0.6]\times \real^2$ from the initial set $\mathcal{X}_0=[-0.77,-0.75]\times [-0.45,-0.43]\times[0.51,0.54]\times[-0.3,-0.28]$ within  $5$ seconds.

We add the affine redundant variables $y_1= x_1+x_2$ and $y_2=x_1-x_2$ and analyze the augmented system in $z = (x_1,x_2,y_1,y_2)^{\top}$. 
We consider the interaction-based approach with the mixed Jacobian-based cornered inclusion function $\OF^{\mathrm{act}}$ as defined in~\eqref{eq:jac-inclusion} constructed for four corners $(z,u),(z,\widehat{u}),(\widehat{z},u),(\widehat{z},\widehat{u})$ (cf. Remark~\ref{rem:important}). %
We use the intersection of the two inclusion functions $\OF^{\mathrm{con}}\wedge \OF^{\mathrm{act}}$ with Euler integration step size of $0.005$ to provide over-approximations of reachable sets of the system starting from $\mathcal{X}_0$. The results are shown in Figure~\ref{fig:TORA} \change{and the comparison between the runtime of our approach with POLAR~\cite{CH-JF-XC-WL-QZ:22} and JuliaReach~\cite{SB-etal:19} is provided in Table~\ref{tab:ARCH_table}.} %

\begin{figure}
    \centering
    \includegraphics[width=\columnwidth]{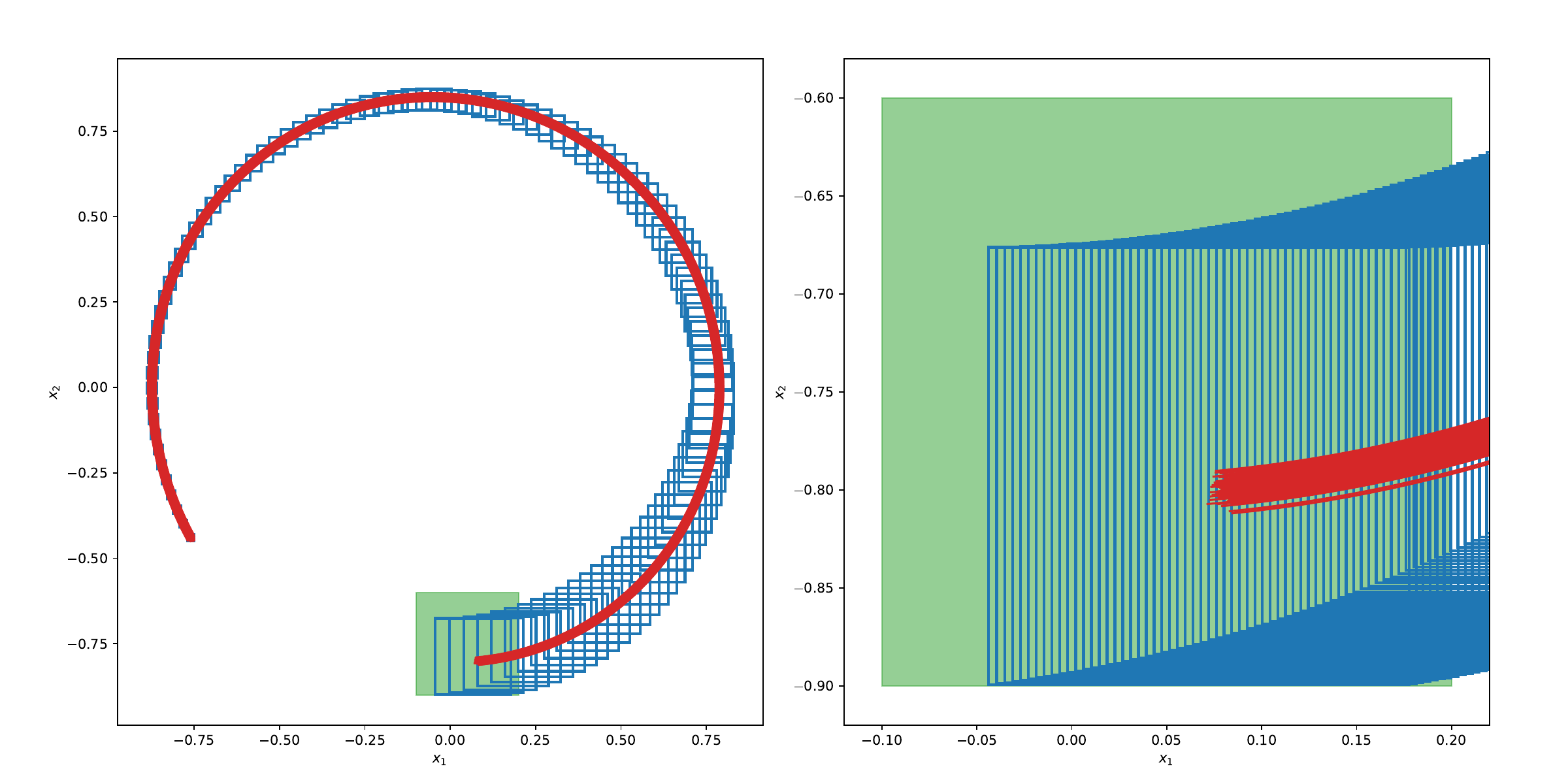}
    \vspace{-0.5cm}
    \caption{The reachable set for the TORA benchmark is shown on the $x_1$-$x_2$ plane. The goal set is pictured in {\color{tab:green}green}, and 100 true trajectories of the system are pictured in {\color{red}red}. \textbf{Left:} Every tenth over-approximation of the reachable set is pictured in {\color{tab:blue}blue}. \textbf{Right:} The plot is zoomed into the goal set, and all over-approximations of the reachable set are shown in {\color{tab:blue} blue}.
    \change{We numerically verify that the final reachable set is fully contained inside of the goal region.}
    }
    \label{fig:TORA}
\end{figure}

\begin{table}[h!]
    \centering
    \begin{tabular}{||c c c c||}
        \hline
        System & ReachMM & POLAR & JuliaReach \\
        \hline\hline
        ACC & $0.398\pm 0.010$ & $1.388 \pm 0.022$ & $0.255$ \\
        Docking & $0.243\pm 0.004$ & $43.696 \pm 0.141$ & N/A\tablefootnote{JuliaReach did not attempt this specification.} \\
        TORA & $5.586\pm0.029$ & $0.139\pm0.008$ & $0.690$ \\
        \hline
    \end{tabular}
    \caption{Summary of runtimes (s) from the ARCH-COMP benchmarks.}
    \label{tab:ARCH_table}
\end{table}

\paragraph*{Discussion} We showed that for verification of safety specification on a selected number of examples from ARCH-COMP~\cite{ARCH22:ARCH_COMP22_Category_Report_Artificial}, the runtime of our approach is comparable with the state-of-the-art methods POLAR~\cite{CH-JF-XC-WL-QZ:22} and JuliaReach~\cite{CS-MF-SG:22}. On the other benchmarks in~\cite{ARCH22:ARCH_COMP22_Category_Report_Artificial}, our method is unable to verify the desired specification \change{at this time, due to their large time horizons and excessive over-approximation buildup.} In the next section, \change{however,} we show that our interconnection-based and interaction-based methods enjoy better scalability compared to POLAR and JuliaReach.

\subsection{Vehicle Platooning} \label{subsec:platooning}

Finally, we investigate the scalability of our method. %
We consider a platoon of $N$ vehicles $\calV = \{\calV_j\}_{j=1}^N$, each with the two-dimensional double integrator dynamics
\begin{xalignat}{2} \label{eq:platoondyn}
    \dot{p}_x^j &= v_x^j, & \dot{v}_x^j &= \sigma(u_x^j) + w_x^j, \nonumber\\
    \dot{p}_y^j &= v_y^j, & \dot{v}_y^j &= \sigma(u_y^j) + w_y^j,
\end{xalignat} 
where $p^j = (p_x^j,p_y^j)\in\R^2$ is the displacement of the center of mass of $\calV_j$, $v^j = (v_x^j,v_y^j)\in\R^2$ is the velocity of the center of mass of $\calV_j$, $(u_x^j,u_y^j)\in\R^2$ are desired acceleration inputs limited by the softmax operator $\sigma(u)=u_{\text{lim}}\tanh(u/u_{\text{lim}})$ with $u_\text{lim}=5$, and $w_x^j,w_y^j\sim\calU([-0.001,0.001])$ are uniformly distributed disturbances. We consider a leader-follower structure for the system, where the first vehicle $\calV_1$ chooses its control $u=(u_x^1,u_y^1)$ as the output of a neural network ($4\times 100 \times 100 \times 2$, ReLU activations), and the rest of the platoon $\{\calV_j\}_{j=2}^N$ applies a PD tracking control input
\begin{align}
    u_{\bfd}^j = k_{p}\left(p_\bfd^{j-1} - p_\bfd^{j} - r\frac{v_\bfd^{j-1}}{\|v^{j-1}\|_2}\right) + k_{v}(v_\bfd^{j-1} - v_\bfd^j),
\end{align}
for each $\bfd\in\{x,y\}$ with $k_p=k_v=5$ and $r=0.5$. The neural network was trained using data from an offline MPC policy for the leader only ($N=1$). The offline policy minimized a quadratic cost stabilizing to the origin while avoiding a circular obstacle centered at $(4,4)$ with radius $2.25$, implemented as a hard constraint with $33\%$ padding and a slack variable.

\begin{table}[]
    \centering
    \begin{tabular}{||c|c|c|c|c|c||}
        \hline
        $N$ (units) & $\#$ of states & $\OF^{\mathrm{con}}$ & $\OF^{\mathrm{act}}$ & POLAR & JuliaReach \\
        \hline\hline
         1           & 4      & $0.297$ & $0.635$  & $9.352$ & $0.224$ \\
         4           & 16     & $0.399$ & $1.369$  & $14.182$ & $12.579$ \\
         9           & 36     & $0.574$ & $3.144$  & $43.428$ & $59.929$ \\
         20          & 80     & $0.999$ & $9.737$  & $316.337$ & $-$\tablefootnote{\label{foot:juliareach}No matter the choice of hyper-parameter, the package failed, either 1) citing a maximum number of validation steps reached or 2) timing out.} \\
         50          & 200    & $2.420$ & $46.426$ & $4256.435$ & $-$ \\
         \hline
    \end{tabular}
    \caption{The runtimes (s) for the platooning benchmark.}
    \label{tab:platooning}
    \vspace{-0.5cm}
\end{table}
We consider the interconnection-based approach with closed-loop inclusion function $\OF^{\mathrm{con}}$ defined in~\eqref{eq:int-inclusion} constructed from the natural inclusion function for open-loop system. %
We also consider the interaction-based approach with the closed-loop mixed Jacobian-based cornered inclusion function $\OF^{\mathrm{act}}$ as defined in~\eqref{eq:jac-inclusion} constructed for  four corners $(x,u),(u,\widehat{u}),(\widehat{x},u),(\widehat{x},\widehat{u})$ (cf. Remark~\ref{rem:important}).  
We perform reachability analysis for platoons of $N$ vehicles with $N\in \{1,4,9,20,50\}$. %
For $j$th vehicle in the platoon, we compute the reachable sets for the time frame $t\in [0,1.5]$ using the closed-loop inclusion functions $\OF^{\mathrm{con}}$ and $\OF^{\mathrm{act}}$ starting from the initial set $ ([7.225+0.5(j-1)\cos(\pi/3),7.275+0.5(j-1)\cos(\pi/3)] \times [5.725+0.5(j-1)\sin(\pi/3),5.775+0.5(j-1)\sin(\pi/3)]\times[-0.5,-0.5]\times[-5,-5])$. All integration is performed using \change{Euler integration} with a step size of $0.0125$. Figure~\ref{fig:platooning} visualizes the result for $N=9$. Runtimes  of our approach as well as POLAR~\cite{CH-JF-XC-WL-QZ:22} and JuliaReach~\cite{CS-MF-SG:22} for platoons of varying size are reported in Table~\ref{tab:platooning}. Note that the implementations in POLAR and JuliaReach omit the disturbances $w^j_{\mathbf{d}}$ and the softmax $\sigma$ in equation~\eqref{eq:platoondyn}.

%

\begin{figure}
    \includegraphics[width=0.495\columnwidth]{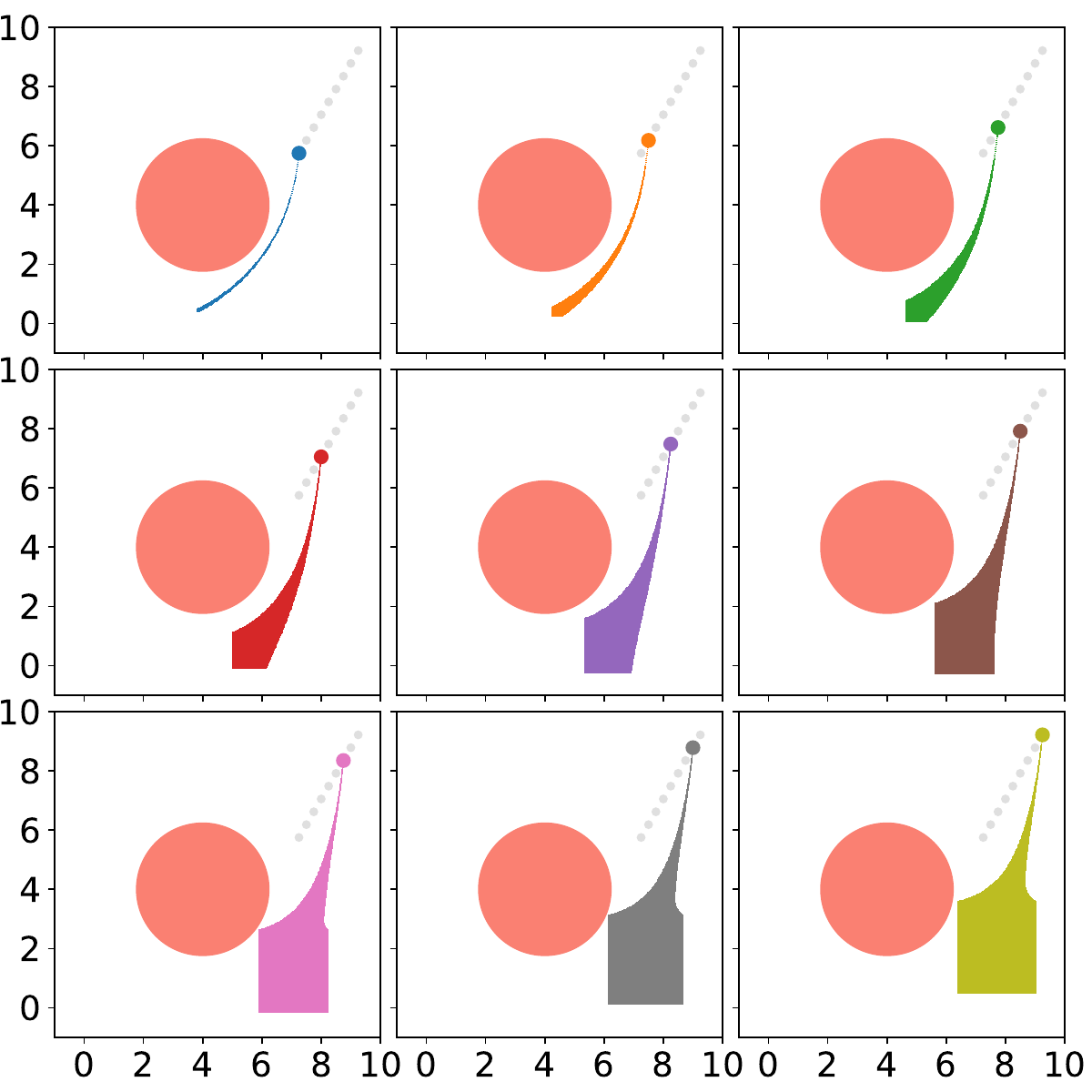}
    \includegraphics[width=0.495\columnwidth]{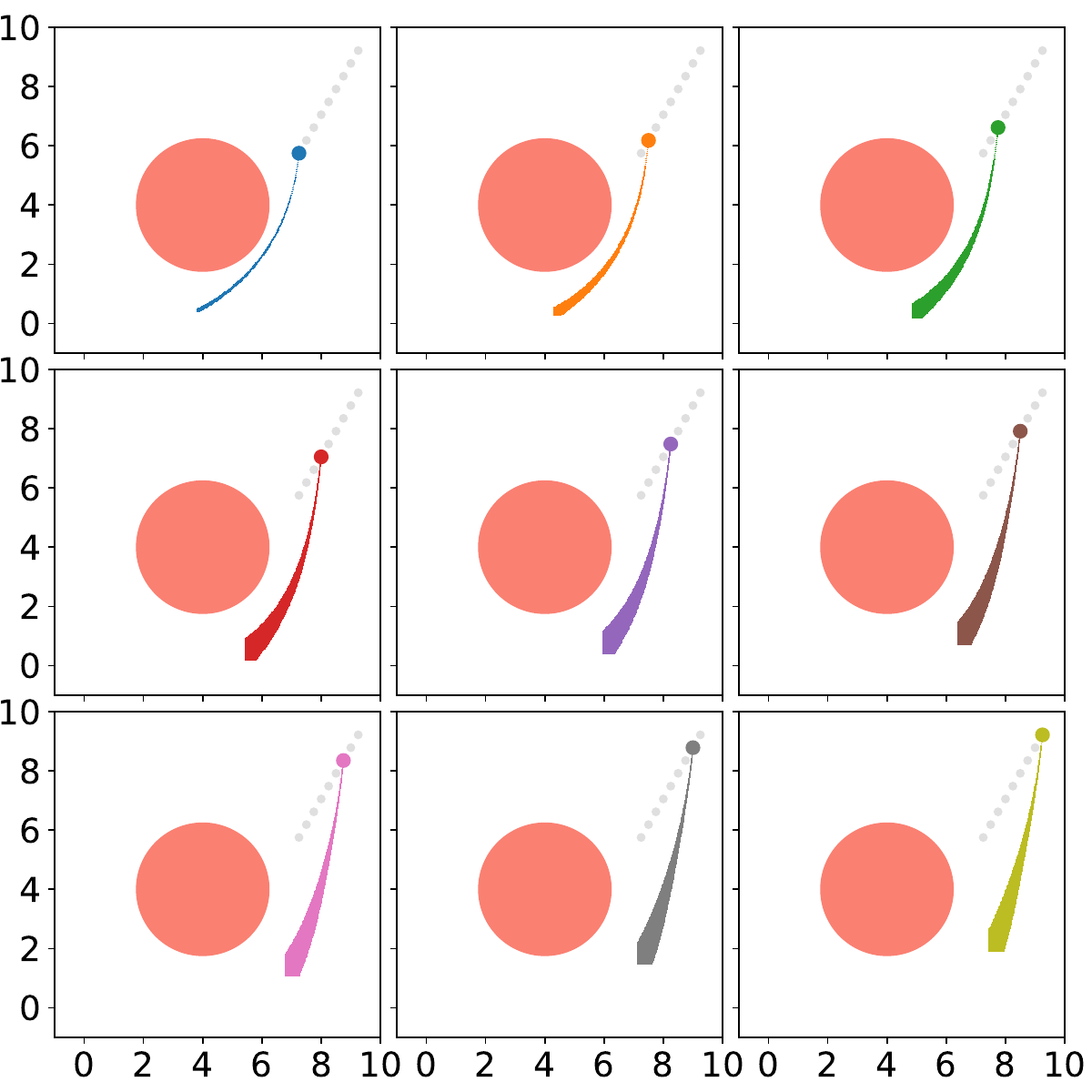}
    \vspace{-0.5cm}
    \caption{ \textbf{Left:} The over-approximation of the reachable sets computed using $\OF^{\mathrm{con}}$ for each individual unit in the platooning example with $N=9$ are shown on $p_x$-$p_y$ planes, with the circular obstacle in {\color{salmon}salmon}. 
    \textbf{Right:} The over-approximation of the reachable sets computed using $\OF^{\mathrm{act}}$ for each individual unit in the platooning example with $N=9$ are shown on $p_x$-$p_y$ planes, with the circular obstacle in {\color{salmon}salmon}. }
    \label{fig:platooning}
\end{figure}

\paragraph*{Discussion}
    This experiment demonstrates one of the key features of our interval analysis framework---its scalability to large-scale systems.  
    In general, Figure~\ref{fig:platooning} shows how $\OF^\text{act}$ outperforms  $\OF^{\text{con}}$ in accuracy of the reachable set over-approximation. However, Table~\ref{tab:platooning} shows how $\OF^\text{con}$ outperforms $\OF^\text{act}$ in terms of scalability with respect to state dimensions. Moreover, both $\OF^\text{con}$ and $\OF^\text{act}$ demonstrate better runtime scalability than POLAR and JuliaReach.

\section{Conclusions}

We present a framework based on interval analysis for safety verification of neural network controlled systems. 
The main idea is to embed the closed-loop system into a higher dimensional system whose over-approximation of reachable sets can be easily computed using its extreme trajectory. 
Using an inclusion function of the open-loop system and interval bounds for neural networks, we proposed an interconnection-based approach and an interaction-based approach for constructing the closed-loop embedding system. 
We show how these approaches can capture the interactions between the
nonlinear plant and the neural network controllers and we provide
several numerical experiments comparing them with the state-of-the-art
algorithms in the literature.

\section*{References}
\vspace{-0.5cm}
\bibliographystyle{IEEEtran}
\bibliography{SJ,SJ2}

\end{document}